\RequirePackage[T1]{fontenc}
\RequirePackage{fix-cm} % See https://ftp.jaist.ac.jp/pub/CTAN/macros/latex/base/fix-cm.pdf
\RequirePackage{mlmodern} % for a thicker version of the TeX standard font
\documentclass[11pt,reqno,letterpaper]{amsart}
\usepackage[latin1]{inputenc}
\usepackage[english,activeacute]{babel}
\usepackage{mathtools}
\usepackage{amsmath,amssymb,amsthm}
\usepackage[colorlinks=true,citecolor=black,linkcolor=black,urlcolor=blue]{hyperref}
\usepackage{url}
\usepackage{pgfplots}
\usepackage{xcolor}
\usepackage{subcaption}
\pgfplotsset{compat=1.17}
\usepackage{cleveref}

\newtheorem{thm}{Theorem}[section]
\newtheorem{cor}[thm]{Corollary}
\newtheorem{pro}[thm]{Proposition}
\newtheorem{lem}[thm]{Lemma}

\theoremstyle{definition}
	\newtheorem{defi}[thm]{Definition}
	\newtheorem{exa}[thm]{Example}
	
	\newtheorem{rem}[thm]{Remark}

\numberwithin{equation}{section}

\DeclareMathOperator{\dis}{d}

\DeclareMathOperator{\ev}{ev}

\DeclareMathOperator{\tr}{tr}
\DeclareMathOperator{\trs}{trs}

\DeclareMathOperator{\card}{card}
\DeclareMathOperator{\spn}{span}
\DeclareMathOperator{\BCH}{BCH}

\newcommand{\F}{\mathbb{F}}

\newcommand{\Z}{\mathbb{Z}}

\DeclarePairedDelimiter\bra{\langle}{\rvert}
\DeclarePairedDelimiter\ket{\lvert}{\rangle}

\title[Entanglement assisted quantum $(r,\delta)$-locally recoverable codes]{Entanglement assisted quantum $(r,\delta)$-locally recoverable codes}

\author{C. Galindo, F. Hernando, H. Mart\'in-Cruz and R. Matsumoto}
\curraddr{\texttt{Carlos Galindo and Fernando Hernando:} Instituto
	Universitario de Matem\'aticas y Aplicaciones de Castell\'on and
	Departamento de Matem\'aticas, Universitat Jaume I, Campus de Riu
	Sec. 12071 Castell\'{o}, Spain\\
	\texttt{Helena Mart\'{\i}n-Cruz:} Departamento de Matem\'aticas, Universidad de Ja\'en, Campus Las Lagunillas 23071 Ja\'en, Spain\\
	\texttt{Ryutaroh Matsumoto:} Department of Information and Communications Engineering, Institute of Science Tokyo, Japan.}

\email{{\rm Galindo: galindo@uji.es; {\rm Hernando:} carrillf@uji.es; {\rm Mart\'{\i}n-Cruz:} hmartin@ujaen.es; {\rm Matsumoto:} ryutaroh@ict.e.titech.ac.jp}}

\urladdr{{\rm Galindo: 0000-0002-3908-4462; {\rm Hernando:} 0000-0002-9758-2152; {\rm Mart\'{\i}n-Cruz:} 0000-0002-6379-6902; {\rm Matsumoto:} 0000-0002-5085-8879}\color{black}}

\subjclass[2020]{81P73; 94B05; 94B35; 94B65; 15A63; 14G50}
\keywords{Classical-quantum equivalence of local recovery, entanglement-assisted quantum error-correcting codes, Singleton-like bound}
\thanks{{\it Acknowledgements:} The first three authors were partially funded by MICIU/AEI/10.13039/501100011033 and by ``ERDF, UE'' (grant PID2022-138906NB-C22). The fourth author was partially supported by ``Japan Society for the Promotion of Science'' (grant 23K10980).}

\begin{document}

\begin{abstract}
Quantum $(r,\delta)$-locally recoverable codes are quantum error-correcting codes capable of correcting $\delta-1$ qudit erasures within one subset of qudits of cardinality at most $r+\delta-1$. In this paper, we introduce the more general framework of entanglement-assisted quantum $(r,\delta)$-locally recoverable codes, assuming that the local recovery operation is assisted by receiver-held qudits that remain unaffected by erasures.

We establish necessary and sufficient conditions for these codes to satisfy this property. For codes derived from Hermitian or Euclidean constructions, we establish connections between entanglement-assisted quantum and classical notions of $(r,\delta)$-local recoverability, and derive a Singleton-like bound. Furthermore, we construct optimal pure entan\-gle\-ment-assisted quantum $(r,\delta)$-locally recoverable codes from several families of classical codes, including bivariate $J$-affine variety codes, BCH codes, and homothetic-BCH codes.

\end{abstract}

\maketitle

\section{Introduction}

Recent evidence of quantum supremacy \cite{Aru, BallP, Zhong, LiuY} highlights the potential of quantum information processing across numerous practical applications. Protecting quantum states from noise and decoherence is essential to ensure the reliability of quantum devices. Although the no-cloning theorem dictates that quantum information cannot be cloned \cite{Dieks, Woot}, quantum error correction remains viable \cite{ShorS, 95kkk}, and \textit{quantum error-correcting codes} (QECCs) are designed for this purpose.

The initial work on QECCs was restricted to the binary case \cite{20kkk, Gottesman, Calder2, Calderbank}, but the theory was later extended to non-binary systems. Since then, a wide variety of algebraic methods have been developed to construct non-binary quantum codes (see, for instance, \cite{BE, AK, Ketkar, Aly, XingC, Lag2, gahe, Anderson}). Given a prime power $q$, a $q$-ary QECC of length $n$ is a $K$-dimensional subspace of the Hilbert space $\mathbb{C}^{q^n} \cong (\mathbb{C}^q)^{\otimes n}$. The parameters of such a QECC are denoted by $((n,K,d))_q$, where the minimum distance $d$ represents the minimum weight of an undetectable error operator acting on the system.

\textit{(Quantum) stabilizer codes} constitute an important class of QECCs because they are associated with classical additive codes. Specifically, an $((n,K,d))_q$ stabilizer code exists if and only if there is an additive classical code $C \subseteq \F_q^{2n}$ that is self-orthogonal under the trace-symplectic form, where $\F_q$ denotes the $q$-ary finite field (see \Cref{thm:equiv} for more details). Algebraic constructions can be derived using Hermitian or Euclidean inner products over $\F_{q^2}^n$ or $\F_q^n$, respectively.

The restrictive self-orthogonality condition mentioned above (or dual-containment when considering the corresponding dual code) prevents many standard classical codes from being converted into quantum stabilizer codes. To overcome this limitation, Brun et al. \cite{Brun} introduced the framework of entanglement-assisted quantum error correction (EAQEC) by pre-sharing maximally entangled states between the sender and receiver. This approach simplifies quantum error-correcting theory and enhances transmission capacity. Moreover, it enables the construction of \textit{entanglement-assisted QECCs (EAQECCs)} from arbitrary linear codes via symplectic forms, or Hermitian or Euclidean inner products. The parameters of an EAQECC are denoted $[[n,k,d;c]]_q$, where $c$ represents the number of required maximally entangled pairs. Formulas for these parameters in the non-binary setting are given in \cite{galindo19} and recalled in \Cref{thm:csh}.

%\medskip

Many major technology companies store data in large-scale cloud storage and distributed computing systems, where server and node failures are routine challenges. To efficiently restore lost data without accessing the entire system, classical \textit{$r$-locally recoverable codes} (\textit{$r$-LRCs} or \textit{LRCs with locality $r$}) were introduced by Gopalan et al. \cite{GHSY2012}. These codes enable the recovery of a single erasure by accessing at most $r$ other symbol positions. Recent studies on $r$-LRCs include \cite{TPD2016, BTV2017, LMC2018, Mi2018, LXY2019, J2019, LMT2020, SVV2021, edgar}. Subsequently, Prakash et al. \cite{PKLK2012} introduced \textit{$(r,\delta)$-locally recoverable codes} (\textit{$(r,\delta)$-LRCs} or \textit{LRCs with locality $(r,\delta)$}) to handle more realistic scenarios involving multiple simultaneous node failures. These codes can repair up to $\delta - 1$ erasures by accessing at most $r + \delta - 1$ positions, including the erased ones. Recent works in this direction include \cite{SDYL2014, CXHF2018, LMX2019, LXY2019, SZW2019, J2019, CFXF2019, Z2020, FF2020, QZF2021, KWG2021, Cai, GFMC}. Both $r$-LRCs and $(r,\delta)$-LRCs obey Singleton-like bounds on their parameters and locality (see \Cref{Singleform} for the $(r,\delta)$-LRC case). Codes attaining these bounds are termed \textit{optimal}, and many of the aforementioned references construct families of optimal LRCs.

Driven by rapid advancements in quantum computing and the growing scale of quantum information processing, \textit{quantum LRCs} have emerged as the natural quantum counterpart to classical LRCs. Furthermore, substantial industry investment in quantum technology makes quantum data storage an increasingly viable future paradigm. A foundational step in this direction was taken by Golowich and Guruswami \cite{Golowich,Golowich2}, who introduced \textit{quantum $r$-locally recoverable codes} (\textit{quantum $r$-LRCs}) to address single-qudit erasures. Subsequent developments on these codes include \cite{Sharma, luo25, li25, li26}.

The quantum analogue of classical local recovery under multiple erasures was introduced in \cite{qlrc24} and is recalled in \Cref{sub:cqlrcs}. Conceptually, a QECC $Q$ of length $n$ is $(I,K)$-locally recoverable if the action of the completely depolarizing channel on the qudits in $I \subsetneq K$ can be reversed by inspecting the remaining qudits in $K$. A code $Q$ is then defined as a quantum $(r,\delta)$-LRC if every index $i \in \{1,\ldots,n\}$ belongs to a set $K$ of cardinality at most $r + \delta - 1$ such that $Q$ is $(I,K)$-locally recoverable for every subset $I \subset K$ of cardinality $\delta - 1$. Furthermore, the authors of \cite{qlrc24} provide a necessary and sufficient condition for a quantum stabilizer code to be $(r,\delta)$-locally recoverable. This condition depends solely on the puncturing and shortening at suitable coordinate sets of both the underlying symplectic self-orthogonal code and its symplectic dual. Moreover, when using Hermitian or Euclidean inner products, they establish an equivalence between the classical and quantum notions of $(r,\delta)$-local recoverability under a mild additional condition.

Various parameter bounds for quantum LRCs have been established. The first bound for quantum $r$-LRCs was derived by Golowich and Guruswami \cite[Theorem 36]{Golowich} (note that this bound does not appear in the conference version), with subsequent refinements provided in \cite{luo25,li25,li26}, though some of these apply exclusively to pure codes. A Singleton-like bound for pure quantum $(r,\delta)$-LRCs derived from Hermitian or Euclidean inner products was introduced in \cite{qlrc24}. More recently, \cite{impureqlrcs} presented a family of impure quantum codes that exceed the performance bounds for pure codes in quantum local recovery.

%\medskip

A formal framework unifying entanglement assistance with $(r,\delta)$-local recoverability remained unexplored. The special case of recovery with locality $r$ under a single erasure ($\delta=2$) was introduced only very recently, and was restricted to CSS codes \cite{EAQLR-r}. In this paper, we bridge this gap for $(r,\delta)$-local recoverability by introducing \textit{entanglement-assisted quantum
$(r,\delta)$-locally recoverable codes} (\textit{EA $(r,\delta)$-QLRCs});
see \Cref{def:eaqlrc}. This framework relies on the notion of \textit{EA
quantum $(I,K)$-local recoverability} (\Cref{def:eaIKlrc}) and generalizes
the definition of standard quantum $(r,\delta)$-LRCs given in
\cite{qlrc24}. Notably, the quantum operation for local recovery is
assisted by $c$ qudits held by the receiver, which are assumed to be
unaffected by erasures (see \Cref{def:eaIKlrc} and \Cref{rem:assistedrdelta}).

Concurrently with the completion of this article, a general framework for $(r,\delta)$-local recoverability with availability $t$ was introduced in \cite{EAQLR-gret}. The definition of $(r,\delta)$-local recoverability with availability $t$ includes our forthcoming Definition \ref{def:eaqlrc} as a special case, as it applies to any quantum code possessing a unitary encoder. Furthermore, \cite{EAQLR-gret} proposes sufficient conditions in terms of the underlying linear codes for the $(r,\delta)$-locality with availability $t$ of a quantum stabilizer code constructed via the Euclidean inner product using two (possibly different) linear codes. These sufficient conditions can be immediately deduced from our forthcoming Theorem \ref{thm:1}. The authors of \cite{EAQLR-gret} proposed several random and non-random constructions of EA-QLRCs; every non-random construction uses a single linear code, satisfies assumption (2) of Corollary \ref{coro3}, and is consequently bounded by our proposed Singleton-like bound when the availability $t$ is set to $1$ (see \Cref{rem:hopelessAGcodes}). Since all of the codes constructed in Sections \ref{sec:Jaf} and \ref{sec:BCHhomo} attain our Singleton-like bound with equality, our codes are at least as good as theirs for single availability. Our code constructions are very different from those in \cite{EAQLR-gret} and, additionally, allow the use of the Hermitian inner product.

An EAQECC $Q'(C)$ is constructed from classical linear codes $C \subseteq \mathbb{F}_q^{2n}$ and $C' \subseteq \mathbb{F}_q^{2(n+c)}$ such that $C' \subseteq C'^{\perp_s}$, where $\perp_s$ denotes the symplectic dual, and $C$ is the projection of $C'$ onto the coordinates indexed by $\{1,\dots,n\}$. By applying the established results for standard quantum $(I,K)$-LRCs to $C'$, we prove in \Cref{thm:1} a necessary and sufficient condition for the EAQECC $Q'(C)$ to be a quantum $(I,K)$-LRC. This condition is formulated in terms of certain punctured and shortened codes derived from $C$ and $C^{\perp_s}$. Note that $Q'(C)$ represents the quantum code associated with $C'$, although $C'$ itself is omitted from the notation.

\Cref{thm:1} can be translated into its counterparts in \Cref{cor:1} using Hermitian or Euclidean inner products. Moreover, by assuming dual-containment instead of self-orthogonality for the code $C'$, that is, replacing $C$ in the preceding discussion with its Hermitian or Euclidean dual, one can establish a bridge between classical and entanglement-assisted quantum $(r,\delta)$-local recoverability (see Corollaries \ref{cor:1} and \ref{coro2}). Within this setting, we also extend a Singleton-like bound to the entanglement-assisted context (see \Cref{coro3}). Entan\-gle\-ment-assisted pure quantum $(r,\delta)$-locally recoverable codes attaining this bound are called optimal.

\Cref{sec:Jaf,sec:BCHhomo} provide examples of optimal pure code families in this sense. The former employs a family of bivariate $J$-affine variety codes with the Euclidean inner product, while the latter uses BCH codes and homothetic-BCH codes (see \cite{hbch25} for the latter family) with both Euclidean and Hermitian inner products. We view all of these codes as evaluation codes. Homothetic-BCH codes can be understood as enlargements of BCH codes, which is how we introduce them. \Cref{subsec:BCH} focuses on the BCH code construction. We utilize \Cref{prop:qui}, which allows us to treat duals of BCH codes as classical $(r,\delta)$-LRCs. \Cref{prop:hulldim} determines the amount of entanglement required for the derived quantum code, and \Cref{pro:purity} serves to establish its purity. Theorems \ref{thm:qminusone} and \ref{thm:qplusone} present two families of optimal pure EA $(r,\delta)$-QLRCs arising from the Euclidean construction, whereas \Cref{thm:qplusonehermitian} provides a family derived from the Hermitian construction with the aid of Propositions \ref{prop:hdimo} and \ref{prop:hdime}, which evaluate the amount of entanglement via \Cref{prop:hulldim}.

Finally, \Cref{subsec:homo} addresses the homothetic-BCH code construction. In this framework, \Cref{cor:homoclasrd}, \Cref{prop:homohulldim}, \Cref{pro:purityhbch}, \Cref{thm:homoqminusone}, \Cref{thm:homoqplusone} and \Cref{thm:homoqplusonehermitian} serve as the direct counterparts (in terms of their respective roles) to \Cref{prop:qui}, \Cref{prop:hulldim}, \Cref{pro:purity}, \Cref{thm:qminusone}, \Cref{thm:qplusone} and \Cref{thm:qplusonehermitian}.
%\medskip

The paper is organized as follows. \Cref{sec:prelim} introduces notation and reviews background material on quantum stabilizer codes (with or without entanglement assistance) as well as classical and quantum $(r,\delta)$-LRCs. In \Cref{sec:EAQLRCs}, we formalize EA $(r,\delta)$-QLRCs, prove characterization theorems, and derive a Singleton-like bound for EA $(r,\delta)$-QLRCs obtained from Hermitian and Euclidean constructions. \Cref{sec:Jaf,sec:BCHhomo} provide constructions of optimal pure EA $(r,\delta)$-QLRCs using bivariate $J$-affine variety codes, BCH codes, and homothetic-BCH codes.

\section{Preliminaries}\label{sec:prelim}

\subsection{Notations}
\label{SS21}
Let $\mathbb{F}_q$ denote the finite field with cardinality $q=p^m$, where $p$ is a prime and $m$ a positive integer, and let $\mathbf{c}=(c_1,\dots,c_n)$, $\mathbf{c'}=(c'_1,\dots,c'_n)\in \mathbb{F}_q^n$, with $n$ a positive integer. In general, we will lighten notation by using bold letters to denote vectors of any length, where $c$, $2n$, $2c$ or $2(n+c)$ may be used in place of $n$. We denote the Euclidean inner product of $\mathbf{c}$ and $\mathbf{c'}$ as $\mathbf{c} \cdot_e \mathbf{c'}$ and recall that, for $\mathbf{c},\,\mathbf{c'}\in \mathbb{F}_{q^2}^n$, the Hermitian inner product is defined as $\mathbf{c} \cdot_h \mathbf{c'}=\mathbf{c} \cdot_e [\mathbf{c'}]^q$, where $[\mathbf{c'}]^q$ denotes the componentwise $q$-th power of $\mathbf{c'}$. Now, set $\mathbf{c}=(\mathbf{c}^a \mid \mathbf{c}^b)$ and $\mathbf{c'}=(\mathbf{c'}^a \mid \mathbf{c'}^b)\in \mathbb{F}_q^{2n}$, where $\mathbf{c}^a,\,\mathbf{c}^b,\,\mathbf{c'}^a,\,\mathbf{c'}^b \in \mathbb{F}_q^n$ (we mainly use this notation for vectors of even length). In this paper, we work with the \textit{symplectic form} defined as
$$(\mathbf{c}^a \mid \mathbf{c}^b) \cdot_s (\mathbf{c'}^a \mid \mathbf{c'}^b) := \mathbf{c}^a \cdot_e \mathbf{c'}^b - \mathbf{c'}^a \cdot_e \mathbf{c}^b.$$

We use error-correcting codes $C$ (either in $\mathbb{F}_q^n$, $\mathbb{F}_{q^2}^n$ or $\mathbb{F}_q^{2n}$), where $C^{\perp_e}$ (respectively, $C^{\perp_h},\,C^{\perp_s}$) denotes the dual code with respect to the Euclidean (respectively, Hermitian, symplectic) inner product. When using linear codes in $\mathbb{F}_q^n$, parameters are denoted $[n,k,d]_q$. The dimension of a code $C$ is denoted $\dim(C)$. The linear span of a set $W$ is denoted $\mathrm{span}\,W$.

In the next paragraph, we introduce notation for shortened and punctured codes restricted to a set $R\subseteq\{1,\dots,n\}$ of coordinates that are kept, with cardinality $\card{R}=r$. To do so, we consider the projection map onto the coordinates of $R$,
$$\pi_R \colon \mathbb{F}_q^n \to \mathbb{F}_q^r, \quad \mathbf{c} \mapsto \mathbf{c}_R := (c_i)_{i\in R},$$
and denote $\mathrm{supp}(\mathbf{c})=\{i\in\{1,\dots,n\} : c_i \neq 0\}$. The following definitions will also be used for codes $C \subseteq \mathbb{F}_q^{2n}$, where $\pi_R$ instead denotes the map
$$\pi_R \colon \mathbb{F}_q^{2n} \to \mathbb{F}_q^{2r}, \quad (\mathbf{c}^a \mid \mathbf{c}^b) \mapsto (\mathbf{c}^a \mid \mathbf{c}^b)_R := (\mathbf{c}^a_R \mid \mathbf{c}^b_R),$$
and $\mathrm{supp}(\mathbf{c}^a \mid \mathbf{c}^b)=\{i\in\{1,\dots,n\} : (c_i^a, c_i^b) \neq (0,0)\}$.

The \textit{punctured code of $C$ at $R$} is the code
$$\pi_R(C)=\{\mathbf{c}_R : \mathbf{c} \in C\},$$
and the \textit{shortened code of $C$ on $R$} is the code
$$\sigma_R(C)=\{\mathbf{c}_R : \mathbf{c} \in C \text{ with } \mathrm{supp}(\mathbf{c})\subseteq R\}.$$

The Hamming weight of an element $\mathbf{c} \in \mathbb{F}_q^n$ is denoted $\mathrm{wt}(\mathbf{c})=$ $\card{(\mathrm{supp}(\mathbf{c}))}$. The symplectic weight of an element $(\mathbf{c}^a\mid\mathbf{c}^b) \in \mathbb{F}_q^{2n}$ is $\mathrm{swt}(\mathbf{c}^a\mid\mathbf{c}^b) := \card{(\mathrm{supp}(\mathbf{c}^a\mid\mathbf{c}^b))}.$ Moreover, we use the notation $d_H(C):=\min\{\mathrm{wt}(\mathbf{c}) : \mathbf{0} \neq \mathbf{c}\in C\}$ for any nonempty set $C\subseteq \mathbb{F}_q^{n}$ with $C \neq \{\mathbf{0}\}$ and $d_s(C):=\min\{\mathrm{swt}(\mathbf{c}) : \mathbf{0} \neq \mathbf{c}\in C\}$ for any nonempty set $C\subseteq \mathbb{F}_q^{2n}$  with $C \neq \{\mathbf{0}\}$. For the zero linear space $\{\mathbf{0}\}$ and the empty set $\emptyset$, we define $d_H$ and $d_s$ to be $\infty$.

\subsection{Quantum stabilizer codes}

We briefly recall the definition of quantum stabilizer codes as in \cite{AK}. Let $\mathbb{C}^*$ be the multiplicative group of non-zero complex numbers, $\iota = \sqrt{-1}$, $\xi = e^{\frac{\iota 2 \pi}{p}}$ a primitive $p$-th root of unity, and consider the Hilbert space $\mathbb{C}^{q^n} = \mathbb{C}^q \otimes \cdots \otimes \mathbb{C}^q$. Given $(\mathbf{a}\mid\mathbf{b}) \in \mathbb{F}_q^{2n}$, $E_{(\mathbf{a},\mathbf{b})}$ denotes an error operator constructed by tensoring generalized Pauli matrices determined by $\mathbf{a}$ and $\mathbf{b}$. We denote by $G_n$ the error group generated by the nice error basis $\epsilon_n := \left\{ E_{(\mathbf{a},\mathbf{b})} : (\mathbf{a}\mid\mathbf{b}) \in \mathbb{F}_q^{2n} \right\}$ on $\mathbb{C}^{q^n}$.

%Let $\C^*$ be the multiplicative group of complex numbers, $\iota = \sqrt{-1}$ and $\xi = e^{\frac{\iota 2 \pi}{p}}$ a primitive $p$-th root of unity. The Hilbert space is $\C^{q^n} = \C^q \otimes \cdots \otimes \C^q$\textcolor{blue}{, and we denote by $\{\ket{x} : x\in \F_q\}$ an orthonormal basis of $\C^q$}. Given $(\mathbf{a}|\mathbf{b}) \in \F_q^{2n}$, $E_{(\mathbf{a},\mathbf{b})}$ denotes an error operator constructed by tensoring generalized Pauli matrices $X(a)$ and $Z(b)$ acting on $\C^q${\color{blue}:
%$$X(a) : \C^q \to \C^q, \quad X(a)\ket{x}=\ket{x+a},$$
%and
%$$Z(b) : \C^q \to \C^q, \quad Z(b)\ket{x}=\xi^{\tr_{q/p}(bx)}\ket{x},$$
%where $\tr_{q/p}$ is the trace map $\tr_{q/p} \colon \F_q \to \F_p$, $\tr_{q/p}(x)=x+x^p+\cdots+x^{p^{m-1}}$.}
%Consider a nice error basis $\epsilon_n := \{ E_{(\mathbf{a},\mathbf{b})} : (\mathbf{a}|\mathbf{b}) \in \F_q^{2n}\}$ on $\C^{q^n}$ and the error group associated with $\epsilon_n$:
%\[
%G_n := \{\xi^\ell E_{(\mathbf{a},\mathbf{b})} \; : \; (\mathbf{a}|\mathbf{b}) \in \F_q^{2n},\, 0 \leq \ell \leq p-1\}.
%\]
%\textcolor{blue}{Let $S\subseteq G_n$ be an abelian subgroup, with identity $I$, containing the center $\{ \xi^\ell I : 0 \leq \ell \leq p-1\}$. Consider also a linear character $\lambda$ of $S$, i.e., a group homomorphism $\lambda \colon S \to \C^*$, such that $\lambda (\xi I) = \xi$.}

\begin{defi}
Let $S\subseteq G_n$ be an abelian subgroup containing the center and let $\lambda$ be a linear character of $S$ such that $\lambda (\xi \mathcal{I}) = \xi$, $\mathcal{I}$ being the identity. A \textit{(quantum) stabilizer code} $Q$, associated with $S$ and $\lambda$, is the common eigenspace
	\[
		Q = Q(S) := \bigcap_{E \in S} \left\{\mathbf{v} \in \mathbb{C}^{q^n} \; : \; E \mathbf{v} = \lambda(E) \mathbf{v} \right\}.
	\]
\end{defi}

Let $E = \xi^\ell E_{(\mathbf{a},\mathbf{b})} \in G_n$ be an error; the weight of $E$ is defined as $\mathrm{wt}(E) = \mathrm{swt}(\mathbf{a}\mid\mathbf{b})$. We say that a stabilizer code $Q$ has minimum distance $d$ whenever $d$ is the minimum weight of an error in $G_n$ that changes a codeword in $Q$ without being detected by $Q$. A stabilizer code $Q \subseteq \mathbb{C}^{q^n}$ of dimension $K$ and minimum distance $d$ is said to be an $((n,K, d))_q$ code. If $K=q^k$, we write $[[n, k, d]]_q$.

A stabilizer code $Q(S)$ induces an additive code
$$
C := \left\{ (\mathbf{a}| \varphi^{-1} (\mathbf{b})) \; : \; \xi^\ell E_{(\mathbf{a},\mathbf{b})} \in S \right\} \subseteq \F_q^{2n},
$$
where $\varphi$ is a field automorphism of $\F_q$ applied componentwise. This automorphism %(that we usually omit)
guarantees that $C$ is self-orthogonal with respect to the following \textit{trace-symplectic form}:
$$(\mathbf{a} | \mathbf{b}) \cdot_{\textrm{trs}} \left(\mathbf{a'} | \mathbf{b'}\right)=\textrm{tr}_{q/p} \left(\mathbf{a} \cdot_e \mathbf{b'}-\mathbf{a'} \cdot_e \mathbf{b}\right),$$
where $\tr$ is the trace map $\tr_{q/p} \colon \F_q \to \F_p$.
%\textcolor{blue}{\sout{where $\tr$ is the trace map $\tr_{q/p} \colon \F_q \to \F_p$, $\tr_{q/p}(x)=x+x^p+\cdots+x^{p^{m-1}}$.}}

The connection between additive codes over $\F_q$ and stabilizer codes is provided in the next theorem, see \cite{Ketkar}.

\begin{thm}\label{thm:equiv}
	The existence of an $((n,K,d))_q$ stabilizer code is equivalent to that of an additive code $C \subseteq \F_q^{2n}$ with cardinality $q^n/K$, such that $C \subseteq C^{\perp_{\trs}}$ and, whenever $K>1$, $d$ is the minimum symplectic weight of the elements in  $C^{\perp_{\trs}} \setminus C$. When $K=1$, there is no undetectable error.
\end{thm}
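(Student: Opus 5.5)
The plan is to pass through the Pauli group $G_n$ and its center. Since $E_{(\mathbf{a},\mathbf{b})}E_{(\mathbf{a}',\mathbf{b}')}=\xi^{\omega}E_{(\mathbf{a}',\mathbf{b}')}E_{(\mathbf{a},\mathbf{b})}$, the commutation exponent $\omega$ is, after the componentwise twist by $\varphi$, exactly the trace-symplectic form of the corresponding vectors. So I will work with the map $\Phi\colon G_n\to\F_q^{2n}$, $\xi^\ell E_{(\mathbf{a},\mathbf{b})}\mapsto(\mathbf{a}\mid\varphi^{-1}(\mathbf{b}))$. This is a group homomorphism onto an additive group, and its kernel is the center $Z=\{\xi^\ell\mathcal{I}\}$.

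For the forward direction, let $S$ be abelian and contain $Z$. Then $C=\Phi(S)$ is additive and trace-symplectic self-orthogonal, and $|C|=|S|/p$. To count the dimension I will use the standard projector
$$P=\frac{1}{|S|}\sum_{E\in S}\overline{\lambda(E)}\,E.$$
Its trace is $q^n/|C|$: only the central elements have nonzero trace, and they contribute $p\cdot q^n/|S|=q^n/|C|$. Hence $K=q^n/|C|$.

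For the distance, I will use the Knill--Laflamme criterion in the nice-error-basis form. An error $E$ is undetectable exactly when it commutes with all of $S$ but is not, up to a phase, in $S$. In terms of $\Phi$, this says $\Phi(E)\in C^{\perp_{\trs}}\setminus C$.
\begin{itemize}
\item If $E$ fails to commute with some $F\in S$, then $PEP=0$, since $E$ maps $Q$ into an orthogonal eigenspace.
\item If $E$ is a phase times an element of $S$, then $E$ acts as a scalar on $Q$.
\end{itemize}
Since weight is preserved by $\Phi$, this gives $d=\min\{\mathrm{swt}(\mathbf{c}):\mathbf{c}\in C^{\perp_{\trs}}\setminus C\}$. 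When $K=1$ we have $C=C^{\perp_{\trs}}$ (by a counting argument), so there is no undetectable error.

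For the converse, given $C$, I will lift it to $S=\Phi^{-1}(C)$. Self-orthogonality makes the lifts commute, so $S$ is abelian and contains $Z$. The main obstacle is constructing a character $\lambda$ with $\lambda(\xi\mathcal{I})=\xi$. Because $S$ is abelian, $S/Z\cong C$ is an elementary abelian $p$-group. I will choose a splitting of $Z\to S\to C$, which is possible since $Z$ is a direct factor when $p$ is odd. For $p=2$ one must instead rescale generators by $\iota$ so that each has order $2$; I will cite \cite{Ketkar} for this step. I then define $\lambda$ to be the identity on $Z$ and trivial on the complement. The common eigenspace then has dimension $q^n/|C|$, and the distance follows from the same argument as in the forward direction.
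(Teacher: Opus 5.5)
The paper gives no proof of this statement; it is quoted from \cite{Ketkar}. Your outline follows the standard argument, and the count $K=\tr P=q^n/|C|$ is correct. The distance part, however, establishes only half of what you claim. Your two bullets show that every $E$ with $\Phi(E)\notin C^{\perp_{\trs}}\setminus C$ is detectable. That gives only $d\ge\min\{\mathrm{swt}(\mathbf{c}):\mathbf{c}\in C^{\perp_{\trs}}\setminus C\}$.

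The reverse direction is asserted but never argued, and it is not automatic: every $E$ with $\Phi(E)\in C^{\perp_{\trs}}\setminus C$ must be shown to be genuinely undetectable, i.e.\ not to act as a scalar on $Q$. It follows from your own trace formula.
\begin{itemize}
\item Such an $E$ commutes with $S$, so $EQ\subseteq Q$.
\item Suppose $E$ acted on $Q$ as a scalar $c$. Then every element of the abelian group $S_1=\langle S,E\rangle$, which contains the center, acts on $Q\neq 0$ as a scalar.
\item Hence $FE^k\mapsto\lambda(F)c^k$ is a well-defined character $\lambda_1$ of $S_1$ with $\lambda_1(\xi\mathcal{I})=\xi$, and $Q$ lies in its common eigenspace.
\item That eigenspace has dimension $q^n/|\Phi(S_1)|\le q^n/(p|C|)<K$, because $\Phi(S_1)\supsetneq C$. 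This is a contradiction.
\end{itemize}
This step is what upgrades the bound to the stated equality. Your converse direction, which reuses the same distance argument, inherits the omission.

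Second, your construction of $\lambda$ for $p=2$ does not work as written. Take $q=2$, $n=1$ and $C=\{\mathbf{0},(1\mid 1)\}$. Then $E_{(1,1)}^2=-\mathcal{I}$, so $S=\{\pm\mathcal{I},\pm E_{(1,1)}\}$ is cyclic of order $4$. The center has no complement in $S$. Moreover, the rescaled operator $\iota E_{(1,1)}$ is not in $G_n$, which is where the definition requires $S$ to live.

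No splitting is needed for any $p$. The character $\xi^\ell\mathcal{I}\mapsto\xi^\ell$ of the center $Z$ extends to the finite abelian group $S$: the restriction map $\widehat{S}\to\widehat{Z}$ has kernel $\widehat{S/Z}$ of order $|S|/p$, so it is onto. In the example, $\lambda(E_{(1,1)})=\iota$ works. For odd $p$ your splitting argument is fine, since then every element of $G_n$ has order dividing $p$.
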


Following \cite{Ketkar}, other construction methods are provided as consequences of the above result.
%Let $\{\alpha, \alpha^q\}$ be a normal basis of $\F_{q^2}$ over $\F_{q}$. Using the \textit{trace-alternating form} $\cdot_a$:
%$$\mathbf{a} \cdot_a \mathbf{b} := \mathrm{tr}_{q/p} \left( \frac{ \mathbf{a}\cdot_e \mathbf{b}^q - \mathbf{a}^q \cdot_e \mathbf{b}}{\alpha^{2q}-\alpha^2 }\right), \quad \mathbf{a},\,\mathbf{b} \in \F_{q^2}^n$$
%and with the bijective isometry $\psi: \F_{q}^{2n} \rightarrow \F_{q^2}^n$ defined by $\psi(\mathbf{a}| \mathbf{b}) = \alpha \mathbf{a} + \alpha^q \mathbf{b}$, the following relation between the forms $\cdot_{\trs}$ and $\cdot_a$ holds: $(\mathbf{a}| \mathbf{b}) ._{\mathrm{trs}} (\mathbf{a'}| \mathbf{b'}) = \psi\left((\mathbf{a}| \mathbf{b})\right) \cdot_a \psi \left((\mathbf{a'}| \mathbf{b'})\right)$. Then, \Cref{thm:equiv} can be traslated to state that the existence of a quantum stabilizer code is equivalent to that of an additive code $C \subseteq \F_{q^2}^n$ satisfying $C \subseteq C^{\perp_a}$, where $C^{\perp_a}$ denotes the dual code of $C$ with respect to $\cdot_a$.
We are interested in the construction of stabilizer codes from the Hermitian and Euclidean inner products. These results are summarized as follows, see \cite{Ketkar}.

\begin{thm}
	\label{resto_en}
	The following statements hold:
	\begin{enumerate}
		\item Let $C$ be a linear $[n,(n-k)/2,d]_{q^2}$ code such that $C \subseteq C^{\perp_h}$. Denote by $d^{\perp_h}$ the minimum distance of $C^{\perp_h}$. Then, there exists an $[[n,k,\geq d^{\perp_h}]]_q$ stabilizer code.
		%\item \textcolor{blue}{\sout{Let $C_1$ and $C_2$ be linear codes with parameters $[n,k_i,d_i]_q$, $1 \leq i \leq 2$, such that $C_2^{\perp_e} \subseteq C_1$. Then, there exists an $[[n, k_1+k_2-n,d]]_q$ stabilizer code whose minimum distance is $d= \min \{\wt(\mathbf{c}) : \mathbf{c} \in (C_1 \setminus C_2^{\perp_e}) \cup (C_2 \setminus C_1^{\perp_e})\}$.}}
		\item Let $C$ be a linear $[n,(n-k)/2,d]_q$ code such that $C \subseteq C^{\perp_e}$. Denote by $d^{\perp_e}$ the minimum distance of $C^{\perp_e}$. Then, there exists an $[[n,k,\geq d^{\perp_e}]]_q$ stabilizer code.
	\end{enumerate}
\end{thm}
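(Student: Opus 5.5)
We derive both statements of \Cref{resto_en} from \Cref{thm:equiv}. For each inner product we build an $\F_q$-linear map from the classical code into $\F_q^{2n}$ that turns Hermitian or Euclidean self-orthogonality into trace-symplectic self-orthogonality, preserves cardinalities, and sends Hamming weight to symplectic weight.

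We start with the Euclidean case (2), which is the simpler one. Given $C \subseteq C^{\perp_e}$ of dimension $(n-k)/2$ over $\F_q$, consider the additive code
$$D = C \times C = \{(\mathbf{a}\mid\mathbf{b}) : \mathbf{a},\mathbf{b}\in C\} \subseteq \F_q^{2n}.$$
For $(\mathbf{a}\mid\mathbf{b}),(\mathbf{a}'\mid\mathbf{b}')\in D$ we have $\mathbf{a}\cdot_e\mathbf{b}' = \mathbf{a}'\cdot_e\mathbf{b}=0$. Hence the symplectic form vanishes, and so does its trace, giving $D\subseteq D^{\perp_{\trs}}$. The cardinality is $|D| = q^{n-k} = q^n/q^k$, so $K=q^k$.

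Next we need $D^{\perp_{\trs}}$. Since $\tr_{q/p}$ is nondegenerate and $D$ is $\F_q$-linear, its trace-symplectic dual coincides with its symplectic dual. That symplectic dual is $C^{\perp_e}\times C^{\perp_e}$, which one checks by testing against the vectors $(\mathbf{a}\mid \mathbf{0})$ and $(\mathbf{0}\mid\mathbf{b})$. Any $(\mathbf{a}\mid\mathbf{b}) \in D^{\perp_{\trs}}\setminus D$ is nonzero, so its symplectic weight is at least $\max(\mathrm{wt}(\mathbf{a}),\mathrm{wt}(\mathbf{b}))\ge d^{\perp_e}$. By \Cref{thm:equiv} this yields an $[[n,k,\ge d^{\perp_e}]]_q$ code. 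When $k=0$ the bound on the distance holds vacuously.

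For the Hermitian case (1), we fix a basis $\{1,\omega\}$ of $\F_{q^2}$ over $\F_q$, or better a normal basis $\{\alpha,\alpha^q\}$, and use the $\F_q$-linear bijection $\psi:\F_q^{2n}\to\F_{q^2}^n$ given by $\psi(\mathbf{a}\mid\mathbf{b})=\alpha\mathbf{a}+\alpha^q\mathbf{b}$. This map is coordinatewise, so it sends symplectic weight to Hamming weight. The key step, and the main obstacle, is an identity relating the two forms: there is a nonzero scalar $\lambda\in\F_{q^2}$ such that
$$(\mathbf{a}\mid\mathbf{b})\cdot_s(\mathbf{a}'\mid\mathbf{b}') = \lambda\bigl(\mathbf{u}\cdot_h\mathbf{u}' - (\mathbf{u}\cdot_h\mathbf{u}')^q\bigr), \qquad \mathbf{u}=\psi(\mathbf{a}\mid\mathbf{b}),\ \mathbf{u}'=\psi(\mathbf{a}'\mid\mathbf{b}').$$
A natural choice is $\lambda = (\alpha^{2q}-\alpha^2)^{-1}$ or a similar factor. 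We verify it by expanding $\mathbf{u}\cdot_e\mathbf{u}'^{\,q}$ coordinatewise, where the terms in $a_ia_i'$ and $b_ib_i'$ cancel in the difference.

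Granting the identity, define $D=\psi^{-1}(C)$, viewing $C$ as an $\F_q$-space of dimension $n-k$. Hermitian self-orthogonality of $C$ then gives $D\subseteq D^{\perp_{s}}\subseteq D^{\perp_{\trs}}$, and $|D|=q^{2\cdot(n-k)/2}=q^{n-k}$.

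It remains to identify $D^{\perp_{\trs}}$ with $\psi^{-1}(C^{\perp_h})$. The inclusion $\supseteq$ follows from the identity. For equality we compare sizes: both sides have cardinality $q^{n+k}$, using nondegeneracy of the trace-symplectic form on one side and $\dim_{\F_{q^2}}C^{\perp_h}=(n+k)/2$ on the other. Since $\psi$ preserves weights, every element of $D^{\perp_{\trs}}\setminus D$ has symplectic weight at least $d^{\perp_h}$, and \Cref{thm:equiv} concludes the proof.
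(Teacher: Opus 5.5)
Your argument is correct. The paper does not prove this statement itself but cites \cite{Ketkar}, and your proof is essentially the standard argument there: the product code $C\times C$ with symplectic dual $C^{\perp_e}\times C^{\perp_e}$ for the Euclidean case, and, for the Hermitian case, the normal-basis map $\psi(\mathbf{a}\mid\mathbf{b})=\alpha\mathbf{a}+\alpha^q\mathbf{b}$, which relates $\cdot_s$ to $\mathbf{u}\cdot_h\mathbf{u}'-(\mathbf{u}\cdot_h\mathbf{u}')^q$ (the trace-alternating form), followed by a dimension count. One minor point: with the paper's sign convention for $\cdot_s$, the expansion gives $\lambda=(\alpha^2-\alpha^{2q})^{-1}$ rather than $(\alpha^{2q}-\alpha^2)^{-1}$, which is harmless because your argument only uses that the form vanishes.
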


Although our results are stated using the terminology of \cite{Ketkar}, the definition of stabilizer codes we employ comes from \cite{AK}. The approach in \cite{AK} initially establishes results over $\mathbb{F}_p$ before generalizing them to $\mathbb{F}_q$. We apply this same extension procedure here; details of this procedure can be found in \cite[Section 3.1]{galindo19} within the framework of entanglement-assisted codes (defined below). Therefore, in this paper we consider the symplectic form defined in Subsection \ref{SS21}; that is, we do not employ $\mathbb{F}_p$-linear (additive) codes in $\mathbb{F}_q^{2n}$ or trace inner products.

To simplify the theory of quantum error correction and achieve higher communication capacity, the authors in \cite{Brun} proposed sharing entanglement between the encoder and decoder, leading to the following family of quantum codes.

\begin{defi}
	Let $C \subseteq \mathbb{F}_q^{2n}$ be any linear code (without necessarily satisfying inclusion requirements as above). An \textit{entanglement-assisted quantum error-correcting code (EAQECC)}, $Q'(C)$, constructed from $C$, is a QECC whose encoding quantum circuit is constructed from another classical code $C' \subseteq \mathbb{F}_q^{2(n+c)}$ such that $C' \subseteq C'^{\perp_s}$ and $\pi_{\{1,\dots,n\}}(C')=C$, with $c$ being the minimum required number of maximally entangled pairs (quantum states in $\mathbb{C}^q \otimes \mathbb{C}^q$) to encode logical qudits into physical qudits. The decoder uses the stabilizer defined by $C'$ for decoding, by applying a decoder for $Q'(C')$ to the concatenation of the decoder's entanglement and the received word.
\end{defi}

For a fixed $C$, there may be multiple choices for such a code $C'$. The existence of a code $C'$ as above was proved in \cite{Brun} for the binary case, in \cite{luo17} for prime $q$, and in \cite{galindo19} for the general case. In that general setting of $q$-ary EAQECCs, explicit formulas for the parameters $\left[[n,\kappa,d;c]\right]_q$ can be found in \cite{galindo19}, using the symplectic, Hermitian, or Euclidean constructions. A different choice of $C'$ does not modify the parameters. We recall them in the next theorem.

\begin{thm}\label{thm:csh}
	Let $C \subseteq \mathbb{F}_q^{2n}$ (respectively, $C\subseteq \mathbb{F}_{q^2}^n$ and $C \subseteq \mathbb{F}_q^n$) be a $[2n,n-k]_q$ (respectively, $[n,(n-k)/2]_{q^2}$ and $[n, (n-k)/2]_q$) linear code. Then,  the minimum required number of maximally entangled pairs $c$ of an EAQECC $Q'(C)$ coming from $C$ can be computed as follows:
	$$2c = \dim C - \dim \left(C \cap C^{\perp_s}\right)$$
	(respectively,
	$$c = \dim C - \dim \left(C \cap C^{\perp_h}\right)\textrm{, and}$$
	$$c = \dim C - \dim \left(C \cap C^{\perp_e}\right)\textrm{.})$$
	
The EAQECC has parameters $[[n, k + c, d; c]]_q$, where
	$$d = d_s\left(C^{\perp_s} \setminus \left(C \cap C^{\perp_s}\right)\right)$$
	(respectively,
	$$d = d_H\left(C^{\perp_h} \setminus \left(C \cap C^{\perp_h}\right)\right)\textrm{, and}$$
	$$d = d_H\left(C^{\perp_e} \setminus \left(C \cap C^{\perp_e}\right)\right)\textrm{).}$$
\end{thm}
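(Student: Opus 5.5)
The plan is to reduce everything to the symplectic case, in the spirit of \cite{galindo19}, and then obtain the Hermitian and Euclidean formulas by transporting the construction through the standard isometries.

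\textbf{Symplectic case: the entanglement count.} Let $C\subseteq\F_q^{2n}$ have dimension $n-k$. Set $H:=C\cap C^{\perp_s}$, the symplectic hull, and let $h=\dim H$. The restriction of $\cdot_s$ to $C$ has radical exactly $H$. Hence $C/H$ carries a nondegenerate alternating form, so $\dim C-h$ is even; write it as $2c$. By symplectic Gram--Schmidt, $C=H\oplus\spn\{\mathbf{u}_i,\mathbf{v}_i\}_{i=1}^{c}$ with $\mathbf{u}_i\cdot_s\mathbf{v}_j=\delta_{ij}$ and all other pairings zero.

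To build $C'\subseteq\F_q^{2(n+c)}$, extend each hull vector by zeros on the $c$ extra coordinates. Extend $\mathbf{u}_i$ by $-\mathbf{e}_i$ in the extra $a$-part, and $\mathbf{v}_i$ by $\mathbf{e}_i$ in the extra $b$-part, with signs chosen so that the extra pairings cancel the $\delta_{ij}$. Then $C'$ is self-orthogonal and projects onto $C$, so $c$ pairs suffice.

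For minimality, suppose $C'$ is any self-orthogonal lift using $c'$ extra positions. The map $C'\to C$ is surjective. The symplectic form on $C$ equals minus the form on the extra coordinates, and the latter has rank at most $2c'$. Therefore $2c=\operatorname{rank}(\cdot_s|_C)\le 2c'$.

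\textbf{Symplectic case: dimension and distance.} The stabilizer $C'$ has dimension $\dim C=n-k$ in $\F_q^{2(n+c)}$. By Theorem~\ref{thm:equiv}, the code encodes $(n+c)-(n-k)=k+c$ qudits.

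The decoder sees errors only on the first $n$ qudits, so undetectable errors are the vectors $(\mathbf{e}\mid\mathbf{0})\in C'^{\perp_s}\setminus C'$. Now $(\mathbf{e}\mid\mathbf{0})\in C'^{\perp_s}$ if and only if $\mathbf{e}\in C^{\perp_s}$. Also $(\mathbf{e}\mid\mathbf{0})\in C'$ if and only if $\mathbf{e}\in H$, because the zero-extended vectors of $C'$ are exactly the lifts of $H$, the $\mathbf{u}_i,\mathbf{v}_i$ parts being detected on the extra coordinates. This gives $d=d_s(C^{\perp_s}\setminus H)$.

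This is the step needing care: one must show that the zero-tail vectors of $C'$ are exactly the lifts of $H$ for every admissible $C'$, which again follows from the rank argument.

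\textbf{Hermitian and Euclidean cases.} For the Euclidean case, take $C\subseteq\F_q^n$ and apply the symplectic result to $C\times C\subseteq\F_q^{2n}$. A direct computation gives $(C\times C)^{\perp_s}=C^{\perp_e}\times C^{\perp_e}$. The hull is then $(C\cap C^{\perp_e})^2$, so $2c=2(\dim C-\dim(C\cap C^{\perp_e}))$. The dimension count is $2\dim C=n-k$. The symplectic weight of a vector in $C^{\perp_e}\times C^{\perp_e}$ outside the hull is minimized by taking a vector of the form $(\mathbf{x}\mid\mathbf{0})$, which yields the stated $d_H$.

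For the Hermitian case, use the $\F_q$-isometry $\F_{q^2}^n\cong\F_q^{2n}$ given by a basis $\{1,\omega\}$ of $\F_{q^2}$ over $\F_q$. Under this identification, a trace-type form built from $\cdot_h$ corresponds to $\cdot_s$, and $\F_{q^2}$-linear Hermitian duals correspond to symplectic duals (the standard Ketkar et al.\ argument). Dimensions double, which yields $c=\dim C-\dim(C\cap C^{\perp_h})$. Hamming weight over $\F_{q^2}$ equals symplectic weight, so the distance formula transfers directly.

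I expect the main obstacle to be verifying the duality correspondence in the Hermitian case, in particular that the $\F_q$-symplectic dual of the image of $C$ equals the image of the $\F_{q^2}$-linear $C^{\perp_h}$.
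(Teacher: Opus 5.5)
Your proposal is correct and follows the standard argument of \cite{galindo19}, which the paper recalls rather than reproves: symplectic Gram--Schmidt on $C/(C\cap C^{\perp_s})$ to build the lift, the rank bound $2c\le 2c'$ for minimality, and reduction of the other two cases to the symplectic one, the Euclidean case via $C\times C$ and the Hermitian case via the $\F_q$-isometry $\F_{q^2}^n\cong\F_q^{2n}$. For the Euclidean case your route correctly produces a CSS-type lift, whereas a Euclidean self-orthogonal $C'\subseteq\F_q^{n+c}$ with this $c$ need not exist (e.g.\ for $C=\langle(1,0)\rangle\subseteq\F_3^2$). The step you flag is that every lift with exactly $c$ extra pairs has $\pi_{\{n+1,\dots,n+c\}}(C')=\F_q^{2c}$, so that $C'\to C$ is injective and the zero-tail vectors of $C'$ are exactly $(C\cap C^{\perp_s})\times\{\mathbf{0}\}$; this is precisely what the paper carries out in the first two steps of the proof of \Cref{thm:1}.
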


We end this subsection by stating that, in the rest of the paper, $Q'(C)$ stands for the QECC (respectively, EAQECC) defined by a self-orthogonal code $C$ (respectively, a self-orthogonal code $C'$ such that $\pi_{\{1,\dots,n\}}(C')=C$).

\subsection{Classical and quantum $(r,\delta)$-locally recoverable codes}\label{sub:cqlrcs}

Let us recall the classical definition, given in \cite{PKLK2012}, which is designed to correct simultaneous erasures from a few other coordinates.

\begin{defi}
	A code $C \subseteq \mathbb{F}_q^n$ is an \textit{$(r,\delta)$-LRC} (or an \textit{LRC with locality $(r,\delta)$}) if for each position $i \in \{1, \ldots, n\}$ there exists a set of positions $K \subseteq \{1, \ldots, n\}$, called an $(r,\delta)$-\textit{recovery set} for $i$, such that:
	\begin{enumerate}
		\item $i\in K$, $\mathrm{card}\,(K) \leq r + \delta -1$; and
		\item $d_H[\pi_{K} (C)] \geq \delta$.
	\end{enumerate}
	With the above definition, one is able to correct an erasure at coordinate $i$ plus any other $\delta-2$ erasures in $K \setminus \{i\}$ by using at most $r$ remaining coordinates.
\end{defi}

There is a Singleton-like bound for $(r,\delta)$-LRCs (see \cite{PKLK2012}), stating that if $[n,k,d]_q$ are the parameters of an $(r,\delta)$-LRC, then
	\begin{equation}
		\label{Singleform}
		k+d+ \left(\left\lceil{\frac{k}{r}} \right\rceil -1 \right)\left( \delta -1 \right) \leq n+1.
	\end{equation}
If the above bound is attained, the code is said to be \textit{optimal} ($(r,\delta)$-LRC).

\medskip

The concept of a \textit{quantum $(r,\delta)$-LRC} was introduced in \cite{qlrc24}, which we follow. Let $Q \subseteq \mathbb{C}^{q^n}$ be a quantum code and $\emptyset \neq I \subsetneq K \subseteq \{1,\dots,n\}$ be sets of positions, where $I$ denotes the ones where erasures occur, and $K$ denotes the ones used for recovery. With the aim of reversing the action of a completely depolarizing channel by using qudits at $K$, the concept of a quantum $(I,K)$-LRC was first introduced in \cite{qlrc24}. Let $\Gamma$ be the completely depolarizing channel on a single qudit in $\mathbb{C}^q$ and set $\Gamma^I = \bigotimes_{i=1}^n \Gamma_i$, where $\Gamma_i = \Gamma$ for $i \in I$ and $\Gamma_i$ is the identity for the remaining $i$'s.

The code $Q$ is said to be an \textit{$(I,K)$-locally recoverable code} if there exists a trace-preserving quantum operation (in the sense of \cite[Chapter 8]{chuangnielsen}) $\mathcal{R}^{K}_{Q,I}$, acting only on the qudits corresponding to $K$ and keeping the remaining ones untouched, such that
\begin{equation*}
	\mathcal{R}^{K}_{Q,I} \circ \Gamma^I (\ket{\varphi}\bra{\varphi}) = \ket{\varphi}\bra{\varphi} %\label{eq1_en}
\end{equation*}
for all $\ket{\varphi} \in Q$.

\begin{defi}
\label{CC-2}
	A quantum code $Q \subseteq \mathbb{C}^{q^n}$ is said to be a \textit{quantum $(r,\delta)$-locally recoverable code} if for each index $i \in \{1,\dots,n\}$, there exists a set $K \subseteq \{1,\dots,n\}$ such that $i \in K$, $\mathrm{card}(K) \leq r+\delta-1$, and for every subset $I\subseteq K$ of cardinality $\delta-1$, $Q$ is $(I,K)$-locally recoverable.
\end{defi}

There exists a necessary and sufficient condition on the classical code defining a stabilizer code for it to be a quantum $(I,K)$-LRC (see \cite{qlrc24}). We gather the symplectic, Hermitian, and Euclidean conditions in the following result.

\begin{thm}\label{thm:equivcond}
	Let $Q'(C) \subseteq \mathbb{C}^{q^n}$ be a stabilizer code given by a $q$-ary symplectic (respectively, $q^2$-ary linear Hermitian and $q$-ary linear Euclidean) self-orthogonal code $C \subseteq \mathbb{F}_q^{2n}$ (respectively, $ C \subseteq \mathbb{F}_{q^2}^n$ and  $C \subseteq \mathbb{F}_{q}^n$). Consider sets of positions $I$ and $K$ such that $\emptyset \neq I \subsetneq K \subseteq \{1, \ldots, n\}$. Then, $Q'(C)$ is $(I,K)$-locally recoverable if and only if the equality
	\begin{equation}\label{eq:equivcond}
		\sigma_I[\pi_{K}(C^{\perp})] = \sigma_I(C)
	\end{equation}
	holds, where $\perp$ denotes $\perp_s$ (respectively, $\perp_h$ and $\perp_e$).
\end{thm}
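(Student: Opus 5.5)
We would prove the symplectic statement first and deduce the Hermitian and Euclidean versions from it. For the latter, we would use the standard maps $\F_{q^2}^n\to\F_q^{2n}$ (Hermitian case) and $\mathbf{c}\mapsto(\mathbf{c}\mid\mathbf{0})$, $(\mathbf{0}\mid\mathbf{c})$ (Euclidean case, with the code $C\times C$). These maps carry $\perp_h$, respectively $\perp_e$, to $\perp_s$. They also commute with $\pi_K$ and $\sigma_I$, since both operations act coordinatewise on supports. So \eqref{eq:equivcond} for $\perp_h$ or $\perp_e$ is equivalent to the symplectic condition for the image code.

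For the symplectic case, the approach is information-theoretic, using a stabilizer purification. Adjoin a reference system $R$ of $k$ qudits and a maximally entangled state between $R$ and the logical space. The result is a stabilizer state on $\{1,\dots,n\}\cup R$, whose stabilizer code $\tilde C$ projects onto the first $n$ positions as exactly $C^{\perp_s}$. Its elements supported on $\{1,\dots,n\}$ alone, i.e.\ with trivial $R$-part, are exactly $C$.

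Write $A=I$, $B=K\setminus I$ and $E=(\{1,\dots,n\}\setminus K)\cup R$. A channel acting only on $K$ that undoes the depolarization of $I$, while leaving the rest untouched, exists if and only if $A$ is recoverable from $B$ for this tripartite pure state. By the decoupling (or complementary-channel) criterion, this holds if and only if $\rho_{AE}=\rho_A\otimes\rho_E$. The main point is to translate this product condition for stabilizer states into a statement about $\tilde C$. We expect this translation to be the main obstacle, and we would argue it as follows:
\begin{itemize}
\item $\rho_{AE}$ is the normalized projector built from $\sigma_{A\cup E}(\tilde C)$.
\item $\rho_{AE}$ factorizes if and only if $\sigma_{A\cup E}(\tilde C)=\sigma_A(\tilde C)\oplus\sigma_E(\tilde C)$.
\item Equivalently, every element of $\tilde C$ supported in $A\cup E$ has an $A$-part that is itself the $A$-part of an element of $\tilde C$ supported in $A$.
\end{itemize}
The third point follows by comparing dimensions via the count $\log\operatorname{rank}$ of entropies on both sides.

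Finally, we would rewrite this condition in terms of $C$. An element of $\tilde C$ supported in $A\cup E$ is, after forgetting $R$ and the coordinates outside $K$, precisely a vector of $\pi_K(C^{\perp_s})$ supported on $I$. So the set of $A$-parts is $\sigma_I[\pi_K(C^{\perp_s})]$. An element of $\tilde C$ supported in $A$ has trivial $R$-part, hence lies in $C$ with support in $I$, so these $A$-parts form $\sigma_I(C)$. The inclusion $\sigma_I(C)\subseteq\sigma_I[\pi_K(C^{\perp_s})]$ is automatic because $C\subseteq C^{\perp_s}$. Hence the criterion reduces exactly to the equality \eqref{eq:equivcond}, which gives both directions at once.
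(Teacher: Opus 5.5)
Your proposal is correct. It necessarily takes a different route from the paper, because the paper gives no proof of this statement. It quotes the result from \cite{qlrc24} and then uses it as a black box, notably in the proof of Theorem~\ref{thm:1}, where it is applied to the lifted code $C'$ and the enlarged set $L=K\cup\{n+1,\dots,n+c\}$. Your purification and decoupling argument holds step by step: the reduction of the Hermitian and Euclidean cases, the identities $\pi_{\{1,\dots,n\}}(\tilde C)=C^{\perp_s}$ and $\{\mathbf{x}\in\tilde C:\mathbf{x}_R=\mathbf{0}\}=C\times\{\mathbf{0}\}$, and the final identification of the two sets of $A$-parts with $\sigma_I[\pi_K(C^{\perp_s})]$ and $\sigma_I(C)$.

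Four points should be made explicit rather than asserted.
\begin{enumerate}
\item The definition quantifies over pure states $\ket{\varphi}\in Q'(C)$. Passing to the purification $\ket{\Phi}$ with $\mathrm{id}_R$ needs linearity plus polarization: the operators $\ket{\varphi}\bra{\varphi}$ span all operators on the code space.
\item The recovery map may also act on the erased qudits in $I$. Since $\Gamma^I$ leaves them maximally mixed and uncorrelated with everything, this is equivalent to a channel from $B$ to $A\cup B$. In that setting the criterion $\rho_{AE}=\rho_A\otimes\rho_E$ is valid. Sufficiency follows from Uhlmann's theorem. For necessity, the $AE$-marginal is untouched by the recovery and must end up uncorrelated with a pure output.
\item The dimension count belongs to your second bullet, not the third. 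From $S(\rho_T)=(\card(T)-\dim\sigma_T(\tilde C))\log q$ you get $I(A{:}E)=0$ if and only if $\dim\sigma_{A\cup E}(\tilde C)=\dim\sigma_A(\tilde C)+\dim\sigma_E(\tilde C)$.
\item In the Hermitian reduction, the coordinatewise map does not carry $\cdot_h$ to $\cdot_s$. What it does is carry $C^{\perp_h}$ onto the symplectic dual of the image, via the trace-alternating form and the fact that trace-symplectic and symplectic duals coincide for $\F_q$-linear codes. That is all you need.
\end{enumerate}

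Your route gives a self-contained, conceptual explanation of the criterion as a decoupling statement, working directly over $\F_q$. It also connects cleanly with the paper's own machinery. Your $\tilde C$ is exactly an entanglement-assisted extension of $C^{\perp_s}$ in the paper's sense, with $c=k$, since $C^{\perp_s}\cap (C^{\perp_s})^{\perp_s}=C$ and $(\dim C^{\perp_s}-\dim C)/2=k$. Your identity that elements of $\tilde C$ with zero $R$-part form $C\times\{\mathbf{0}\}$ is precisely step~(1) of the proof of Theorem~\ref{thm:1} applied to $C^{\perp_s}$. The cited route keeps the paper short but leaves this mechanism outside it.
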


In the language of quantum $(r,\delta)$-LRCs, the above result translates as follows.

\begin{thm}
	\label{quantumH-E}
	Let $Q'(C)$ and $C$ be codes as in the statement of \Cref{thm:equivcond}. Then, $Q'(C)$ is a quantum $(r,\delta)$-LRC if and only if, for every position $i \in \{1, \ldots, n\}$, there exists a set $K \ni i$ with $\mathrm{card}(K) \leq r + \delta -1$ such that Equality \eqref{eq:equivcond} holds for any subset $I \subsetneq K$ with $\mathrm{card}(I)=\delta -1$.
\end{thm}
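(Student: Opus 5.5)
The statement follows directly from \Cref{thm:equivcond} by unfolding \Cref{CC-2}; no new machinery is needed. For each index $i$, \Cref{CC-2} asks for a set $K\ni i$ with $\mathrm{card}(K)\le r+\delta-1$ such that $Q'(C)$ is $(I,K)$-locally recoverable for every $I\subseteq K$ with $\mathrm{card}(I)=\delta-1$. The plan is to replace each such local recoverability condition by the equality \eqref{eq:equivcond}, using \Cref{thm:equivcond} for the appropriate inner product (symplectic, Hermitian or Euclidean). This gives both implications at once.

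\emph{Forward direction.} Assume $Q'(C)$ is a quantum $(r,\delta)$-LRC, fix $i$, and take the set $K$ supplied by the definition. Let $I\subseteq K$ with $\mathrm{card}(I)=\delta-1$; the convention $\emptyset\neq I\subsetneq K$ required by \Cref{thm:equivcond} is addressed below. Since $Q'(C)$ is $(I,K)$-locally recoverable, \Cref{thm:equivcond} yields $\sigma_I[\pi_K(C^{\perp})]=\sigma_I(C)$.

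\emph{Converse.} Assume the stated condition. Given $i$, take the set $K\ni i$ with $\mathrm{card}(K)\le r+\delta-1$ from the hypothesis. For each $I\subsetneq K$ with $\mathrm{card}(I)=\delta-1$, equality \eqref{eq:equivcond} holds, so \Cref{thm:equivcond} shows that $Q'(C)$ is $(I,K)$-locally recoverable. Hence $K$ witnesses the condition of \Cref{CC-2} at $i$.

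The only point needing care is the mismatch between "$I\subseteq K$" in \Cref{CC-2} and "$\emptyset\neq I\subsetneq K$" in \Cref{thm:equivcond}.
\begin{itemize}
\item Nonemptiness: this holds because $\delta\ge 2$, so $\mathrm{card}(I)=\delta-1\ge 1$. This is implicit in the setting, since local recovery of erasures requires at least one erasure.
\item Properness: $(I,K)$-local recoverability is only defined for $I\subsetneq K$, so in \Cref{CC-2} the subsets $I$ are implicitly proper. If $\mathrm{card}(K)=\delta-1$, the only subset of that size is $K$ itself, and recovery from no remaining qudits is impossible unless $Q'(C)$ is trivial. With the convention that $I$ ranges over proper subsets, both formulations quantify over exactly the same family of $I$.
\end{itemize}
Once this convention is fixed, the proof is a direct translation of \Cref{CC-2} through \Cref{thm:equivcond}.
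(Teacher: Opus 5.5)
Your proposal is correct and follows the paper's approach. The paper gives no separate proof; it presents the statement as a direct translation of \Cref{thm:equivcond} through \Cref{CC-2}, which is exactly your argument. Your bookkeeping on $\emptyset\neq I\subsetneq K$ also matches the implicit convention needed for the two quantifications to coincide: nonemptiness comes from $\delta\ge 2$, and the subsets in \Cref{CC-2} are read as proper.
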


In addition, assuming Hermitian or Euclidean dual-containment of the classical codes $C$ defining quantum codes, we set $Q(C):=Q'(C^\perp)$. A relation between classical and quantum local recoverability was established in \cite{qlrc24}. Indeed, under the condition $\mathrm{card}(I) \leq d(C^{\perp} ) -1$, $Q(C)$ being a quantum $(I,K)$-LRC is equivalent to $C$ being able to correct erasures in $I$ from coordinates of $C$ at $K \setminus I$.

The result in terms of $(r,\delta)$-local recoverability is as follows.

\begin{thm}
	\label{thm:rdeltarel}
	Let $C$ be a $q^2$-ary linear Hermitian (respectively, $q$-ary linear Euclidean) dual-containing code $C \subseteq \mathbb{F}_{q^2}^n$ (respectively, $\mathbb{F}_{q}^n$). Then, $Q(C)$ is a quantum $(r,\delta)$-LRC if $C$ is a classical $(r,\delta)$-LRC.
The last implication becomes a logical equivalence if $\delta \leq d(C^{\perp_h})$ (respectively, $d(C^{\perp_e})$).
\end{thm}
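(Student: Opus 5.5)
The plan is to reduce everything to \Cref{quantumH-E} applied to the self-orthogonal code $C^{\perp}$, where $\perp$ is $\perp_h$ or $\perp_e$. Since $C$ is dual-containing, $C^\perp$ is self-orthogonal, $(C^\perp)^\perp = C$, and $Q(C)=Q'(C^\perp)$. By \Cref{quantumH-E} and \Cref{thm:equivcond}, $Q(C)$ is a quantum $(r,\delta)$-LRC if and only if every $i$ lies in some $K$ with $\card(K)\le r+\delta-1$ such that, for all $I\subsetneq K$ with $\card(I)=\delta-1$,
$$\sigma_I[\pi_K(C)] = \sigma_I(C^\perp).$$
The inclusion $\sigma_I(C^\perp)\subseteq \sigma_I[\pi_K(C)]$ always holds, because $C^\perp\subseteq C$ and a codeword supported in $I\subseteq K$ survives puncturing to $K$. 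So the condition is the reverse inclusion.

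For sufficiency, I take $K$ to be the classical $(r,\delta)$-recovery set of $i$, so $d_H[\pi_K(C)]\ge\delta$. Any nonzero element of $\sigma_I[\pi_K(C)]$ comes from a codeword of $\pi_K(C)$ supported in $I$. That codeword would have weight at most $\delta-1<\delta$, which is impossible. Hence $\sigma_I[\pi_K(C)]=\{\mathbf 0\}$, and the reverse inclusion is trivial. This gives the first implication.

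For the converse, assume $\delta\le d(C^\perp)$ and that $Q(C)$ is a quantum $(r,\delta)$-LRC, with sets $K$ as above. Codewords of $C^\perp$ supported in $I$ have weight at most $\delta-1<d(C^\perp)$, so $\sigma_I(C^\perp)=\{\mathbf 0\}$. The equality then forces $\sigma_I[\pi_K(C)]=\{\mathbf 0\}$ for every $I\subsetneq K$ of size $\delta-1$. Now suppose a nonzero $\mathbf v\in\pi_K(C)$ had weight at most $\delta-1$. Its support has size at most $\delta-1$, and I extend it to some $I\subseteq K$ of size exactly $\delta-1$. Then $\mathbf v$ gives a nonzero element of $\sigma_I[\pi_K(C)]$, a contradiction. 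Therefore $d_H[\pi_K(C)]\ge\delta$, and $K$ is a classical recovery set.

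The main obstacle is the edge case $\card(K)\le \delta-1$, where no proper subset $I\subsetneq K$ of size $\delta-1$ exists. There the quantum condition is vacuous, while the classical condition still requires $d_H[\pi_K(C)]\ge\delta$. This cannot hold if $\pi_K(C)\ne\{\mathbf 0\}$, since every nonzero word then has weight at most $\card(K)\le\delta-1$. I would handle it by arguing that $\card(K)\ge\delta$ is implicit in both definitions: otherwise the recovery set is degenerate, and $\pi_K(C)=\{\mathbf 0\}$ gives $d_H=\infty$ under the paper's convention. Apart from this convention check, the whole argument is the simple support-size reasoning above.
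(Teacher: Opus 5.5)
Your argument is correct and follows essentially the paper's route. The paper cites the result from \cite{qlrc24}, and the paragraph before it summarizes the same proof: apply \Cref{thm:equivcond} to $C^{\perp}$ and use $\sigma_I(C^{\perp})=\{\mathbf 0\}$ when $\card(I)\le d(C^{\perp})-1$, so that $(I,K)$-recoverability becomes $\sigma_I[\pi_K(C)]=\{\mathbf 0\}$; the forward implication holds unconditionally, as in the proof of \Cref{coro1}. Your edge-case remark is needed, not cosmetic: under the vacuous reading, $K=\{i\}$ makes every code a quantum $(r,\delta)$-LRC for $\delta\ge 2$, so the converse would fail in general, and $\card(K)\ge\delta$ must be taken as part of the definition.
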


To conclude, we provide a Singleton-like bound for quantum $(r,\delta)$-LRCs coming from Hermitian or Euclidean dual-containing classical codes (see \cite{qlrc24}).

\begin{thm}
	\label{thm:SingletonQ}
	Let $C$ be a $q^2$-ary linear Hermitian (respectively, $q$-ary linear Euclidean) dual-containing code $C \subseteq \mathbb{F}_{q^2}^n$ (respectively, $\mathbb{F}_{q}^n$) such that $\dim C = \frac{n+k}{2}$ and $C$ is an $(r,\delta)$-LRC. Then, $Q(C)$ is a quantum $(r,\delta)$-LRC and it has parameters $[[n,k >0, \geq d(C)]]_q$, which satisfy
	\begin{equation}
		\label{eq:SingletonQ}
		k + 2 d(C) + 2\left(\left\lceil\frac{n+k}{2r}\right\rceil-1\right)(\delta-1) \leq n+2.
	\end{equation}
\end{thm}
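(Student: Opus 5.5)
The bound in \Cref{thm:SingletonQ} will follow by feeding the parameters of the classical dual-containing code $C$ into the classical Singleton-like bound \eqref{Singleform}. The quantum local recoverability of $Q(C)$ itself is already \Cref{thm:rdeltarel}, so only the parameters and the inequality need an argument.

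\textbf{Parameters.} Write $N=\dim C=\frac{n+k}{2}$. Hermitian or Euclidean dual-containment gives $C^{\perp}\subseteq C$ with $\dim C^{\perp}=n-N=\frac{n-k}{2}$. Applying \Cref{resto_en} to the self-orthogonal code $C^{\perp}$ (whose dual is $C$) yields a stabilizer code $Q(C)=Q'(C^\perp)$ of length $n$ with $n-2\cdot\frac{n-k}{2}=k$ logical qudits. Its minimum distance is at least $d(C)$, since the undetectable errors are the vectors of $C\setminus C^{\perp}$, whose weight is at least $d(C)$. The hypothesis $k>0$ guarantees that $C^\perp\subsetneq C$, so $Q(C)$ is nontrivial. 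This gives the parameters $[[n,k,\geq d(C)]]_q$.

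\textbf{Inequality.} Since $C$ is a classical $(r,\delta)$-LRC with parameters $[n,N,d(C)]$ over $\mathbb{F}_{q^2}$ or $\mathbb{F}_q$, the bound \eqref{Singleform} gives
\[
N+d(C)+\left(\left\lceil \tfrac{N}{r}\right\rceil-1\right)(\delta-1)\leq n+1 .
\]
Substitute $N=\frac{n+k}{2}$ and multiply by $2$:
\[
n+k+2d(C)+2\left(\left\lceil\tfrac{n+k}{2r}\right\rceil-1\right)(\delta-1)\le 2n+2 .
\]
Subtracting $n$ from both sides gives \eqref{eq:SingletonQ}.

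The only delicate point is the bookkeeping in the first step. One has to track which code plays the role of the self-orthogonal code in \Cref{resto_en}, namely $C^{\perp}$ with dimension $\frac{n-k}{2}$, and confirm that the resulting number of logical qudits is exactly $k$. Once that is settled, the rest is a direct substitution.
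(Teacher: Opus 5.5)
Your argument is correct and essentially matches the paper's route. The paper recalls this theorem from prior work, but it proves the entanglement-assisted analogue, \Cref{coro3}, the same way: local recoverability from \Cref{thm:rdeltarel}, the parameters $[[n,k,\geq d(C)]]_q$ from \Cref{resto_en} applied to the self-orthogonal code $C^{\perp}$ of dimension $\frac{n-k}{2}$, and the inequality by substituting $\dim C=\frac{n+k}{2}$ into \eqref{Singleform} and doubling.
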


A quantum code with parameters $[[n,k,d]]_q$ coming from a dual-containing classical code $C$ as in Theorem \ref{thm:SingletonQ} is \textit{pure} if and only if $d=d(C)$.

\begin{defi}
A \textit{pure} quantum $(r, \delta)$-LRC $Q(C)$ as above which attains the bound in \eqref{eq:SingletonQ} is said to be \textit{optimal}.
\end{defi}

\section{Entanglement-assisted quantum $(r,\delta)$-LRCs}\label{sec:EAQLRCs}

In this section, we introduce entanglement-assisted quantum $(r,\delta)$-LRCs and provide a necessary and sufficient condition for an EAQECC to be a quantum $(r,\delta)$-LRC. In the case of Hermitian and Euclidean constructions, we also relate classical and entanglement-assisted quantum $(r,\delta)$-local recoverability, and we obtain a Singleton-like bound for entanglement-assisted codes coming from these constructions. Note that, very recently, EAQECCs for a single erasure obtained from the CSS construction were introduced in \cite{EAQLR-r}, and for the general case with availability in \cite{EAQLR-gret}.

Let $C \subseteq \F_q^{2n}$ and $C' \subseteq \F_q^{2(n+c)}$ be linear codes such that $C' \subseteq C'^{\perp_s}$ and $\pi_{\{1,\dots,n\}}(C')=C$. Let $Q'(C)$ be the associated EAQECC, where $c$ is given by \Cref{thm:csh}. Note that, for a fixed $C$, there are different choices for $C'$. To define entanglement-assisted quantum $(r,\delta)$-LRCs, we wish to provide a condition independent of the choice of $C'$ and dependent only on $C$. We assume that, on the $c$ pairs of entangled qudits, the qudits held by the receiver are unaffected by erasures.

%Let $\emptyset \neq I \subsetneq K \subseteq \{1,\dots,n\}$ be sets of positions. Let $\Gamma$ be the completely depolarizing channel on a single qudit in $\C^q$ and set $\Gamma^I = \bigotimes_{i=1}^{n+c} \Gamma_i$, where $\Gamma_i = \Gamma$ for $i \in I$ and $\Gamma_i$ is the identity for the remaining indices $i$.

\begin{defi}\label{def:eaIKlrc}
Let $\emptyset \neq I \subsetneq K \subseteq \{1,\dots,n\}$ be sets of positions. An EAQECC $Q'(C)$ given by $C \subseteq \F_q^{2n}$ as above is said to be \textit{$(I,K)$-locally recoverable} if there exists a code $C'$ as above and a trace-preserving quantum operation (in the sense of \cite[Chapter 8]{chuangnielsen}) $\mathcal{R}^{K \cup \{n+1,\dots,n+c\}}_{Q' (C'),I}$, acting only on the qudits corresponding to $K \cup \{n+1,\dots,n+c\}$ and keeping the remaining ones untouched, such that
	\begin{equation*}
		\mathcal{R}^{K \cup \{n+1,\dots,n+c\}}_{Q' (C'),I} \circ \Gamma^I (\ket{\varphi}\bra{\varphi}) = \ket{\varphi}\bra{\varphi} \label{eq1_en}
	\end{equation*}
for all $\ket{\varphi} \in Q'(C')$, where $\Gamma^I = \bigotimes_{i=1}^{n+c} \Gamma_i$, with $\Gamma_i$ being $\Gamma$ (as defined before Definition \ref{CC-2}) for $i \in I$, and $\Gamma_i$ being the identity for the remaining $i$'s.
\end{defi}

Note that the forthcoming Theorem \ref{thm:1} shows that Definition \ref{def:eaIKlrc} is independent of the choice of $C'$. Next we introduce the concept of entanglement-assisted quantum $(r,\delta)$-locally recoverable code.

\begin{defi}\label{def:eaqlrc}
An EAQECC $Q'(C)$ is said to be an \textit{entanglement-assisted quantum $(r,\delta)$-locally recoverable code} (also \textit{EA $(r,\delta)$-QLRC}) if for each index $i \in \{1,\dots,n\}$, there exists a set $K \subseteq \{1,\dots,n\}$ such that $i \in K$, $\mathrm{card}(K) \leq r+\delta-1$, and for every subset $I \subseteq K$ of cardinality $\delta-1$, $Q'(C)$ is $(I,K)$-locally recoverable.
\end{defi}

\begin{rem}\label{rem:assistedrdelta}
All definitions in Subsection \ref{sub:cqlrcs} remain the same in this context, taking into account that the quantum operation for local recovery is also assisted by the qudits held by the receiver. We choose this point of view to avoid introducing new definitions of $(r,\delta)$-local recoverability for EAQECCs, which would require considering larger sets of positions and alternative interpretations of the locality $(r,\delta)$. %$In that case we would consider a set $L$ such that J \subseteq L \subseteq \{1,\dots,n+c\}\}$, $L$ being the union of $J$ and the set of positions of entangled qudits. Then, define $(I,L)$-local recoverability...
\end{rem}

As announced, the following result characterizes the $(I,K)$-local recoverability of an EA quantum error-correcting code $Q'(C)$ by a condition depending only on $C$ and not on $C'$.

\begin{thm}\label{thm:1}
Let $C \subseteq \F_q^{2n}$ be a linear code and $Q'(C)$ be the EAQECC constructed from a linear code $C' \subseteq \F_q^{2(n+c)}$ such that $C' \subseteq C'^{\perp_s}$ and $\pi_{\{1,\dots,n\}}(C')=C$, where $c$ is given by \Cref{thm:csh}. Assume that the qudits held by the receiver are unaffected by erasures. Consider sets of positions $I$ and $K$ such that $\emptyset \neq I \subsetneq K \subseteq \{1, \ldots, n\}$. Then, $Q'(C)$ is $(I,K)$-locally recoverable if and only if the equality
\begin{equation}\label{eq:nequivcond}
	\sigma_I[\pi_{K}(C^{\perp_s})] = \sigma_I(C \cap C^{\perp_s})
\end{equation}
holds.	
\end{thm}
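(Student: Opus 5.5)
The plan is to reduce to \Cref{thm:equivcond}, applied to the self-orthogonal code $C'\subseteq\F_q^{2(n+c)}$ with erasure set $I$ and recovery set $K':=K\cup\{n+1,\dots,n+c\}$. By \Cref{def:eaIKlrc}, $Q'(C)$ is $(I,K)$-locally recoverable exactly when the stabilizer code $Q'(C')$ is $(I,K')$-locally recoverable. Since $\emptyset\neq I\subsetneq K'$, \Cref{thm:equivcond} says this holds if and only if
\[
\sigma_I[\pi_{K'}(C'^{\perp_s})]=\sigma_I(C').
\]
So the whole task is to show that this equality is equivalent to \eqref{eq:nequivcond}, with both sides of one expressed through $C$ alone. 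This would also prove independence of the choice of $C'$.

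First I compute $\sigma_I(C')$. A codeword of $C'$ supported in $I\subseteq\{1,\dots,n\}$ vanishes on the last $c$ coordinates. I claim that the set of vectors of $C$ obtained by truncating elements of $C'$ that are zero on $\{n+1,\dots,n+c\}$ is exactly $C\cap C^{\perp_s}$. This is the standard structure of the extension in \cite{galindo19}. One inclusion is immediate: if $\mathbf{x}'\in C'$ vanishes on the extra coordinates, then for every $\mathbf{y}'\in C'$ we have $\mathbf{x}\cdot_s\mathbf{y}=\mathbf{x}'\cdot_s\mathbf{y}'=0$, so $\mathbf{x}\in C\cap C^{\perp_s}$. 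For the reverse inclusion I use a dimension count. The projection $C'\to C$ is injective, because its kernel is self-orthogonal inside $\F_q^{2c}$ while the minimality of $c$ forces $\dim C'=\dim C$. The restriction of $C'$ to the last $c$ coordinates then has rank $\dim C-\dim(C\cap C^{\perp_s})=2c$. Hence the subcode vanishing there has dimension $\dim(C\cap C^{\perp_s})$. Consequently $\sigma_I(C')=\sigma_I(C\cap C^{\perp_s})$.

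Next I compute $\sigma_I[\pi_{K'}(C'^{\perp_s})]$. A vector of $\pi_{K'}(C'^{\perp_s})$ supported in $I$ comes from some $\mathbf{y}'\in C'^{\perp_s}$ that vanishes on $K'\setminus I$, in particular on the last $c$ coordinates. So it suffices to show that
\[
\{\mathbf{y}\in\F_q^{2n} : (\mathbf{y}\mid\mathbf{0})\in C'^{\perp_s}\}=C^{\perp_s}.
\]
This holds because $(\mathbf{y}\mid\mathbf{0})\cdot_s\mathbf{x}'=\mathbf{y}\cdot_s\pi_{\{1,\dots,n\}}(\mathbf{x}')$ and $\pi_{\{1,\dots,n\}}(C')=C$. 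With this identity, the left-hand side becomes the set of $\mathbf{y}_I$ with $\mathbf{y}\in C^{\perp_s}$ and $\mathbf{y}$ zero on $K\setminus I$. That set is exactly $\sigma_I[\pi_K(C^{\perp_s})]$, which gives the equivalence with \eqref{eq:nequivcond}.

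The main obstacle is the first step, namely showing that the $C'$-codewords vanishing on the entangled positions project onto all of $C\cap C^{\perp_s}$ and not a proper subcode. This is where the minimality of $c$ from \Cref{thm:csh} has to be used, through injectivity of the projection and a rank computation on the extra coordinates, relying on the symplectic structure of $\F_q^{2c}$. If that argument becomes delicate, I would fall back on the explicit symplectic-basis construction of $C'$ from \cite{galindo19}: take a basis of $C\cap C^{\perp_s}$ extended by $c$ hyperbolic pairs, and attach the standard hyperbolic pairs of $\F_q^{2c}$ to those pairs. For that construction the identity can be verified directly.
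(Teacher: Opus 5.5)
Your proposal is correct and has the same architecture as the paper's proof. You reduce via \Cref{thm:equivcond} to the self-orthogonal code $C'$ with recovery set $L=K\cup\{n+1,\dots,n+c\}$, then show $\sigma_I(C')=\sigma_I(C\cap C^{\perp_s})$ and $\sigma_I[\pi_L(C'^{\perp_s})]=\sigma_I[\pi_K(C^{\perp_s})]$. Your argument for the second identity is exactly the paper's.

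The only real difference is in the first identity. Write $\pi_n$ and $\pi_e$ for the projections onto the physical and the entanglement coordinates. The paper first shows $\pi_e(C')=\F_q^{2c}$ by transporting the nondegenerate form of $C/(C\cap C^{\perp_s})$. It then deduces that $\pi_n\colon C'\to C$ is injective, and uses nondegeneracy of $\F_q^{2c}$ to see that every $\mathbf{w}\in C\cap C^{\perp_s}$ lifts to $(\mathbf{w},\mathbf{0})\in C'$. You count dimensions instead.

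Your stated reason for injectivity does not work as written. Nonzero self-orthogonal subspaces of $\F_q^{2c}$ exist, and ``minimality of $c$ forces $\dim C'=\dim C$'' merely restates the claim. Your own count repairs it, with no appeal to the symplectic structure of $\F_q^{2c}$. Let $C'_0=\{\mathbf{x}'\in C' : \pi_e(\mathbf{x}')=\mathbf{0}\}$; $\pi_n$ is trivially injective on $C'_0$.
\begin{itemize}
\item Your easy inclusion gives $\dim C'_0\le\dim(C\cap C^{\perp_s})=\dim C-2c$.
\item On the other hand, $\dim C'-\dim C'_0=\dim\pi_e(C')\le 2c$.
\item Also $\dim C'=\dim C+\dim\ker(\pi_n|_{C'})$.
\end{itemize}
Together these force $\ker(\pi_n|_{C'})=\{\mathbf{0}\}$ and $\pi_n(C'_0)=C\cap C^{\perp_s}$ at once. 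This is more elementary than the paper's form-matching step and yields the same consequences, including $\pi_e(C')=\F_q^{2c}$.

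Finally, drop the fallback to the explicit construction of \cite{galindo19}. It would handle only one particular $C'$, while the ``only if'' direction must hold for whatever $C'$ witnesses local recoverability.
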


\begin{proof}
It suffices to prove the statement for a prime $q$, i.e., for the case of a prime field $\mathbb{F}_q=\mathbb{F}_p$, because the argument can be generalized to an arbitrary $\mathbb{F}_q$ ($q=p^m$) by applying the extension procedure described in \cite[Section 3.1]{galindo19}.

Consider sets $I$ and $K$ such that $\emptyset \neq I \subsetneq K \subseteq \{1, \ldots, n\}$. Let $L := K \cup \{n+1, \ldots, n+c\}$. Since the decoder uses the stabilizer defined by $C'$ for decoding, by applying a decoder for
$Q'(C')$ to the concatenation of the decoder's entanglement and the received word, and  by Theorem \ref{thm:equivcond}, $Q'(C)$ is $(I,K)$-locally recoverable if and only if
\begin{equation}\label{eq:equivL}
	\sigma_I[\pi_L(C'^{\perp_s})] = \sigma_I(C').
\end{equation}
Thus, the result will be proved if we show the following equivalence:
\begin{equation} \label{eq:equivalenceC}
    \sigma_I[\pi_L((C')^{\perp_s})] = \sigma_I(C') \iff \sigma_I[\pi_K(C^{\perp_s})] = \sigma_I(C \cap C^{\perp_s}).
\end{equation}
We prove Equivalence \eqref{eq:equivalenceC} by showing (1) $\sigma_I(C') = \sigma_I(C \cap C^{\perp_s})$ and (2) $\sigma_I[\pi_L((C')^{\perp_s})] = \sigma_I[\pi_K(C^{\perp_s})]$.

Keeping the notation in Subsection \ref{SS21}, denote by $\pi_n = \pi_{\{1, \ldots, n\}}$ the projection onto the physical qudits, and $\pi_e = \pi_{\{n+1, \ldots, n+c\}}$ the projection onto the extra entanglement qudits. As a first step, it is convenient to observe that \textit{the projection $\pi_n: C' \to C$ is an isomorphism.} Let us show this. It is clear that $\pi_n(C') = C$. Now, any two elements $\mathbf{x}$ and $\mathbf{z}$ in $C'$ satisfy $\mathbf{x} \cdot_s \mathbf{z} = 0$, which yields $\pi_n(\mathbf{x}) \cdot_s \pi_n(\mathbf{z}) + \pi_e(\mathbf{x}) \cdot_s \pi_e(\mathbf{z}) = 0$, and thus,
\begin{equation}
\label{CA1}
\pi_e(\mathbf{x}) \cdot_s \pi_e(\mathbf{z}) = - \pi_n(\mathbf{x}) \cdot_s \pi_n(\mathbf{z}).
\end{equation}
Denote by $C'_e$ the projection $\pi_e(C') \subseteq \mathbb{F}_p^{2c}$. The quotient space $C / (C \cap C^{\perp_s})$ has a non-degenerate symplectic form and dimension $2c$. Equality \eqref{CA1} shows that the symplectic form on $C'_e$ matches the non-deg\-ene\-rate form on $C / (C \cap C^{\perp_s})$, and therefore $C'_e$ is a non-degenerate symplectic space of dimension $2c$. Since $C'_e \subseteq \mathbb{F}_p^{2c}$, it must hold that $C'_e = \mathbb{F}_p^{2c}$. Now, consider the kernel of the projection $\pi_n$, $\ker(\pi_n)$, and let $\boldsymbol{\theta} = (\mathbf{0}, \mathbf{y}) \in \ker(\pi_n)$, which must be symplectically orthogonal to every vector $\mathbf{z} = (\mathbf{z}_n, \mathbf{z}_e) \in C'$. Then,
\[ \boldsymbol{\theta} \cdot_s \mathbf{z} = (\mathbf{0}, \mathbf{y}) \cdot_s (\mathbf{z}_n, \mathbf{z}_e) = \mathbf{y} \cdot_s \mathbf{z}_e = 0. \]
Finally, since $\mathbf{y}$ is orthogonal to all of $C'_e = \mathbb{F}_p^{2c}$, it holds that $\mathbf{y}= \mathbf{0}$, which proves that $\ker(\pi_n)$ vanishes and $\pi_n: C' \to C$ is an isomorphism.

Our second step is to prove (1). We start by proving that the unique lift $(\mathbf{w}, \mathbf{e}_w) \in C'$ of an element $\mathbf{w} \in C \cap C^{\perp_s}$ is $(\mathbf{w}, \mathbf{0})$. Indeed,
\[ (\mathbf{w}, \mathbf{e}_w) \cdot_s (\mathbf{v}, \mathbf{f}_v) = \mathbf{w} \cdot_s \mathbf{v} + \mathbf{e}_w \cdot_s \mathbf{f}_v = 0 \]
for any lift $(\mathbf{v}, \mathbf{f}_v) \in C'$. Therefore, $\mathbf{w} \in C^{\perp_s}$ forces $\mathbf{e}_w \cdot_s \mathbf{f}_v = 0$ for all $\mathbf{f}_v \in C'_e$, and thus $\mathbf{e}_w = \mathbf{0}$ since $C'_e = \mathbb{F}_p^{2c}$. When we apply $\sigma_I$ to $C'$, we restrict to vectors whose support lies strictly in $I \subseteq \{1, \ldots, n\}$. Any such vector $\mathbf{z} \in C'$ takes the form $(\mathbf{v}, \mathbf{0})$, where $\mathrm{supp}(\mathbf{v}) \subseteq I$. As we have shown, $(\mathbf{v}, \mathbf{0}) \in C'$ if and only if $\mathbf{v} \in C \cap C^{\perp_s}$. Consequently,
\[ \sigma_I(C') = \sigma_I(C \cap C^{\perp_s}). \]

Finally, we prove (2), which concludes the proof. Let $\mathbf{u}$ be an element
%of the set on the left-hand side.
to be shortened. By definition, $\mathbf{u}$ is the projection of some vector $\mathbf{c} \in (C')^{\perp_s}$ onto $L = K \cup \{n+1, \ldots, n+c\}$, subject to the constraint that $\mathrm{supp}(\mathbf{u}) \subseteq I$. The fact that $\mathrm{supp}(\mathbf{u}) \subseteq I$ forces the extra coordinates of $\mathbf{c}$ to be zero. Thus, $\mathbf{c} = (\mathbf{w}, \mathbf{0})$, where $\mathbf{w} \in \mathbb{F}_p^{2n}$ and the projection onto $K$ satisfies $\mathrm{supp}(\mathbf{w}_K) \subseteq I$. Reasoning as above, $\mathbf{c} \in (C')^{\perp_s}$ implies $\mathbf{w} \cdot_s \mathbf{v} = 0$ for all $\mathbf{v} \in C$, which means that $\mathbf{w} \in C^{\perp_s}$. Projecting $\mathbf{c}$ onto $L$ leaves $(\mathbf{w}_K, \mathbf{0})$, and restricting to $I$ yields exactly $\mathbf{w}_I$. The converse is clear because the set $\sigma_I[\pi_K(C^{\perp_s})]$ is constructed by taking $\mathbf{w} \in C^{\perp_s}$, projecting onto $K$, and filtering for vectors where $\mathrm{supp}(\mathbf{w}_K) \subseteq I$, yielding $\mathbf{w}_I$.

\end{proof}

\begin{rem}
Theorems 1 and 2 in \cite{EAQLR-r} and Proposition 3.4 in \cite{EAQLR-gret}
can be immediately deduced from \Cref{thm:1}.
\end{rem}

From the fact that quantum stabilizer codes can be constructed using Hermitian or Euclidean inner products instead of symplectic forms, \Cref{thm:1} can be straightforwardly translated as follows.

\begin{cor}\label{cor:1}
		Let $C \subseteq \F_{q^2}^n$ (respectively, $C \subseteq \F_q^n$) be a linear code and $Q'(C)$ be the EAQECC constructed from a linear code $C' \subseteq \F_{q^2}^{n+c}$ (respectively, $C' \subseteq \F_q^{n+c}$) such that $C' \subseteq C'^{\perp_h}$ (respectively, $C' \subseteq C'^{\perp_e}$) and $\pi_{\{1,\dots,n\}}(C')=C$, where $c$ is given by \Cref{thm:csh}. Assume that the qudits held by the receiver are unaffected by erasures. Consider sets of positions $I$ and $K$ such that $\emptyset \neq I \subsetneq K \subseteq \{1, \ldots, n\}$. Then, $Q'(C)$ is $(I,K)$-locally recoverable if and only if the equality
		\begin{equation}\label{eq:nequivcondHE}
			\sigma_I[\pi_K(C^\perp)] = \sigma_I(C \cap C^\perp)
		\end{equation}
		holds, where $\perp$ denotes $\perp_h$ (respectively, $\perp_e$).
\end{cor}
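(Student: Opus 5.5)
The plan is to reduce \Cref{cor:1} to \Cref{thm:1} through the standard isometric embeddings of Hermitian and Euclidean codes into symplectic codes. These are the embeddings underlying \Cref{resto_en} and the Hermitian/Euclidean formulas in \Cref{thm:csh}. The proof has two parts: check that each of the four codes in \eqref{eq:nequivcondHE} is carried to the corresponding code in \eqref{eq:nequivcond}, and check that the symplectic image of $C'$ is a valid entanglement-assisted extension.

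\textbf{Euclidean case.} Send $C \subseteq \F_q^n$ to $\tilde C := C \times C = \{(\mathbf{a}\mid\mathbf{b}) : \mathbf{a},\mathbf{b}\in C\} \subseteq \F_q^{2n}$. A direct computation gives $\tilde C^{\perp_s} = C^{\perp_e}\times C^{\perp_e}$, so $\tilde C\cap\tilde C^{\perp_s} = (C\cap C^{\perp_e})\times(C\cap C^{\perp_e})$. The map $C\mapsto C\times C$ respects the coordinatewise operations:
\[
\pi_K(D\times D)=\pi_K(D)\times\pi_K(D), \qquad \sigma_I(D\times D)=\sigma_I(D)\times\sigma_I(D).
\]
The second identity holds because $\mathrm{supp}(\mathbf{a}\mid\mathbf{b})\subseteq I$ exactly when $\mathrm{supp}(\mathbf{a})$ and $\mathrm{supp}(\mathbf{b})$ are both contained in $I$. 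Since $X\times X = Y\times Y$ iff $X=Y$, equality \eqref{eq:nequivcond} for $\tilde C$ is equivalent to \eqref{eq:nequivcondHE} for $C$. Likewise, $C'\subseteq C'^{\perp_e}$ gives $\tilde C'\subseteq \tilde C'^{\perp_s}$ with $\pi_{\{1,\dots,n\}}(\tilde C')=\tilde C$. The entanglement count matches: by \Cref{thm:csh}, $2c=\dim\tilde C-\dim(\tilde C\cap\tilde C^{\perp_s})$ is twice the Euclidean count. The EAQECC $Q'(C)$ is by definition the one built from $\tilde C'$, so \Cref{thm:1} applies directly.

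\textbf{Hermitian case.} Fix a basis $\{1,\omega\}$ of $\F_{q^2}$ over $\F_q$ and use the $\F_q$-linear isometry $\phi\colon\F_{q^2}^n\to\F_q^{2n}$ from \cite{Ketkar} (cf.\ \cite[Section 3.1]{galindo19}). It acts coordinatewise and satisfies $\phi(D^{\perp_h})=\phi(D)^{\perp_s}$ for $\F_{q^2}$-linear $D$. Because $\phi$ acts coordinate by coordinate, $\mathrm{supp}(\phi(\mathbf{c}))=\mathrm{supp}(\mathbf{c})$, and $\phi$ commutes with $\pi_K$ and $\sigma_I$. It also commutes with intersections, since it is injective. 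Applying $\phi$ to both sides therefore makes \eqref{eq:nequivcondHE} equivalent to \eqref{eq:nequivcond} for $\phi(C)$. The image $\phi(C')$ is symplectic self-orthogonal and projects onto $\phi(C)$. Its entanglement count is $2c$ in the symplectic formula and $c$ in the Hermitian formula, which is consistent because $\dim_{\F_q}=2\dim_{\F_{q^2}}$. \Cref{thm:1} then finishes the proof.

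\textbf{Main obstacle.} The only delicate point is confirming that $\phi$ really converts the Hermitian dual into the symplectic dual, and not merely a trace version of it. Because the codes are $\F_{q^2}$-linear, the trace-Hermitian dual coincides with the Hermitian dual. The definition of $\phi$ has to be chosen so that this dual is sent to the symplectic dual as defined in Subsection~\ref{SS21}; I would cite \cite{Ketkar,galindo19} for this. Everything else is bookkeeping with supports.
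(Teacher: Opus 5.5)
Your proposal is correct and follows the paper's own route. The paper gives no separate proof; it only remarks that \Cref{thm:1} ``can be straightforwardly translated'' because Hermitian and Euclidean constructions are instances of the symplectic one, and your embeddings $C\mapsto C\times C$ and the coordinatewise isometry $\phi$ make that precise, since they preserve supports, $\pi_K$, $\sigma_I$, intersections, duals, the self-orthogonality and projection of $C'$, and the value of $c$. The one step you defer to the literature is a short check: taking $\phi(\mathbf{a}+\omega\mathbf{b})=(\mathbf{a}\mid\mathbf{b})$, one gets $\mathbf{x}\cdot_e\mathbf{y}^q-\mathbf{x}^q\cdot_e\mathbf{y}=(\omega^q-\omega)\,\phi(\mathbf{x})\cdot_s\phi(\mathbf{y})$, so symplectic orthogonality of the images means $\mathbf{x}\cdot_h\mathbf{y}\in\F_q$, and $\F_{q^2}$-linearity of $D$ (replace $\mathbf{x}$ by $\gamma\mathbf{x}$) forces $\mathbf{x}\cdot_h\mathbf{y}=0$, which gives $\phi(D^{\perp_h})=\phi(D)^{\perp_s}$.
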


Moreover, \Cref{thm:1} (or \Cref{cor:1}) allows us to convert any classical $(r,\delta)$-LRC into a quantum one. Let $\perp$ denote $\perp_h$ or $\perp_e$ according to the case considered and, with the notation as before \Cref{thm:rdeltarel}, consider the EAQECC $Q(C)$ given by a classical code $C \subseteq \mathbb{F}^n_{q^2}$ (Hermitian case) or $C \subseteq \mathbb{F}^n_q$ (Euclidean case). That is, $Q(C)$ is the EAQECC given by a suitable dual-containing code $C' \subseteq \mathbb{F}^{n+c}_{q^2}$ in the Hermitian case and $C' \subseteq \mathbb{F}^{n+c}_q$ in the Euclidean case. Note that $C^\perp$ corresponds to $C$ in \Cref{cor:1}. In this case, $Q(C)$ is an $[[n,2\dim C-n+c,d_H(C \setminus (C \cap C^\perp));c]]_q$ EAQECC.

\begin{cor}\label{coro1}
	Let $C$ be a linear code as above and assume that $C$ is a classical $(r, \delta)$-LRC. Then, $Q(C)$ is an entanglement-assisted quantum $(r,\delta)$-locally recoverable code.
\end{cor}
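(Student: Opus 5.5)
We will reduce \Cref{coro1} to \Cref{cor:1}, applied with the code $C^\perp$ in the role of the code called $C$ there. Fix $i\in\{1,\dots,n\}$. Since $C$ is a classical $(r,\delta)$-LRC, we take a recovery set $K\ni i$ with $\card(K)\le r+\delta-1$ and $d_H[\pi_K(C)]\ge\delta$. We then fix an arbitrary $I\subseteq K$ with $\card(I)=\delta-1$. This forces $I\subsetneq K$: otherwise $\card(K)=\delta-1$, and $d_H[\pi_K(C)]\ge\delta>\card(K)$ would make $\pi_K(C)=\{\mathbf{0}\}$. We exclude that degenerate case, or, if $K$ happens to be small, we enlarge $K$ by one index, since enlarging $K$ does not decrease $d_H[\pi_K(C)]$ as long as $\card(K)\le r+\delta-1$. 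By \Cref{cor:1}, with $(C^\perp)^\perp=C$, it then suffices to prove
\[
\sigma_I[\pi_K(C)] = \sigma_I(C^\perp\cap C).
\]

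The inclusion ``$\supseteq$'' is formal: an element of $\sigma_I(C\cap C^\perp)$ is $\mathbf{c}_I$ for some $\mathbf{c}\in C$ with $\mathrm{supp}(\mathbf{c})\subseteq I\subseteq K$. Then $\mathbf{c}_K$ lies in $\pi_K(C)$ with support in $I$, and restricting it to $I$ gives $\mathbf{c}_I$.

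For ``$\subseteq$'' we use the locality. Let $\mathbf{u}\in\sigma_I[\pi_K(C)]$, say $\mathbf{u}=\mathbf{c}_I$ where $\mathbf{c}\in C$ and $\mathrm{supp}(\mathbf{c}_K)\subseteq I$. Then $\mathbf{c}_K\in\pi_K(C)$ has weight at most $\card(I)=\delta-1<\delta\le d_H[\pi_K(C)]$, so $\mathbf{c}_K=\mathbf{0}$ and hence $\mathbf{u}=\mathbf{0}$. Thus the left-hand side is $\{\mathbf{0}\}$, and so is the right-hand side, because it is contained in the left-hand side. The equality holds for every $I$ of size $\delta-1$, so $Q(C)$ is $(I,K)$-locally recoverable for each such $I$. \Cref{def:eaqlrc} is therefore satisfied.

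The main point to handle carefully is the bookkeeping of roles. In \Cref{cor:1} the EAQECC is built from a code whose dual appears inside $\pi_K$, and here $Q(C)=Q'(C^\perp)$. We must therefore check that the punctured code appearing is $\pi_K(C)$ and not $\pi_K(C^\perp)$. The degenerate case $I=K$ also needs attention. Once these are settled, the argument is just the classical distance bound on the punctured code.
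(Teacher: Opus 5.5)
Your main argument is correct and is essentially the paper's proof. You apply \Cref{cor:1} with $C^\perp$ in the role of $C$, so the condition becomes $\sigma_I[\pi_K(C)]=\sigma_I(C\cap C^\perp)$, and both sides are $\{\mathbf 0\}$ because $d_H[\pi_K(C)]\ge\delta>\card(I)$. The only cosmetic difference is how the right-hand side is handled: the paper uses $d_H(C\cap C^\perp)\ge d_H(C)\ge\delta$, while you use the inclusion $\sigma_I(C\cap C^\perp)\subseteq\sigma_I[\pi_K(C)]$.

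The auxiliary claim that enlarging $K$ does not decrease $d_H[\pi_K(C)]$ is false, so the ``enlarge $K$ by one index'' repair fails. Codewords that vanish on $K$ but not on the added coordinate become new nonzero words of small weight. This happens exactly in the degenerate case where you invoke the claim. If $\card(K)=\delta-1$, then $\pi_K(C)=\{\mathbf 0\}$. For $K'=K\cup\{j\}$ with coordinate $j$ not identically zero on $C$, every word of $\pi_{K'}(C)$ is supported on $\{j\}$, so $d_H[\pi_{K'}(C)]=1$. For any $I\ni j$ of size $\delta-1$, $\sigma_I[\pi_{K'}(C)]\neq\{\mathbf 0\}$, whereas $\sigma_I(C\cap C^\perp)$ is in general $\{\mathbf 0\}$, so the required equality fails.

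The paper does not treat this case at all; it simply takes $I\subsetneq K$, with $\delta\ge 2$ implicit. Keep only your ``exclude'' option and justify it, for instance by assuming no coordinate of $C$ is identically zero. Then $\pi_K(C)\neq\{\mathbf 0\}$, and together with $d_H[\pi_K(C)]\ge\delta$ this forces $\card(K)\ge\delta$, so every $I$ with $\card(I)=\delta-1$ is a proper subset of $K$.
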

\begin{proof}
Let $I$ and $K$ be, respectively, erasure and recovery sets such that $\emptyset \neq I \subsetneq K \subseteq \{1, \ldots, n\}$, with $\mathrm{card}(I) = \delta -1$ and $\mathrm{card}(K) \leq r+ \mathrm{card}(I) = r+\delta-1$.
Since $C$ is a classical $(r,\delta)$-LRC, we have $d_H(C \cap C^{\perp}) \geq d_H(C) \geq \delta$, and the right-hand side of \Cref{eq:nequivcondHE} (with $C$ corresponding to $C^\perp$ in the equation) equals $\{\mathbf{0}\}$, where $\mathbf{0}$ denotes the $n$-dimensional zero vector.
Again, since $C$ is a classical $(r,\delta)$-LRC, the left-hand side of \Cref{eq:nequivcondHE} also equals $\{\mathbf{0}\}$ and $Q(C)$ is $(I,K)$-locally recoverable, which completes the proof.
\end{proof}

The equivalence between classical and quantum $(r,\delta)$-local recoverability is derived from \Cref{coro1} and \cite[Proposition 27]{qlrc24}. We state the result below.

\begin{cor}\label{coro2}
	Let $C \subseteq \F_{q^2}^n$ (respectively, $C \subseteq \F_{q}^n$) be a linear code as above such that $d_H(C \cap C^{\perp}) \geq \delta$, where $\perp$ denotes $\perp_h$ (respectively, $\perp_e$). Then $Q(C)$ is an entanglement-assisted quantum $(r,\delta)$-locally recoverable code if and only if $C$ is a classical $(r, \delta)$-LRC.
\end{cor}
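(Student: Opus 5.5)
The plan is to obtain the two implications separately, using \Cref{coro1} for one and \Cref{cor:1} for the other. Throughout, $\perp$ is $\perp_h$ or $\perp_e$, and $Q(C)=Q'(C^\perp)$. In \Cref{cor:1} the code playing the role of ``$C$'' is $C^\perp$, so the condition \eqref{eq:nequivcondHE} for $Q(C)$ reads
\begin{equation*}
\sigma_I[\pi_K(C)] = \sigma_I(C^\perp \cap C).
\end{equation*}
The implication ``$C$ is a classical $(r,\delta)$-LRC $\Rightarrow$ $Q(C)$ is an EA $(r,\delta)$-QLRC'' is exactly \Cref{coro1}, so nothing more is needed there.

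For the converse, assume $Q(C)$ is an EA $(r,\delta)$-QLRC. Fix $i$ and take the set $K\ni i$ given by \Cref{def:eaqlrc}, with $\mathrm{card}(K)\le r+\delta-1$. It suffices to show $d_H[\pi_K(C)]\ge\delta$.

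Suppose instead that some nonzero $\mathbf{u}\in\pi_K(C)$ has $\mathrm{wt}(\mathbf{u})\le\delta-1$. If $\mathrm{card}(K)\ge\delta$, choose $I\subsetneq K$ with $\mathrm{card}(I)=\delta-1$ and $\mathrm{supp}(\mathbf{u})\subseteq I$. Then $\mathbf{u}_I$ is a nonzero element of $\sigma_I[\pi_K(C)]$. By \Cref{cor:1}, this set equals $\sigma_I(C\cap C^\perp)$, so there is a nonzero codeword of $C\cap C^\perp$ supported in $I$. That codeword has weight at most $\delta-1$, contradicting the hypothesis $d_H(C\cap C^\perp)\ge\delta$.

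The main point to handle is the degenerate case $\mathrm{card}(K)\le\delta-1$, where no proper subset $I$ of size $\delta-1$ exists. Definition \ref{def:eaIKlrc} requires $I\subsetneq K$, so I would read \Cref{def:eaqlrc} as implicitly forcing $\mathrm{card}(K)\ge\delta$, exactly as in \cite[Proposition 27]{qlrc24}. I would mirror the argument there, which is the classical-quantum equivalence for the non-EA case, noting that the only change is replacing $C$ by $C\cap C^\perp$ on the right-hand side.
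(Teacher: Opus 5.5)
Your proposal is correct and follows essentially the paper's route. The forward implication is \Cref{coro1}. Your converse applies \Cref{cor:1} with $C^\perp$ in the role of the code, so that $d_H(C\cap C^\perp)\geq\delta$ forces $\sigma_I[\pi_K(C)]=\{\mathbf{0}\}$ for every $I\subsetneq K$ of size $\delta-1$; this is exactly the entanglement-assisted adaptation of \cite[Proposition 27]{qlrc24} that the paper cites without writing out.

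Your remark that $\mathrm{card}(K)\geq\delta$ must be implicitly assumed is also apt. For $\mathrm{card}(K)<\delta-1$ the requirement in \Cref{def:eaqlrc} is vacuous, and the ``only if'' direction could not hold in general; the paper leaves this convention implicit as well.
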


Similarly to \cite[Theorem 31, Remark 32]{qlrc24}, one obtains a Singleton-like bound for entanglement-assisted quantum $(r,\delta)$-LRCs coming from the Hermitian or Euclidean constructions.

\begin{cor}\label{coro3}
	Let $C \subseteq \F_{q^2}^n$ (respectively, $C \subseteq \F_q^n$) be a $q^2$-ary (respectively, $q$-ary) linear code as above defining an EAQECC, $Q(C)$, with parameters $[[n, k > 0, \geq d_H(C); c]]_q$, where $c=\dim C^\perp - \dim (C \cap C^\perp)$ and $\perp$ denotes $\perp_h$ (respectively, $\perp_e$). Assume also that either
	\begin{enumerate}
		\item $C$ is a classical $(r,\delta)$-LRC; or
		\item\label{assumption2} $Q(C)$ is an entanglement-assisted quantum $(r,\delta)$-LRC and $d_H(C \cap C^{\perp}) \geq \delta$.
	\end{enumerate}
 	Then, the entanglement-assisted quantum $(r,\delta)$-LRC, $Q(C)$, satisfies
	\begin{equation}\label{eq:EASing}
		\frac{n+k-c}{2} + d_H(C) + \left( \left\lceil \frac{n+k-c}{2r}\right\rceil - 1 \right) (\delta - 1) \leq n+1.
	\end{equation}
\end{cor}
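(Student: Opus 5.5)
Under either hypothesis $C$ is a classical $(r,\delta)$-LRC, so the plan is to apply the classical Singleton-like bound \eqref{Singleform} to $C$ and then rewrite its dimension in terms of the quantum parameters.

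\textbf{Reducing to hypothesis (1).} Hypothesis (1) is already what we need. Under hypothesis (2), we have $d_H(C \cap C^{\perp}) \geq \delta$ and $Q(C)$ is an EA quantum $(r,\delta)$-LRC, so \Cref{coro2} shows that $C$ is a classical $(r,\delta)$-LRC. From here on we assume only that $C$ is a classical $(r,\delta)$-LRC.

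\textbf{Expressing $\dim C$ through $n,k,c$.} As recorded before \Cref{coro1}, $Q(C)$ has dimension parameter $k = 2\dim C - n + c$. Hence
\[
\dim C = \frac{n+k-c}{2}.
\]
This is a dimension, so it is automatically a nonnegative integer. Also $k>0$ gives $\dim C \geq 1$, which ensures that the ceiling term and the minimum distance $d_H(C)$ in \eqref{Singleform} are meaningful.

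\textbf{Applying \eqref{Singleform}.} Apply \eqref{Singleform} to $C$ with parameters $[n,\dim C, d_H(C)]$ and substitute $\dim C = (n+k-c)/2$. This gives \eqref{eq:EASing} directly. The minimum distance of $Q(C)$ is at least $d_H(C)$, but the bound is stated in terms of $d_H(C)$ itself, so no purity assumption is needed.

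\textbf{Main obstacle.} The only real issue is bookkeeping. The convention $Q(C) = Q'(C^\perp)$ means the code fed into \Cref{thm:csh} is $C^\perp$. One must check that this yields $c=\dim C^\perp - \dim(C\cap C^\perp)$ and $k = 2\dim C - n + c$, with the right value of $\dim C^\perp$ in the Hermitian case as well. From \Cref{thm:csh}, $\dim C^\perp = (n-k')/2$ with $k+\,$ (logical dimension) $= k' + c$. Then $n - \dim C = (n-k')/2$, so $k' = 2\dim C - n$ and the logical dimension is $2\dim C - n + c$, as claimed. Everything else is a direct substitution.
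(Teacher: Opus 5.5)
Your proposal is correct and follows the paper's proof. Under hypothesis (2), you reduce to hypothesis (1) via Corollary~\ref{coro2}, then apply the classical bound \eqref{Singleform} to $C$ with $\dim C = \frac{n+k-c}{2}$. The only addition in the paper is citing Corollary~\ref{coro1} under hypothesis (1) to justify that $Q(C)$ really is an EA $(r,\delta)$-QLRC, as the statement tacitly asserts; this does not affect the inequality itself.
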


\begin{proof}
	This follows straightforwardly from Corollaries \ref{coro1} and \ref{coro2}, the classical Singleton-like bound (see \Cref{Singleform}) on $C$,
	\[ \dim C + d_H(C) + \left( \left\lceil \frac{\dim C}{r}\right\rceil - 1 \right) (\delta - 1) \leq n+1, \]
	and the fact that $\dim(C)=\frac{n+k-c}{2}$.
\end{proof}

\begin{rem}
We say that an entanglement-assisted quantum $(r,\delta)$-LRC $Q(C)$ as above is \textit{pure} whenever its minimum distance equals $d_H(C)$. The Singleton-like inequality in \Cref{eq:EASing} depends on $d_H(C)$, and, analogously to \cite[Remark 34]{qlrc24}, for non-pure codes, the bound could be exceeded when replacing $d_H(C)$ with the actual minimum distance of $Q(C)$. The pure case allows us to state the following definition.
\end{rem}

\begin{defi}
	A pure EA quantum $(r,\delta)$-LRC that attains the bound in \Cref{eq:EASing} is said to be \textit{optimal}.
\end{defi}

\begin{rem}\label{rem:breakSingleton}
A family of impure (not entanglement-assisted) quantum $(r,\delta)$-LRCs exceeding the quantum Singleton-like bound \eqref{eq:SingletonQ}, or \eqref{eq:EASing} with $c=0$, was given in \cite{impureqlrcs}. Some parameters of codes in this family are $[[9,1,4;0]]_3$ and $[[15,1,6;0]]_5$, both with $(r,\delta)=(2,2)$, and $[[64,4,16;0]]_8$ with $(r,\delta)=(5,4)$. This family was obtained from the CSS construction.
\end{rem}

%\color{blue}
\begin{rem}\label{rem:hopelessAGcodes}
Suppose that  $C$ in \Cref{coro3} comes as an algebraic geometry code, that the local distance $\delta$ is upper bounded by the Goppa bound on the minimum Hamming distance of $C$ and that $C \neq C\cap C^{\perp_e}$. Then, the Goppa bound on $C\cap C^{\perp_e}$ is strictly greater than $\delta$ and Assumption \ref{assumption2} in \Cref{coro3}
holds. This means that a straightforward use of algebraic geometry codes cannot provide a quantum code exceeding \Cref{coro3} if its quantum distance is evaluated by $d_H(C)$. In our previous paper \cite{impureqlrcs}, we were able to construct quantum codes whose quantum distances exceed \Cref{coro3} by using the decreasing monomial-Cartesian codes \cite{camps21,Geil13} and the Feng-Rao bound on the quantum distance $d_H(C \setminus (C\cap C^{\perp_e}))$.
\end{rem}
%\color{black}

In the following sections, we present examples of optimal pure entanglement-assisted quantum $(r,\delta)$-LRCs coming from different families of classical codes: bivariate $J$-affine variety codes, BCH codes, and homothetic-BCH codes.

\section{EA $(r,\delta)$-QLRCs from bivariate $J$-affine-variety codes}

\label{sec:Jaf}

$J$-affine variety codes were introduced in \cite{QINP2}, and they are monomial-Cartesian codes as indicated in \cite{GFMC}. These codes evaluate multivariate polynomials, but in this section, we consider only the bivariate case.

Let us recall the definition in our setting. As above, $q$ is a power of a prime number $p$, and we will use the Euclidean inner product. Consider the polynomial ring $\mathbb{F}_q[X_1,X_2]$ and let $J \subseteq \{1,2\}$ be a set of indices indicating that the variable $X_j$, $j\in J$, is not evaluated at $0\in \mathbb{F}_q$.

Pick two integers $n_1$ and $n_2$ such that $n_j$ divides $q-1$ for $j\in J$, and $n_j-1$ divides $q-1$ otherwise. Let $\mathcal{I}$ be the ideal of $\mathbb{F}_q[X_1,X_2]$ generated by the binomials $X_j^{n_j}- 1$ for $j\in J$ and $X_j^{n_j}- X_j$ otherwise. Set
$$P = \{\boldsymbol{\alpha}_1,\dots,\boldsymbol{\alpha}_n\} \subseteq \mathbb{F}_q^2$$
to be the zero-set of $\mathcal{I}$, $n:=n_1n_2$. Let
$$E=\{0,1,\dots,n_1-1\}\times\{0,1,\dots,n_2-1\}$$
be the set of pairs corresponding to exponents of the monomials whose classes are evaluated by the linear evaluation map
$$
\mathrm{ev}_P: \mathcal{R}:= \frac{\mathbb{F}_q[X_1,X_2]}{\mathcal{I}} \rightarrow \mathbb{F}_q^{n}, \quad \mathrm{ev}_P(f)=\left(f(\boldsymbol{\alpha}_1),\dots,f(\boldsymbol{\alpha}_n)\right).
$$
Thus, for each set $\emptyset \neq \Delta \subseteq E$, the \textit{$J$-affine variety code} is the following $\mathbb{F}_q$-linear space:
$$C_\Delta^{P,J}:=\mathrm{span} \left\{\mathrm{ev}_P(X_1^{e_1}X_2^{e_2}) \;: \; (e_1,\,e_2)\in \Delta \right\} \subseteq \mathbb{F}_q^n.$$

\begin{pro}\label{rect}
	Keep the above notation. Set $J=\{1\}$. Let $n_1$ and $n_2$ be two integers such that $n_1, n_2-1 \mid q-1$ and $p \mid n_2$. Let $a, b$ be non-negative integers such that $$2a+2b \leq n_1-1 \; \mbox{ and } \; 2a +b \leq n_1 -2.$$ Consider the following set depicted in Figure \ref{fig:rect} (a):
	\begin{align*}
		\Delta_{a,b}:=&\left\{(e_1,e_2) \in E : 0\leq e_1 \leq a+b,\, 0\leq e_2\leq n_2-1\right\} \cup \\
		& \left\{(e_1,e_2) \in E : n_1-a \leq e_1 \leq n_1-1,\, 0\leq e_2\leq n_2-1\right\}
		\subseteq E.
	\end{align*}
	Then, the code $C_{\Delta_{a,b}}^{P,J}$ is an optimal $(r,\delta)=(2a+b+1,n_1-2a-b)$-LRC and gives rise to an optimal pure entanglement-assisted quantum $(r,\delta)$-LRC, $Q\left(C_{\Delta_{a,b}}^{P,J}\right)$, with parameters
	$$[[n_1n_2, (2a+1)n_2, n_1-(2a+b); (n_1-2(a+b)-1)n_2]]_q$$
	and locality $(r,\delta)=(2a+b+1,n_1-2a-b)$.

	Moreover, there exists an optimal pure EA $(r,\delta)$-QLRC with parameters as above, but with the roles of $n_1$ and $n_2$ interchanged.
\end{pro}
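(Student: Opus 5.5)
The plan is to exploit the product structure of $C:=C_{\Delta_{a,b}}^{P,J}$. Since $J=\{1\}$, the first coordinate runs over the group $U_{n_1}$ of $n_1$-th roots of unity. The second coordinate runs over the $n_2$ roots of $X_2^{n_2}-X_2$, namely $0$ together with the $(n_2-1)$-th roots of unity. Because $\Delta_{a,b}=A\times\{0,\dots,n_2-1\}$ with
\[
A=\{0,\dots,a+b\}\cup\{n_1-a,\dots,n_1-1\},
\]
we get $C=C_1\otimes\F_q^{n_2}$. Here $C_1\subseteq\F_q^{n_1}$ is the evaluation code of $\{X_1^e: e\in A\}$ on $U_{n_1}$, and the second factor is the whole space, since the monomials $X_2^{e_2}$, $0\le e_2\le n_2-1$, evaluated on $n_2$ distinct points span $\F_q^{n_2}$.

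Modulo $n_1$, $A$ is the run of $2a+b+1$ consecutive exponents $n_1-a,\dots,n_1-1,0,\dots,a+b$. Hence $C_1$ is a (generalized) Reed--Solomon code with parameters $[n_1,\,2a+b+1,\,n_1-2a-b]_q$, by the BCH bound or the Vandermonde argument. Consequently $C$ has dimension $(2a+b+1)n_2$ and minimum distance $n_1-2a-b$.

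For locality, take as recovery set of a position the line $K=U_{n_1}\times\{\beta\}$ through it. Then $\pi_K(C)=C_1$ is MDS with distance $\delta=n_1-2a-b\ge 2$ (this uses $2a+b\le n_1-2$). Moreover $\card K=n_1=r+\delta-1$ with $r=2a+b+1$. So $C$ is an $(r,\delta)$-LRC, and by \Cref{coro1} $Q(C)$ is an EA $(r,\delta)$-QLRC.

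Optimality of $C$ in \eqref{Singleform} is a one-line check. With $\lceil\dim C/r\rceil=n_2$,
\[
rn_2+\delta+(n_2-1)(\delta-1)=n_2(r+\delta-1)+1=n+1 .
\]
Since $\dim C=(n+k-c)/2$, the same equality gives equality in \eqref{eq:EASing} once $k$ and $c$ are computed.

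Next comes the hull. The identity $\sum_{x\in U_{n_1}}x^{m}=0$ unless $n_1\mid m$ gives $C_1^{\perp_e}=\mathrm{ev}\{X_1^{e}: -e\bmod n_1\notin A\}$. Since the second factor is the full space, $C^{\perp_e}=C_1^{\perp_e}\otimes\F_q^{n_2}$ and $C\cap C^{\perp_e}=(C_1\cap C_1^{\perp_e})\otimes\F_q^{n_2}$. Computing $-A=\{0,1,\dots,a\}\cup\{n_1-a-b,\dots,n_1-1\}$ and using $2a+2b\le n_1-1$ (so that these ranges do not wrap), one finds $A\setminus(-A)=\{a+1,\dots,a+b\}$. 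Hence the hull has dimension $bn_2$, and
\[
c=\dim C^{\perp_e}-\dim(C\cap C^{\perp_e})=(n_1-2a-b-1)n_2-bn_2=(n_1-2(a+b)-1)n_2 .
\]
Then
\[
k=2\dim C-n+c=2(2a+b+1)n_2-n_1n_2+(n_1-2a-2b-1)n_2=(2a+1)n_2,
\]
as claimed.

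For purity I will show $d_H(C\setminus(C\cap C^{\perp_e}))=d_H(C)$. The hull $C_1\cap C_1^{\perp_e}$ is generated by $b$ consecutive exponents, so its distance is $n_1-b+1>n_1-2a-b$. Therefore a minimum-weight word of $C_1$ placed on a single line lies outside the hull, and the EAQECC distance equals $\delta=d_H(C)$.

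For the ``moreover'' part I would take $\Delta=\{0,\dots,n_1-1\}\times A'$ with the analogous set $A'$ of exponents in $X_2$, so that the local codes now live on the $n_2$ points of $\{0\}\cup U_{n_2-1}$. This is the step I expect to be the main obstacle, because the orthogonality relations change. Over $\{0\}\cup U_{n_2-1}$ one has $\sum_x x^{m}=n_2\cdot 1=0$ for $m=0$ (this is exactly where $p\mid n_2$ is used). For $m>0$ one gets $-1$ when $(n_2-1)\mid m$ and $0$ otherwise. Thus the pairing of exponents becomes $e+e'\in\{n_2-1,2(n_2-1)\}$ rather than $e+e'\equiv 0$. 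I would redo the dual and hull computation with this pairing, choosing $A'$ as the consecutive run adapted to $\{0,\dots,n_2-1\}$ (with the roles of $n_1,n_2$ in the hypotheses swapped accordingly), and verify that the hull again has dimension $bn_1$. The dimension, distance (via the BCH bound on the affine line), locality, purity and optimality arguments then go through verbatim.
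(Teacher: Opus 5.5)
Your argument for the first statement is correct, but it takes a different and more self-contained route than the paper. The paper multiplies by $\mathrm{ev}_P(X_1^a)$ to pass isometrically to the rectangle $\{0,\dots,2a+b\}\times\{0,\dots,n_2-1\}$. It then cites general results on $J$-affine variety (monomial-Cartesian) codes for the minimum distance and the optimal locality, and a general duality formula for $J$-affine variety codes to obtain $\Delta^\perp$ and $\Delta_\cap$. Purity comes from comparing $n_1-2a-b$ with the hull distance $n_1-b+1$, exactly as you do. Your observation $C=C_1\otimes\mathbb{F}_q^{n_2}$ instead reduces everything to the univariate GRS code $C_1$ on $U_{n_1}$. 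The distance, the lines $U_{n_1}\times\{\beta\}$ as recovery sets, $C^{\perp_e}=C_1^{\perp_e}\otimes\mathbb{F}_q^{n_2}$ and the hull exponents $\{a+1,\dots,a+b\}$ all follow by hand, and your values of $c$ and $k$ agree with the paper's. As a by-product, your route shows that $p\mid n_2$ plays no role in the first statement, since the second tensor factor is the whole space and has zero dual. The paper's route, in exchange, places the code inside a general machinery that also handles non-product defining sets.

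The gap is in the ``moreover'' part. In the paper, ``roles interchanged'' refers to the whole statement, so this part is a pure relabeling. One takes $J=\{2\}$ and $\Delta^\sigma_{a,b}=\{0,\dots,n_1-1\}\times(\{0,\dots,a+b\}\cup\{n_2-a,\dots,n_2-1\})$, with all hypotheses swapped: $n_2,\ n_1-1\mid q-1$, $p\mid n_1$, $2a+2b\le n_2-1$, $2a+b\le n_2-2$. Then $X_2$ runs over the $n_2$-th roots of unity and $X_1$ over $\{0\}\cup U_{n_1-1}$, and your first argument applies verbatim with the two tensor factors exchanged.

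You instead keep $J=\{1\}$ and the original point set, so the local codes live on $\{0\}\cup U_{n_2-1}$. The plan you sketch there does not go through:
\begin{itemize}
\item A wrap-around run imitating $A$ contains both $0$ and $n_2-1$ as soon as $a\ge 1$. It therefore contains $\mathrm{ev}(X_2^{n_2-1}-1)$, which has weight one, so the local code, and hence $C$, has minimum distance one.
\item The non-wrapping run $A'=\{0,\dots,2a+b\}$ does give an MDS local code. But with your own pairing $e+e'\in\{n_2-1,2(n_2-1)\}$, its dual is spanned by the $X_2^{e'}$ with $0\le e'\le n_2-2a-b-2$. The local hull then has dimension $\min\{2a+b+1,\,n_2-2a-b-1\}$, which equals $b$ only when $n_2=2a+2b+1$.
\end{itemize}
For instance, with $q=n_2=9$, $a=0$, $b=1$, the local code spanned by $1,X_2$ is self-orthogonal. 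This gives $c=5n_1$ and $k=0$ instead of the required $c=6n_1$ and $k=n_1$.

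So the step ``verify that the hull again has dimension $bn_1$'' fails, and the claim that the remaining arguments transfer verbatim is unsupported. Replace this part by the symmetric relabeling described above.
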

\begin{proof}
Let us prove the first statement. The second statement holds by symmetry after considering the following set (depicted in Figure \ref{fig:rect} (b)):
	\begin{align*}
		\Delta_{a,b}^\sigma:=&\left\{(e_1,e_2) \in E : 0\leq e_1\leq n_1-1 ,\, 0\leq e_2 \leq a+b \right\} \cup \\
		& \left\{(e_1,e_2) \in E : 0\leq e_1\leq n_1-1 ,\, n_2-a \leq e_2 \leq n_2-1\right\}
		\subseteq E
	\end{align*}
and replacing $(\{1\},n_1,n_2)$ in the above statement with $(\{2\},n_2,n_1)$.

We start by recalling that the length of $C_{\Delta_{a,b}}^{P,J}$ equals $\mathrm{card}(E)=n_1n_2$ and its dimension is equal to the cardinality of $\Delta_{a,b}$, i.e., $(2a+b+1)n_2$. The minimum distance of $C_{\Delta_{a,b}}^{P,J}$ is equal to the minimum distance of the code $C_{\Delta^1_{2a+b,n_2-1}}^{P,J}$, where the set $\Delta^1_{2a+b,n_2-1}$ is that defined in \cite[Proposition 4.1]{GFMC}, because they are isometric codes. Let us compute the minimum distance. As in the proof of \cite[Lemma 5.5]{GFMC}, $0 \in \mathbb{F}_q$ is not selected as an evaluation point for $X_1$, and codewords in $C_{\Delta^1_{2a+b,n_2-1}}^{P,J}$ are of the form
		$$\mathrm{ev}_P(X_1^a f)=\mathrm{ev}_P(X_1^a) * \mathrm{ev}_P(f),$$
		where $*$ denotes the Schur (or star/componentwise) product in $\mathbb{F}_q^n$, and $$f\in \left\{g \in \mathcal{R} : \mathrm{supp}(g) \subseteq \Delta_{a,b}\right\} \cup \left\{0\right\},$$ with $\mathrm{supp}(g)$ denoting the set of exponent vectors of the monomials of $g$. Then, the minimum distance of $C_{\Delta^1_{2a+b,n_2-1}}^{P,J}$, and therefore that of $C_{\Delta_{a,b}}^{P,J}$, equals $n_1-(2a+b)$ by \cite[Corollary 3.7 and Remark 3.8]{GFMC}.
		
The fact that $C_{\Delta_{a,b}}^{P,J}$ is an optimal $(r,\delta)=(2a+b+1,n_1-2a-b)$-LRC follows from \cite[Proposition 3.10]{GFMC}. Note that the MDS condition required to obtain the exact values of $(r,\delta)$ (the second statement of \cite[Proposition 3.10]{GFMC}) follows by arguing as above and applying the proof of \cite[Lemma 5.5]{GFMC}.
		
Now, \cite[Proposition 2.2]{QINP2} shows that, under the conditions in the statement, $\left(C_{\Delta_{a,b}}^{P,J}\right)^{\perp_e}=C_{\Delta^\perp}^{P,J}$, where
$$\Delta^\perp=\left\{(e_1,e_2) \in E : a+1\leq e_1\leq n_1-a-b-1,\, 0 \leq e_2 \leq n_2-1\right\}.$$
The intersection $C_{\Delta_{a,b}}^{P,J} \cap \left(C_{\Delta_{a,b}}^{P,J}\right)^{\perp_e}=C_{\Delta_\cap}^{P,J}$ is given by the set
$$\Delta_\cap=\left\{(e_1,e_2) \in E : a+1\leq e_1\leq a+b,\, 0 \leq e_2 \leq n_2-1\right\}.$$
Thus, $\dim \left(C_{\Delta_{a,b}}^{P,J} \right)^{\perp_e} = \left(n_1-2a-b-1\right)n_2$ and $\dim \left( C_{\Delta_{a,b}}^{P,J} \cap \left(C_{\Delta_{a,b}}^{P,J}\right)^{\perp_e}\right) =b n_2$.
		
Reasoning as in the second paragraph of this proof, the minimum distance of $C_{\Delta_\cap}^{P,J}$ is equal to the minimum distance of $C_{\Delta^1_{b-1,n_2-1}}^{P,J}$, which equals $n_1-b+1$. When $b \neq 0$ $$d\left(C_{\Delta_{a,b}}^{P,J}\right)=n_1-(2a+b) < n_1-b+1=d \left(C_{\Delta_{a,b}}^{P,J} \cap \left(C_{\Delta_{a,b}}^{P,J}\right)^{\perp_e}\right),$$ which proves that the derived quantum code $Q\left(C_{\Delta_{a,b}}^{P,J}\right)$ is pure. Note that purity remains valid when $b=0$, because then $C_{\Delta_\cap}^{P,J} = \{\mathbf{0}\}$ and $d(\{\mathbf{0}\}) = \infty$.
		
Finally, the parameters of the quantum code $Q\left(C_{\Delta_{a,b}}^{P,J}\right)$ can be computed from the expression given before the statement of \Cref{coro1}. Since Item (1) in \Cref{coro3} holds and $C_{\Delta_{a,b}}^{P,J}$ is an optimal $(r,\delta)$-LRC, one obtains that $Q\left(C_{\Delta_{a,b}}^{P,J}\right)$ is an optimal pure entanglement-assisted quantum $(r,\delta)$-LRC.
\end{proof}

\begin{figure}[h]
	\centering
	\begin{subfigure}[b]{0.45\textwidth}
		\centering
		\begin{tikzpicture}[y=0.7cm, x=0.7cm,font=\normalsize]
			
			\filldraw[fill=gray!30] (0,0) rectangle (2,6);
			\filldraw[fill=gray!30] (4,0) rectangle (6,6);
			\draw (2,0) -- (6,0);
			
			\node [below] at (0,0) {\scriptsize$0$};
			\node [below] at (1,0) {$\dots$};
			\node [below] at (2,0) {\scriptsize$a+b$};
			\node [below] at (3,0) {$\dots$};
			\node [below] at (4,0) {\scriptsize$n_1-a$};
			\node [below] at (5,0) {\scriptsize$\dots$};
			\node [below] at (6,0) {\scriptsize$n_1-1$};
			\node [left] at (0,0) {\scriptsize$0$};
			\node [left] at (0,3) {\scriptsize$\vdots$};
			\node [left] at (0,6) {\scriptsize$n_2-1$};
			
		\end{tikzpicture}
		\caption{The set $\Delta_{a,b}$}
	\end{subfigure}
	\hfill
	\begin{subfigure}[b]{0.45\textwidth}
		\centering
		\begin{tikzpicture}[y=0.7cm, x=0.7cm,font=\normalsize]
			
			\filldraw[fill=gray!30] (0,0) rectangle (6,2);
			\filldraw[fill=gray!30] (0,4) rectangle (6,6);
			\draw (0,2) -- (0,4);
			
			\node [below] at (0,0) {\scriptsize$0$};
			\node [below] at (3,0) {$\dots$};
			\node [below] at (6,0) {\scriptsize$n_1-1$};
			\node [left] at (0,0) {\scriptsize$0$};
			\node [left] at (0,1) {$\vdots$};
			\node [left] at (0,2) {\scriptsize$a+b$};
			\node [left] at (0,3) {$\vdots$};
			\node [left] at (0,4) {\scriptsize$n_2-a$};
			\node [left] at (0,5) {$\vdots$};
			\node [left] at (0,6) {\scriptsize$n_2-1$};
		\end{tikzpicture}
		\caption{The set $\Delta_{a,b}^\sigma$}
	\end{subfigure}
	\caption{Sets $\Delta_{a,b}$ and $\Delta_{a,b}^\sigma$ in Proposition \ref{rect}}
	\label{fig:rect}
\end{figure}

\begin{exa}
	Consider $J=\{1\}$ and the values $q=7$, $n_1=6$, $n_2=7$, and $a=b=1$. Let $\Delta_{1,1}$ be as defined in Proposition \ref{rect}. Since $2a+b=3 \leq 4=n_1-2$ and $2a+2b=4 \leq 5=n_1-1$, Proposition \ref{rect} shows that $Q\left(C_{\Delta_{a,b}}^{P,J}\right)$ is an optimal quantum $(r, \delta)= (4,3)$-LRC with parameters $[[42,21,3;7]]_7$, and therefore its parameters and locality reach the bound in \eqref{eq:EASing}.
\end{exa}

\section{EA $(r,\delta)$-QLRCs from BCH and homothetic-BCH codes}\label{sec:BCHhomo}

In this section, we consider BCH and homothetic-BCH codes as evaluation codes, as has been previously done in \cite{hbch25,bchqlrc}. Let us introduce them.

Recall that $q$ is a prime power and let $s \geq 2$ be an integer. BCH codes will be viewed as $q$-ary (or $q^2$-ary in the case $s=2\varsigma$, where $\varsigma$ is a positive integer) subfield-subcodes of univariate $q^s$-ary $\{1\}$-affine variety codes.

Let $N$ be an integer that divides $q^s-1$, and let $\mathcal{I}$ be the ideal of the polynomial ring $\F_{q^s}[X]$ generated by $X^N - 1$. Denote by $\beta \in \F_{q^s}$ a primitive $N$-th root of unity and let $U(N)=\{1,\beta,\dots,\beta^{N-1}\}$ be the zero set of $\mathcal{I}$. Consider the set $\Z_N:=\{0,1,\dots,N-1\}$ of representatives of the ring $\Z/N\Z$ of congruences modulo $N$, corresponding to exponents of the monomials whose classes are evaluated using the linear evaluation map
$$
\ev_{U(N)}:  \frac{\F_{q^s}[X]}{\mathcal{I}} \rightarrow \F_{q^s}^N \textrm{, } \quad \ev_{U(N)}(f)=\left(f(1), f(\beta),\dots,f(\beta^{N-1})\right).
$$
Thus, for each set $ \emptyset \neq \Delta \subseteq \Z_N$, the \textit{(univariate) $\{1\}$-affine variety code} of length $N$ given by $\Delta$ is the following $\F_{q^s}$-linear space:
$$C_\Delta^N:=\spn \left\{\ev_{U(N)}(X^e) \;: \; e\in \Delta \right\} \subseteq \F_{q^s}^N.$$

Let $a\neq 0$ and $b\geq 2$ be relatively prime integers, and let $S$ be a subset of $\Z_b$. The set $S$ is said to be \textit{$a$-complete modulo $b$} if $ac \pmod b \in S$ for all $c \in S$. Sets that are $a$-complete modulo $b$ are unions of $a$-cyclotomic cosets modulo $b$. BCH codes can be defined as follows (see \cite{Bier,Cas}).

\begin{defi}\label{def:BCH}
	Let $\Delta \subseteq \Z_N$ be a $q$-complete modulo $N$ set. The subfield-subcode $C_\Delta^N \cap \F_q^N : = \BCH_q^N(\Delta)$ is called the $q$-ary \textit{BCH code} of length $N$ given by $\Delta$.
\end{defi}

\sloppy Let $t$ denote the maximum number of consecutive elements in $\Delta$. Then, by \cite{Bier,Cas}, the code $\BCH_q^N(\Delta)$ has parameters $[N,\card \Delta]_q$ and its Euclidean dual $\left(\BCH_q^N(\Delta)\right)^{\perp_e}$ satisfies $\dis \left( \left(\BCH_q^N(\Delta)\right)^{\perp_e} \right) \geq t+1$.

\begin{rem}\label{rem:2s}
	Notice that when $s=2\varsigma$, $q^2$-ary BCH codes are defined simply by setting $q^2$ instead of $q$ in the above definition. The minimum distance bound holds by considering the Hermitian dual instead of the Euclidean dual in the above paragraph.
\end{rem}

Homothetic-BCH codes were introduced in \cite{hbch25}. They are obtained as an enlargement of BCH codes, resulting in codes with lengths that cannot be obtained by any standard BCH code as above. Let us introduce them.

Consider positive integers $n$ and $\lambda$ such that $n$ divides $N$, $\lambda < \frac{N}{n}$, and $\lambda n$ does not divide $N$. Recall that $\beta \in \F_{q^s}$ denotes a primitive $N$-th root of unity and set $\zeta_n:= \beta^{\frac{N}{n}}$. Let $P$ be the set
\begin{multline*}
P:= \{1,\zeta_n,\dots,\zeta_n^{n-1}, \beta,\beta\zeta_n,\dots,\beta\zeta_n^{n-1}, \dots, \beta^{\lambda-1},\beta^{\lambda-1}\zeta_n,\dots,\beta^{\lambda-1}\zeta_n^{n-1}\}= \\
\{\alpha_1,\dots,\alpha_{\lambda n}\},
\end{multline*}
whose vanishing ideal (in $\F_{q^s}[X]$) is
$$\mathcal{J}= \spn \{(X^n-1)(X^n-\beta^n) \cdots (X^n-\beta^{(\lambda-1)n}) \} \subseteq \F_{q^s}[X].$$
As before, the linear evaluation map
\begin{equation}\label{homoevamap}
\ev_P:  \frac{\F_{q^s}[X]}{\mathcal{J}} \rightarrow \F_{q^s}^{\lambda n} \textrm{, } \quad \ev_P(f)=\left(f(\alpha_1),\dots,f(\alpha_{\lambda n})\right)
\end{equation}
allows us to define homothetic-BCH codes as the following class of evaluation codes.

\begin{defi}
	Keep the above notation and let $\Delta \subseteq \Z_N$. The \textit{homothetic evaluation code} given by $\Delta$ is the $q^s$-ary code of length $\lambda n$:
	$$\mathcal{H}_\Delta^P:= \spn \{\ev_P(X^e) : e\in\Delta\} \subseteq \F_{q^s}^{\lambda n}.$$
	The $q$-ary \textit{homothetic-BCH (H-BCH) code} given by $\Delta$ is its subfield-subcode:
	$$\mathcal{S}_\Delta^{P,q}:= \mathcal{H}_\Delta^P \cap \F_q^{\lambda n}.$$
\end{defi}

Let $\Delta \subseteq \Z_N$ be a $q$-complete modulo $N$ set and let $t$ denote the maximum number of consecutive elements in $\Delta$. Then, by \cite{hbch25}, the H-BCH code $\mathcal{S}_\Delta^{P,q}$ has parameters $[\lambda n, \leq \card \Delta]_q$ and its Euclidean dual $\left(\mathcal{S}_\Delta^{P,q}\right)^{\perp_e}$ satisfies $\dis \left( \left(\mathcal{S}_\Delta^{P,q}\right)^{\perp_e} \right) \geq t+1$.

\begin{rem}
	Similar to \Cref{rem:2s}, notice that when $s=2\varsigma$, one can also introduce $q^2$-ary H-BCH codes by setting $q^2$ instead of $q$ in the above definition. A bound as in the previous paragraph also holds by setting the Hermitian dual instead of the Euclidean dual.
\end{rem}

\subsection{EA $(r,\delta)$-QLRCs from BCH codes}\label{subsec:BCH}

Keep the notation as above and consider the set
$$\mathcal{A}:=\{0,n,2n,\dots,N-n\}\subset \Z_N.$$
The following proposition recalls some facts regarding this set. They were proved in \cite[Propositions 18 and 20]{bchqlrc}.

\begin{pro}\label{prop1}
	Let $n$ and $N$ be as above and let $a$ be a positive integer relatively prime to $N$. Then, the following statements hold.
	\begin{enumerate}
		\item The set $\mathcal{A}$ is $a$-complete modulo $N$.
		\item The set $\mathcal{A}+B$ is $a$-complete modulo $N$ for any $a$-complete modulo $n$ set $B$.
		\item $\dis\left(C_{\mathcal{A}}^N\right)=n$.
	\end{enumerate}
\end{pro}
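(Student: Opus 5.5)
We prove the three items in turn, working directly with residues modulo $N$ and with the evaluation map $\ev_{U(N)}$. Throughout, $n \mid N$ and $\gcd(a,N)=1$; in particular $\gcd(a,n)=1$, so ``$a$-complete modulo $n$'' makes sense.

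\emph{Item (1).} Write $\mathcal{A}$ as the set of residues modulo $N$ that are divisible by $n$. This description is well defined because $n \mid N$. If $c=jn$, then $ac=(aj)n$, and reducing $(aj)n$ modulo $N$ gives $(aj \bmod N/n)\,n$. This is again a multiple of $n$ in $\Z_N$, so it lies in $\mathcal{A}$. Hence $\mathcal{A}$ is $a$-complete modulo $N$; no hypothesis on $a$ beyond being an integer is used here.

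\emph{Item (2).} Interpret $\mathcal{A}+B$ as the set of residues modulo $N$ of the form $jn+b$ with $b\in B\subseteq \Z_n$. Equivalently, it is the set of $x\in\Z_N$ with $x \bmod n \in B$, again using $n\mid N$. Take such an $x$. Then $(ax \bmod N) \bmod n = a x \bmod n = a(x \bmod n) \bmod n$, and this lies in $B$ because $B$ is $a$-complete modulo $n$. So $ax \bmod N \in \mathcal{A}+B$, which proves the claim. This reformulation as the preimage of $B$ under reduction modulo $n$ is the key point; once it is in place, everything is immediate.

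\emph{Item (3).} The code $C_{\mathcal{A}}^N$ is spanned by the vectors $\ev_{U(N)}(X^{jn})$ for $j=0,\dots,N/n-1$. Hence it consists of the evaluations of polynomials $g(X^n)$ with $\deg g \le N/n-1$. Put $m=N/n$. The map $x\mapsto x^n$ sends $U(N)$ onto the group $U(m)$ of $m$-th roots of unity, and each fibre has exactly $n$ points.

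\emph{Upper bound.} A nonzero $g$ of degree $\le m-1$ cannot vanish on all $m$ points of $U(m)$, so $g$ is nonzero at some $\gamma\in U(m)$. The $n$ preimages of $\gamma$ then give $n$ nonzero coordinates, so every nonzero codeword has weight at least $n$.

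\emph{Lower bound.} Choose $g(Y)=(Y^m-1)/(Y-1)$. It vanishes on $U(m)\setminus\{1\}$ and equals $m$ at $Y=1$. Now $m\not\equiv 0$ in $\F_{q^s}$, because $m$ divides $q^s-1$ and is therefore prime to $p$. So the weight of this codeword is exactly $n$.

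Together these give $\dis(C_{\mathcal{A}}^N)=n$. The only point requiring care is checking that the test codeword has weight exactly $n$, which is the nonvanishing of $m$ in characteristic $p$ shown above.
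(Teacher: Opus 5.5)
Your proof is correct. The paper gives no proof of its own for this proposition; it only refers to an earlier work, so there is no in-paper argument to compare against. Your weight-$n$ witness, $g(X^n)=\sum_{j=0}^{N/n-1}X^{jn}=(X^N-1)/(X^n-1)$, is exactly the polynomial the paper later uses in the proof of Proposition~\ref{pro:purity}. You also correctly check that $N/n\neq 0$ in $\F_{q^s}$.

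One cosmetic slip in item (3): your labels are swapped. The argument headed ``Upper bound'' shows that every nonzero codeword has weight at least $n$, which is $\dis(C_{\mathcal{A}}^N)\ge n$. The one headed ``Lower bound'' exhibits a codeword of weight $n$, which is $\dis(C_{\mathcal{A}}^N)\le n$.

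If you want both directions at once, there is a shortcut. Evaluating polynomials of degree at most $m-1$ at the $m$ points of $U(m)$ is a bijection. Hence $C_{\mathcal{A}}^N$ is precisely the set of words whose coordinate at $\beta^i$ depends only on $i \bmod m$. Every nonzero codeword therefore has weight a positive multiple of $n$, and weight $n$ is attained.
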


The next proposition was also proved in \cite{bchqlrc}, originating from \cite[Theorem 12]{QZF2021}. It allows us to regard (Euclidean or Hermitian) duals of BCH codes as $(r,\delta)$-LRCs.

\begin{pro}\label{prop:qui}
   Let $A \subseteq \Z_N$ be a $q$-complete modulo $N$ set and let $B\subseteq \Z_n$ be a $q$-complete modulo $n$ set. Let $d_A$ and $d_B^{\perp_e}$ denote, respectively, the minimum distances of the codes $C_A^N$ and $\left(C_B^N\right)^{\perp_e}$. Consider a positive integer $\delta$ such that $\delta \leq \min \left\{d_A, d_B^{\perp_e}\right\}$. Then, the code $\left(\BCH_q^N(A+B)\right)^{\perp_e}$ is a $q$-ary $(r,\delta)$-LRC with locality $\left(d_A-\delta+1,\delta\right)$.

   Additionally, in the case $s=2\varsigma$, the above statement also holds setting $\perp_h$ instead of $\perp_e$ and $q^2$ instead of $q$.
\end{pro}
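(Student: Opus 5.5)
We will exhibit, for each position, a recovery set coming from the coset structure of $\mathcal{A}$. For $j\in\{0,\dots,N/n-1\}$ we take
$$K_j:=\{\,j+\ell\,N/n : 0\le \ell \le n-1\,\}\subseteq\{0,\dots,N-1\},$$
the indices of the points $\beta^{j}\zeta_n^{\ell}$, that is, the coset $\beta^j\langle\zeta_n\rangle$. These sets partition the coordinates into blocks of size $n$. The plan rests on the standard fact that, by Delsarte's theorem, $\left(\BCH_q^N(A+B)\right)^{\perp_e}=\mathrm{tr}\left(\left(C^N_{A+B}\right)^{\perp_e}\right)$. It is convenient to use the equivalent check formulation: $C^{\perp}$ has locality $(r,\delta)$ with blocks $K_j$ as soon as, for each $j$, the code $\pi_{K_j}(C^{\perp})$ has minimum distance at least $\delta$ and dimension at most $\card K_j-\delta+1$. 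We will work with $\pi_{K_j}$ of codewords of the dual. However, the natural parameters are $r=d_A-\delta+1$ and $\card K=r+\delta-1=d_A$, so the recovery sets should have size $d_A$, not $n$. Accordingly, we first reduce: since $C_A^N$ has minimum distance $d_A$, choose a codeword of $C_A^N$ of weight $d_A$ together with its shifts and multiples, and use their supports as recovery sets.

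The main steps, in order, are as follows.

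First, show that for any $\mathbf a\in C^N_A$ and $\mathbf b\in C^N_B$ the Schur product $\mathbf a*\mathbf b$ lies in $C^N_{A+B}$, since $\ev(X^{e})*\ev(X^{e'})=\ev(X^{e+e'})$.

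Second, note that $B\subseteq\Z_n$, so the monomials $X^{b}$ with $b\in B$, restricted to the support $S=\mathrm{supp}(\mathbf a)$ of a minimum-weight word of $C_A^N$, give a code that is a puncturing of $C_B^N$. Every word of $C^{N}_{A+B}$ of the form $\mathbf a*\mathbf g$, with $\mathbf g\in C_B^N$, has support inside $S$.

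Third, dualise via Delsarte (trace). For $\mathbf y\in (\BCH(A+B))^{\perp_e}$ and each such $\mathbf a*\mathbf g$ we get $\sum_{i\in S}y_ia_ig_i=0$ after applying the trace over all $\F_{q^s}$-multiples, and conversely. Hence $(y_ia_i)_{i\in S}$ is orthogonal to $\pi_S(C_B^N)$. Since the $a_i$ are nonzero on $S$, $\pi_S\left(C^{\perp}\right)$ is, up to diagonal scaling, contained in $(\pi_S(C_B^N))^{\perp}=\sigma_S\left((C_B^N)^{\perp_e}\right)$. The latter has minimum distance at least $d_B^{\perp_e}\ge\delta$, as a shortening of $(C_B^N)^{\perp_e}$.

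Fourth, obtain the covering property. Using the cyclic invariance of $C_A^N$ (from the $q$-completeness of $A$, and more simply from multiplication of $f$ by $X^{m}$ twisting, or by substitution $X\mapsto\beta X$), the cyclic shifts of $S$ cover every coordinate. Each has cardinality $d_A=r+\delta-1$.

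The delicate point is the third step. We need the subfield/trace passage to deliver an honest minimum distance bound for the $q$-ary punctured dual, rather than only for the $q^s$-ary one. The projection of the $q$-ary code is contained in the $q^s$-ary object after scaling by the $a_i$, and a subset of a code of distance at least $\delta$ still has distance at least $\delta$, so containment alone suffices. We would nevertheless check carefully that Delsarte's identity gives the containment $\pi_S(\BCH^{\perp})\subseteq \mathbf a^{-1}*\sigma_S((C^N_B)^{\perp_e})$ over $\F_{q^s}$.

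The Hermitian case will follow identically, replacing $q$ by $q^2$ and $\perp_e$ by $\perp_h$, where conjugation by the $q$-th power preserves the relevant supports and $q^2$-completeness.
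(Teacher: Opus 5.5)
The paper does not prove this proposition; it imports it from \cite{bchqlrc}, which goes back to \cite[Theorem 12]{QZF2021}. Your Schur-product argument is a sound, self-contained alternative. Take a minimum-weight $\mathbf{a}\in C_A^N$ with support $S$. The inclusion $\mathbf{a}*C_B^N\subseteq C_{A+B}^N$ and the identity $(\pi_S(C_B^N))^{\perp_e}=\sigma_S\bigl((C_B^N)^{\perp_e}\bigr)$ give $d_H\bigl(\pi_S(\BCH_q^N(A+B)^{\perp_e})\bigr)\geq d_B^{\perp_e}\geq\delta$, with $\card(S)=d_A=r+\delta-1$. The cyclic shifts of $S$ then cover all coordinates. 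In the only case the paper uses, $A=\mathcal{A}$, the word $\ev_{U(N)}\bigl(\sum_{i=0}^{N/n-1}X^{in}\bigr)$ from the proof of \Cref{pro:purity} has weight $n=d_A$ and support $U(n)$. So your recovery sets are exactly the blocks $K_j$ you first wrote down, and that opening was not a wrong turn. Your extra condition $\dim\pi_{K_j}(\cdot)\leq\card(K_j)-\delta+1$ is automatic from the Singleton bound and can be dropped.

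The step you flagged as delicate, however, needs a hypothesis you never invoke. You want $\mathbf{y}\cdot_e(\mathbf{a}*\mathbf{g})=0$ for every $\mathbf{y}\in\BCH_q^N(A+B)^{\perp_e}$. Your trace-over-multiples argument gets this only if $\mathrm{tr}(\lambda\,\mathbf{a}*\mathbf{g})\in\BCH_q^N(A+B)$ for all $\lambda\in\F_{q^s}$. That holds when $C^N_{A+B}$ is stable under the componentwise Frobenius, since the trace then maps $C^N_{A+B}$ into $C^N_{A+B}\cap\F_q^N$. Frobenius-stability means exactly that $A+B$ is $q$-complete modulo $N$.

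This does not follow from $A$ being complete modulo $N$ and $B$ being complete modulo $n$, and without it the conclusion fails. Take $q=3$, $s=2$, $N=8$, $n=4$, $A=\{0\}$, $B=\{2\}$. Then $d_A=8$ and $d_B^{\perp_e}=2$, so $\delta=2$ is allowed. But $A+B=\{2\}$ is not $3$-complete modulo $8$, $C^8_{\{2\}}\cap\F_3^8=\{\mathbf{0}\}$, and its dual $\F_3^8$ has $d_H(\pi_K(\F_3^8))=1$ for every $K$. So you must use explicitly that $\BCH_q^N(A+B)$ is a BCH code in the sense of \Cref{def:BCH}; for $A=\mathcal{A}$ this is \Cref{prop1}(2).

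Relatedly, the cyclic invariance in your fourth step has nothing to do with completeness. The substitution $f(X)\mapsto f(\beta X)$ preserves $C^N_\Delta$ for every $\Delta$, so in your proof the completeness hypotheses enter only through $A+B$. With this in place the Hermitian case is immediate. $\BCH_{q^2}^N(A+B)^{\perp_h}$ is the componentwise $q$-th power of $\BCH_{q^2}^N(A+B)^{\perp_e}$, so it has the same punctured minimum distances on every $S$.
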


In the rest of the paper, for a set $B\subseteq \{0,1,\dots,n-1\}$ and an integer $\ell$, we define the set
$$\ell B := \left\{i \in \{0,1,\dots,n-1\} : i \equiv \ell b \pmod n \textrm{ for some } b\in B\right\}.$$
In particular, we denote the set $(-\ell)B$ as $-\ell B$ in what follows.

\begin{lem}\label{lem:Bdual}
	Keep the notation from the beginning of Subsection \ref{subsec:BCH}. Let $B \subseteq \Z_n$ be a $q$-complete modulo $n$ set. Then, the following statements hold.
	\begin{enumerate}
	\item The Euclidean dual $\left(\BCH_q^N(\mathcal{A}+B)\right)^{\perp_e}$ is the code
	$$\left(\BCH_q^N(\mathcal{A}+B)\right)^{\perp_e}=\BCH_q^N\left(\mathcal{A}+ \left(\{0, \ldots, n-1\}\setminus (-B)\right)\right).$$
	\item In the case $s=2\varsigma$, the Hermitian dual $\left(\BCH_{q^2}^N(\mathcal{A}+B)\right)^{\perp_h}$ is the code
	$$\left(\BCH_{q^2}^N(\mathcal{A}+B)\right)^{\perp_h} = \BCH_{q^2}^N\left(\mathcal{A}+ \left(\{0, \ldots, n-1\}\setminus(-qB)\right)\right).$$
	\end{enumerate}
\end{lem}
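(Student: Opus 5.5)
The plan is to use the standard description of duals of BCH codes in terms of defining sets. By the known duality result (Delsarte's theorem plus the evaluation-code description in \cite{Bier,Cas}), for a $q$-complete set $\Delta \subseteq \Z_N$ one has
$$\left(\BCH_q^N(\Delta)\right)^{\perp_e} = \BCH_q^N\left(\Z_N \setminus (-\Delta)\right),$$
where $-\Delta$ is taken modulo $N$. Here $-\Delta$ is again $q$-complete, because multiplication by $-1$ commutes with multiplication by $q$. I would first justify this formula. The starting point is that $\ev_{U(N)}(X^e)\cdot_e \ev_{U(N)}(X^{e'})=\sum_{j}\beta^{j(e+e')}$, which is nonzero exactly when $e+e'\equiv 0 \pmod N$. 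Hence $(C_\Delta^N)^{\perp_e}=C^N_{\Z_N\setminus(-\Delta)}$. Delsarte's theorem gives $(C\cap\F_q^N)^{\perp_e}=\tr(C^{\perp_e})$. For a $q$-complete defining set, the trace code of $C^N_{\Z_N\setminus(-\Delta)}$ coincides with its subfield subcode. This gives the displayed formula. I expect this part to be citable rather than reproved.

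The main step is then the combinatorial identity
$$\Z_N\setminus\bigl(-(\mathcal{A}+B)\bigr)=\mathcal{A}+\bigl(\{0,\dots,n-1\}\setminus(-B)\bigr),$$
where $-B$ is taken modulo $n$ as in the notation introduced before the lemma. Here $\mathcal{A}+B$ is understood modulo $N$, with $B\subseteq\Z_n$ viewed inside $\Z_N$. The key observation is that an element $x\in\Z_N$ lies in $\mathcal{A}+B$ if and only if $x \bmod n\in B$. This holds because $n\mid N$, so reduction modulo $n$ is well defined, $\mathcal{A}$ is exactly the kernel of $\Z_N\to\Z_n$, and $\mathcal{A}+B$ is the full preimage of $B$. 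It follows that $-(\mathcal{A}+B)$ is the preimage of $-B \pmod n$, and its complement is the preimage of $\Z_n\setminus(-B)$. That preimage is $\mathcal{A}+(\{0,\dots,n-1\}\setminus(-B))$. Combining this with the duality formula proves item (1). By Proposition \ref{prop1}(2), all the sets involved are $q$-complete, so the BCH codes are well defined.

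For item (2), I would use the Hermitian analogue. For $\mathbf{c}\in\F_{q^2}^N$, we have $\mathbf{c}^{\perp_h}$-duality $=$ Euclidean duality with the $q$-th power image, so $C^{\perp_h}=\left([C]^q\right)^{\perp_e}$. For the BCH code, $[\BCH_{q^2}^N(\Delta)]^q=\BCH_{q^2}^N(q\Delta)$. To see this, raise $\ev(f)$ to the $q$-th power: coordinatewise this turns $X^e$ into $X^{qe}$, with coefficients conjugated, and the result stays in $\F_{q^2}^N$. Then apply item (1) with $q^2$ in place of $q$ to $q\Delta = \mathcal{A}+qB$. This uses that $q\mathcal{A}=\mathcal{A}$ and that reduction modulo $n$ commutes with multiplication by $q$. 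The outcome is $\mathcal{A}+(\{0,\dots,n-1\}\setminus(-qB))$. Here $qB$ is $q^2$-complete since $B$ is, and $q$ is invertible modulo $N$.

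The main obstacle is the bookkeeping between residues modulo $N$ and modulo $n$, in particular making sure that $-B$ and $-qB$ are interpreted in $\Z_n$ and lifted correctly. The preimage description of $\mathcal{A}+B$ resolves this cleanly. Everything else is citing the standard duality of BCH codes.
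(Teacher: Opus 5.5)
Your proposal is correct and follows essentially the same route as the paper: invoke the standard duality $\left(\BCH_q^N(\Delta)\right)^{\perp_e}=\BCH_q^N\left(\Z_N\setminus(-\Delta)\right)$, then verify the set identity by observing that $\mathcal{A}+B$ is the full preimage of $B$ under reduction modulo $n$. The paper phrases that step as $N-(\mathcal{A}+B)=\mathcal{A}+(-B)$ together with the partition $\{0,\dots,N-1\}=(\mathcal{A}+(-B))\cup\left(\mathcal{A}+(\{0,\dots,n-1\}\setminus(-B))\right)$, and it only calls the Hermitian case analogous, whereas you spell it out via $C^{\perp_h}=\left([C]^q\right)^{\perp_e}$ and $\left[\BCH_{q^2}^N(\Delta)\right]^q=\BCH_{q^2}^N(q\Delta)$.
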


\begin{proof}
Let us prove the first claim. Let $\Delta\subseteq \{0,\dots,N-1\}$ be a $q$-complete modulo $N$ set. We denote $N-\Delta=\left\{N-e \pmod N : e\in \Delta\right\}$. Then, we have $\left(\BCH_q^N(\Delta)\right)^{\perp_e}=\BCH_q^N\left(\{0,\dots,N-1\} \setminus (N-\Delta)\right)$.

It is also easy to see that any $i \in \{0,\dots,N-1\}$ can be written as $i=a+r$, with $a\in\mathcal{A}$, $r\in\{0,\dots,n-1\}$ and that $\{0,\dots,N-1\} = (\mathcal{A}+(-B)) \cup \left(\mathcal{A}+\left(\{0,\dots,n-1\} \setminus (-B)\right)\right)$. Then, $N-\mathcal{A} = \mathcal{A}$, $N-(\mathcal{A}+B) = \mathcal{A} + (-B)$,
and
$$\{0, \dots, N-1\} \setminus (\mathcal{A}+(-B)) = \mathcal{A} + \left(\{0, \dots, n-1\}\setminus (-B)\right),$$
showing the first claim. The second one can be shown in an analogous way.
\end{proof}

Our goal is to construct EA $(r,\delta)$-qLRCs $Q'(C)$ from certain BCH codes $C$ of the form $\BCH_q^N(\mathcal{A}+B)$ (or $\BCH_{q^2}^N(\mathcal{A}+B)$), whose Euclidean (or Hermitian) duals are classical $(r,\delta)$-LRCs. To determine the amount of entanglement, we need the following proposition, which immediately follows from \Cref{lem:Bdual}.

\begin{pro}\label{prop:hulldim}
	Keep the above notation and let $B \subseteq \Z_n$ be a $q$-complete modulo $n$ set. Then, the following statements hold.
	\begin{enumerate}
		\item The Euclidean amount of entanglement
		$$c=\dim \left(\BCH_q^N(\mathcal{A}+B)\right) - \dim \left(\BCH_q^N(\mathcal{A}+B) \cap \left(\BCH_q^N(\mathcal{A}+B)\right)^{\perp_e}\right)$$
		is equal to $\frac{N}{n} \card(B \cap -B)$.
		\item In the case $s=2\varsigma$, the Hermitian amount of entanglement
		$$c=\dim \left(\BCH_{q^2}^N(\mathcal{A}+B)\right) - \dim \left(\BCH_{q^2}^N(\mathcal{A}+B) \cap \left(\BCH_{q^2}^N(\mathcal{A}+B)\right)^{\perp_h}\right)$$
		is equal to $\frac{N}{n} \card(B \cap (-qB))$.
	\end{enumerate}
\end{pro}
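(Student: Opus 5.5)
The plan is to use the fact that a BCH code defined by a $q$-complete set $\Delta$ has dimension $\card\Delta$. I would then identify both the code $C=\BCH_q^N(\mathcal{A}+B)$ and its intersection with its dual as BCH codes given by explicit defining sets of the form $\mathcal{A}+S$ with $S\subseteq\Z_n$.

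\textbf{Step 1: dimension of $C$.} By \Cref{prop1}(2), $\mathcal{A}+B$ is $q$-complete modulo $N$. Every element of $\Z_N$ has a unique decomposition $a+r$ with $a\in\mathcal{A}$ and $r\in\{0,\dots,n-1\}$, so the map $\mathcal{A}\times B\to\mathcal{A}+B$ is a bijection. Hence
$$\dim C=\card(\mathcal{A}+B)=\tfrac{N}{n}\card B.$$

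\textbf{Step 2: the intersection.} By \Cref{lem:Bdual}(1), $C^{\perp_e}=\BCH_q^N\bigl(\mathcal{A}+(\Z_n\setminus(-B))\bigr)$. For $q$-complete sets $\Delta_1,\Delta_2$ we have
$$\BCH_q^N(\Delta_1)\cap\BCH_q^N(\Delta_2)=\BCH_q^N(\Delta_1\cap\Delta_2).$$
This holds because $C_{\Delta_1}^N\cap C_{\Delta_2}^N=C_{\Delta_1\cap\Delta_2}^N$, since the evaluation vectors of monomials $X^e$, $e\in\Z_N$, form a basis of $\F_{q^s}^N$. Taking subfield subcodes then commutes with the intersection. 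The set $\Delta_1\cap\Delta_2$ is again $q$-complete, as an intersection of $q$-complete sets.

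By the uniqueness of the decomposition $a+r$,
$$(\mathcal{A}+B)\cap\bigl(\mathcal{A}+(\Z_n\setminus(-B))\bigr)=\mathcal{A}+\bigl(B\setminus(-B)\bigr).$$
The set $B\setminus(-B)$ is $q$-complete modulo $n$, because $-B$ is $q$-complete whenever $B$ is. Therefore
$$\dim(C\cap C^{\perp_e})=\tfrac{N}{n}\bigl(\card B-\card(B\cap -B)\bigr).$$
Subtracting this from Step 1 gives $c=\frac{N}{n}\card(B\cap -B)$.

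\textbf{Hermitian case.} The argument is identical. Here $q^2$-completeness replaces $q$-completeness, and \Cref{lem:Bdual}(2) replaces $-B$ by $-qB$. One checks that $-qB$ is $q^2$-complete modulo $n$, since multiplication by $q^2$ commutes with multiplication by $-q$. Then $B\setminus(-qB)$ is $q^2$-complete, and the count gives $\frac{N}{n}\card(B\cap(-qB))$.

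\textbf{Main obstacle.} The only point needing care is the identity for the intersection of subfield subcodes together with the dimension formula $\dim\BCH_q^N(\Delta)=\card\Delta$. Both are standard consequences of the trace and Galois-invariance characterization of BCH codes cited from \cite{Bier,Cas}, and I would invoke them directly.
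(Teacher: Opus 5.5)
Your argument is correct and follows the paper's route. The paper gives no separate proof and only says the proposition follows immediately from \Cref{lem:Bdual}; your steps supply exactly those details: $\dim \BCH_q^N(\mathcal{A}+B)=\frac{N}{n}\card B$, intersecting the defining sets $\mathcal{A}+B$ and $\mathcal{A}+(\Z_n\setminus(-B))$ to obtain $\mathcal{A}+(B\setminus(-B))$, subtracting, and in the Hermitian case using \Cref{lem:Bdual}(2) with $-qB$ in place of $-B$.
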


The following proposition will be useful to show the purity of our codes. The next lemma will be used in the proof.

\begin{lem}\label{lem:purity}
   Let $C \subseteq \F_q^n$ be an $[n,k, n-k+1]_q$ MDS code
   and let $D \subseteq C$ be any linear subcode.
   Then, we have either $\dis\left(C \setminus D\right) = \dis(C)$ or $D=C$.
\end{lem}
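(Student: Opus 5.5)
The plan is to argue by contraposition. Suppose $D \neq C$; we show that $\dis(C\setminus D)=\dis(C)=n-k+1$. Since $C\setminus D\subseteq C\setminus\{\mathbf{0}\}$ whenever $\mathbf{0}\in D$, we automatically have $\dis(C\setminus D)\geq \dis(C)$. So it suffices to exhibit a codeword of $C$ that lies outside $D$ and has weight exactly $n-k+1$. (If $D=\emptyset$ is excluded because $D$ is a linear subcode, then $\mathbf{0}\in D$ always.)

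The key tool is the standard structure of minimum-weight codewords in an MDS code. For $C$ an $[n,k,n-k+1]_q$ MDS code, any $k$ coordinates form an information set, so for every set $T$ of $k-1$ coordinates, the shortened subspace $C_T:=\{\mathbf{c}\in C : c_i=0 \text{ for all } i\in T\}$ is one-dimensional. Its nonzero elements have weight at most $n-k+1$, hence exactly $n-k+1$. Moreover, these one-dimensional spaces span $C$. To see this, fix an information set $S$ with $\card S=k$. For each $j\in S$, take $T=S\setminus\{j\}$. The generator $\mathbf{g}_j$ of $C_T$ vanishes on $S\setminus\{j\}$ and is nonzero at $j$, since otherwise it would vanish on an information set and be zero. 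Hence $\mathbf{g}_j$, $j\in S$, restrict on $S$ to a scaled standard basis, so they are linearly independent and form a basis of $C$. Each $\mathbf{g}_j$ has weight $n-k+1$.

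Now, if $D\neq C$, then $D$ cannot contain all of $\mathbf{g}_j$, $j\in S$, because these span $C$ and $D$ is a subspace. So some $\mathbf{g}_j\in C\setminus D$ has weight $n-k+1=\dis(C)$, giving $\dis(C\setminus D)\le \dis(C)$ and hence equality.

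The main point to handle carefully is the information-set property of MDS codes (any $k$ columns of a generator matrix are independent). This follows because a nonzero codeword vanishing on $k$ coordinates would have weight at most $n-k<\dis(C)$. We also need to treat the degenerate case $k=0$, where $C=\{\mathbf{0}\}$ forces $D=C$. The rest is routine.
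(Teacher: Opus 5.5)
Your proposal is correct and is essentially the paper's argument. Your vectors $\mathbf{g}_j$, $j\in S$, are (up to scaling) the rows of a systematic generator matrix with respect to the information set $S$. The paper likewise observes that these rows all have weight $n-k+1$ and span $C$, so a subspace containing every minimum-weight codeword must equal $C$; you additionally spell out the information-set property of MDS codes, which the paper uses implicitly when it writes $G=(I_{k\times k}\mid A)$.
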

\begin{proof}
   Assume $\dis\left(C \setminus D\right) \neq \dis(C)$, that is, $D$ contains all the minimum weight codewords in $C$,
   whose Hamming weight is $n-k+1$. Let $G = \left(I_{k\times k} \mid A_{k \times (n-k)}\right)$ be a systematic generator matrix of $C$. The Hamming weight of the codeword in every row of $G$ is exactly $n-k+1$, which shows that the rows of $G$ span $D$ and $D=C$, proving the statement.
\end{proof}

\begin{pro}\label{pro:purity}
    Keep the notation as above. Let $B \subseteq \Z_n$ be a nonempty set of consecutive
    integers which is also $q$-complete (respectively, $q^2$-complete in the case $s=2\varsigma$) modulo $n$.
    Then, we have either
    \begin{align*}
      & \dis\left(\left( \BCH_q^N(\mathcal{A}+B) \right)^{\perp_e}\right)\\
      =& \dis\left( \left( \BCH_q^N(\mathcal{A}+B) \right)^{\perp_e} \Big\backslash \left( \left( \BCH_q^N(\mathcal{A}+B) \right)^{\perp_e} \cap \BCH_q^N(\mathcal{A}+B)  \right)\right)
      \end{align*}
    or
    $$\left( \BCH_q^N(\mathcal{A}+B) \right)^{\perp_e} = \left( \BCH_q^N(\mathcal{A}+B) \right)^{\perp_e} \cap \BCH_q^N(\mathcal{A}+B)$$ (respectively,
    \begin{align*}
      & \dis\left(\left( \BCH_{q^2}^N(\mathcal{A}+B) \right)^{\perp_h}\right) \\
      = & \dis\left( \left( \BCH_{q^2}^N(\mathcal{A}+B) \right)^{\perp_h} \Big\backslash \left( \left( \BCH_{q^2}^N(\mathcal{A}+B) \right)^{\perp_h} \cap \BCH_{q^2}^N(\mathcal{A}+B)  \right)\right)
      \end{align*}
    or
    $$\left( \BCH_{q^2}^N(\mathcal{A}+B) \right)^{\perp_h} = \left( \BCH_{q^2}^N(\mathcal{A}+B) \right)^{\perp_h} \cap \BCH_{q^2}^N(\mathcal{A}+B)$$
    in the case $s=2\varsigma$).
    %\color{black}
  \end{pro}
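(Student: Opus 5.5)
The natural plan is to reduce everything to \Cref{lem:purity}, applied with $C=\left(\BCH_q^N(\mathcal{A}+B)\right)^{\perp_e}$ and $D=C\cap \BCH_q^N(\mathcal{A}+B)$. Then it suffices to show that $C$ is an MDS code. As stated, however, \Cref{lem:purity} is about MDS codes, whereas $C$ here is a BCH code of length $N$ that is generally not MDS. So I would instead use the structure coming from \Cref{lem:Bdual}.

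\textbf{Step 1: the product structure.} By \Cref{lem:Bdual}, $C=\BCH_q^N(\mathcal{A}+B')$ with $B'=\{0,\dots,n-1\}\setminus(-B)$. Since $B$ consists of consecutive integers, so do $-B$ and its complement $B'$, cyclically modulo $n$.

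The key observation is that a code of the form $C^N_{\mathcal{A}+B'}$ has a product structure. Index $U(N)$ by cosets of the subgroup $U(n)$ of $n$-th roots of unity. On the coset $\gamma U(n)$, the monomial $X^{jn+b}$ evaluates to $\gamma^{jn+b}\zeta^b$ for $\zeta\in U(n)$. Hence the restriction of the code to one coset is, up to a diagonal scaling, the length-$n$ cyclic code $C^n_{B'}$. This is a Reed--Solomon-type code with consecutive exponents, so it is MDS with minimum distance $n-\card B'+1=\card B+1$.

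Moreover, since $\mathcal{A}+B'$ contains every $jn+b$, a dimension count over $\F_{q^s}$ shows that $C^N_{\mathcal{A}+B'}$ is exactly the direct sum over the $N/n$ cosets of these scaled copies of $C^n_{B'}$. The required count is $\card(\mathcal{A}+B')=(N/n)\card B'$.

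\textbf{Step 2: minimum-weight codewords.} The minimum distance of $C^N_{\mathcal{A}+B'}$ equals $\card B+1$, attained by codewords supported on a single coset. The same holds for the subfield subcode $C$: the lower bound follows from the BCH bound, and equality I would check via Step 3.

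\textbf{Step 3: a one-coset argument.} Now assume $D\neq C$; we need a minimum-weight word of $C$ outside $D$. Both $C$ and $D=\BCH_q^N(\mathcal{A}+(B'\cap B))$ are spanned by words of the form $\ev(X^{e})$ restricted to $q$-cyclotomic structure, where $\mathcal{A}+(B'\cap B)$ comes from the intersection computation as in \Cref{prop:hulldim}. Both codes are invariant under the coset decomposition, because multiplying the evaluation by the indicator of a coset is realized by a linear combination of $X^{jn}$ for $j$ in $\mathcal{A}$, and this preserves both exponent sets.

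Hence $C=\bigoplus_\gamma C_\gamma$ and $D=\bigoplus_\gamma D_\gamma$, with $D_\gamma\subseteq C_\gamma$ the pieces on each coset. If $D\neq C$, some $D_\gamma\neq C_\gamma$. Each $C_\gamma$ is (a subfield subcode of) a scaled MDS code, and I would apply \Cref{lem:purity} on that coset. This requires $C_\gamma$ itself to be MDS over $\F_q$. I would establish this by noting that $C_\gamma$, after scaling, is the $q$-ary subfield subcode of a consecutive-exponent cyclic code of length $n$ whose exponent set is $q$-complete; its dimension is then $\card B'$, and the BCH bound gives distance at least $\card B+1=n-\card B'+1$, so it is MDS. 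This yields a word in $C_\gamma\setminus D_\gamma$ of weight $\dis(C_\gamma)=\dis(C)$.

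\textbf{Main obstacle.} The delicate step is justifying the coset decomposition for the subfield subcodes. In particular, one must check that the coset-indicator vectors, which lie in $C^N_{\mathcal{A}}$ over $\F_{q^s}$, can be used to multiply codewords while staying within $\F_q$. The cosets $\gamma U(n)$ themselves may be permuted by Frobenius. I would handle this by grouping cosets into Frobenius orbits, and if needed applying the argument over $\F_{q^s}$ first and then descending via trace. The Hermitian case is identical, replacing $q$ with $q^2$ and $-B$ with $-qB$.
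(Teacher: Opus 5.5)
Your proof is correct and is essentially the paper's argument. The paper also applies \Cref{lem:purity} to the length-$n$ MDS code $\BCH_q^n(B)^{\perp_e}=\BCH_q^n(B')$ and its subcode $\BCH_q^n(B'\cap B)$. It then lifts a minimum-weight word to length $N$ by multiplying by $g(X)=(X^N-1)/(X^n-1)$, which on $U(N)$ is just $\frac{N}{n}\neq 0$ times the indicator of the single coset $U(n)$. It settles the dichotomy directly on defining sets ($B'\subseteq B$), so your full direct-sum decomposition over all cosets is more than is needed.

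The ``main obstacle'' you flag is not one: a coset indicator is a $0/1$ vector and already lies in $\F_q^N$, so no Frobenius orbits or trace descent are needed. Also, no diagonal scaling actually occurs, because $f(\gamma X)$ has the same exponent support as $f$.

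The one step to adjust is the Hermitian case. There $\{0,\dots,n-1\}\setminus(-qB)=q\bigl(\{0,\dots,n-1\}\setminus(-B)\bigr)$ is in general not consecutive, so you cannot get the MDS property on a coset from consecutiveness. Obtain it instead from the coordinate permutation $\zeta\mapsto\zeta^{q}$ of $U(n)$, or from the fact that the Hermitian dual of the MDS code $\BCH_{q^2}^n(B)$ is MDS.
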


  \begin{proof}
    Assume that $B$ is $q$-complete modulo $n$ and consider duality with respect to the Euclidean inner product. Let
    $$B^{\perp_e} := \left\{0, \ldots, n-1\right\} \setminus (-B).$$
    Then, by \Cref{lem:Bdual} we have $\left(\BCH_q^N(\mathcal{A}+B)\right)^{\perp_e} = \BCH_q^N\left(\mathcal{A}+ B^{\perp_e}\right)$ and
    $$\left(\BCH_q^N(\mathcal{A}+B)\right)^{\perp_e} \cap \BCH_q^N(\mathcal{A}+B) = \BCH_q^N\left(\mathcal{A}+ \left(B^{\perp_e} \cap B\right)\right).$$

Recall that $\beta\in\F_{q^s}$ denotes a primitive $N$-th root of unity and $\zeta_n=\beta^{\frac{N}{n}}$ is a primitive $n$-th root of unity. Consider the set $U(n)=\left\{1,\zeta_n,\dots,\zeta_n^{n-1}\right\}$ (which is a subset of $U(N)=\left\{1,\beta,\dots,\beta^{N-1}\right\}$) and the $\{1\}$-affine variety code $C_B^n \subseteq \F_{q^s}^n$. Let $\BCH_q^n(B) \subseteq \F_q^n$ be its $q$-ary subfield-subcode, which is an $\left[n, \card(B), n-\card(B)+1\right]_q$ MDS code. Then, $\left(\BCH_q^n(B)\right)^{\perp_e}$ is an $\left[n, n-\card(B), \card(B)+1\right]_q$ MDS code, $\left(\BCH_q^n(B)\right)^{\perp_e}= \BCH_q^n\left(B^{\perp_e}\right)$ and $\left(\BCH_q^n(B)\right)^{\perp_e} \cap \BCH_q^n(B) = \BCH_q^n\left(B^{\perp_e} \cap B\right)$.
	
	By \Cref{lem:purity}, we have either
	\begin{equation}\label{proof:cond1}
		\dis\left(\left(\BCH_q^n(B)\right)^{\perp_e} \Big\backslash \left(\left(\BCH_q^n(B)\right)^{\perp_e} \cap \BCH_q^n(B)\right)\right) = \dis\left(\left(\BCH_q^n(B)\right)^{\perp_e}\right),
	\end{equation}
    or $\left(\BCH_q^n(B)\right)^{\perp_e} = \left(\BCH_q^n(B)\right)^{\perp_e} \cap \BCH_q^n(B)$, which is equivalent
    to
    $$\BCH_q^N(\mathcal{A}+B)^{\perp_e} = \BCH_q^N(\mathcal{A}+B)^{\perp_e} \cap \BCH_q^N(\mathcal{A}+B),$$
    thus satisfying the second equality of the statement.

   	Assume then that Equality (\ref{proof:cond1}) holds and let $\boldsymbol{b} \in \left(\BCH_q^n(B)\right)^{\perp_e} \Big\backslash \left(\left(\BCH_q^n(B)\right)^{\perp_e} \cap \BCH_q^n(B)\right)$ be a minimum weight ($=\card(B)+1$) codeword.
    Let $f(X) = \sum_{i \in B^{\perp_e}} a_i X^i \in \F_{q^s}[X]$ be a polynomial such that $\boldsymbol{b}=\ev_{U(n)}(f)$. Define
    \begin{equation}\label{proof:defg}
    	g(X) = \frac{X^N-1}{X^n-1} = \sum_{i=0}^{\frac{N}{n}-1} X^{in} \in \F_{q^s}[X].
    \end{equation}
    The set of roots of $g(X)$ is
    $U(N) \setminus U(n)$. Moreover, $f(X)$ and $g(X)$ are relatively prime, and $f(X)g(X)$ belongs to
    $\spn\left\{ X^i : i \in \mathcal{A}+B^{\perp_e} \right\} \subseteq \F_{q^s}[X]$. The above observations imply that the codeword $\ev_{U(N)}(fg)$ has Hamming weight $\card(B)+1$ and belongs to $\left(C_{\mathcal{A}+B}^N\right)^{\perp_e} \Big\backslash \left(\left(C_{\mathcal{A}+B}^N\right)^{\perp_e} \cap C_{\mathcal{A}+B}^N\right)$.

    Notice that, if $a\in U(n)$, we know that $f(a)\in\F_q$, and the right-hand side of the second equality in (\ref{proof:defg}) shows that $g(a)=\frac{N}{n}$, which belongs to the prime field $\F_p \subseteq \F_q$. Otherwise, if $a\in U(N) \setminus U(n)$, the left-hand side of the second equality in (\ref{proof:defg}) shows that $g(a)=0 \in \F_q$, as the numerator vanishes and the denominator does not. Therefore, $f(a)g(a) \in \F_q$ for any $a\in U(N)$, and $\ev_{U(N)}(fg)$ is a codeword of Hamming weight $\card(B)+1$ belonging to $ \left( \BCH_q^N(\mathcal{A}+B) \right)^{\perp_e} \Big\backslash \left( \left( \BCH_q^N(\mathcal{A}+B) \right)^{\perp_e} \cap \BCH_q^N(\mathcal{A}+B)  \right)$. This completes the proof for the Euclidean inner product.

    The claim for the Hermitian inner product follows by repeating
    the above argument with $B^{\perp_h} = \left\{0, \dots, n-1\right\} \setminus (-qB)$.
  \end{proof}

Let us define two sets that will be used in place of $B$ in the BCH codes $\BCH_q^N(\mathcal{A}+B)$ (or $\BCH_{q^2}^N(\mathcal{A}+B)$) that we construct to provide $(r,\delta)$-QLRCs. Such sets will satisfy the assumption on $B$ in \Cref{pro:purity}, giving rise to pure quantum codes
with or without entanglement assistance. For two integers $u, v$ such that $1 \leq u \leq v \leq n-1$, we define
$$B_{u,v}:=\left\{u,\dots,v\right\} \subseteq \Z_n.$$
For a single integer $1\leq u \leq \frac{n}{2}$, we define
$$B_u:=\left\{e : \frac{n}{2}-u < e < \frac{n}{2}+u\right\} \subseteq \Z_n.$$

The sets $B_{u,v}$ and $B_u$ are considered in \cite{bchqlrc} to obtain pure optimal $(r,\delta)$-QLRCs. In that context, entanglement assistance is not considered, and our results on these sets are deeper and different since we get the exact parameters of pure optimal EA $(r,\delta)$-QLRCs.

Let us begin with the Euclidean constructions.

\begin{thm}\label{thm:qminusone}
	Keep the notation from the beginning of Subsection \ref{subsec:BCH}. Assume that $2 \leq n \mid q-1$. Let $u, v$, and $B_{u,v}$ be as above. Then, the code $\left(\BCH_q^N(\mathcal{A}+B_{u,v})\right)^{\perp_e}$ is a classical $\left[N,N-\frac{N}{n}(v-u+1), v-u+2\right]_q$ optimal LRC with locality $(r,\delta)=\left(n-v+u-1,v-u+2\right)$. Therefore, the EAQECC $Q'\left(\BCH_q^N(\mathcal{A}+B_{u,v})\right)$ is an optimal pure EA $(r,\delta)$-QLRC with parameters
	$$\left[\left[N,N-2\frac{N}{n}(v-u+1)+c, v-u+2; c\right]\right]_q,$$
	where $c = \frac{N}{n}\max\left\{0, 1 + \min\left\{n-2u, 2v - n\right\}\right\}$, and locality $$(r,\delta)=\left(n-v+u-1,v-u+2\right).$$
\end{thm}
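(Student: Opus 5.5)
The plan is to apply \Cref{prop:qui}, \Cref{prop:hulldim}, \Cref{pro:purity} and \Cref{coro3} to $C:=\left(\BCH_q^N(\mathcal{A}+B_{u,v})\right)^{\perp_e}$, the code playing the role of $C$ in \Cref{coro1}. Then $Q'\left(\BCH_q^N(\mathcal{A}+B_{u,v})\right)=Q(C)$.

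First I check the hypotheses. Since $n\mid q-1$, we have $q\equiv 1 \pmod n$, so every subset of $\Z_n$ is $q$-complete modulo $n$; in particular $B_{u,v}$ and $-B_{u,v}$ are. By \Cref{prop1}(2), $\mathcal{A}+B_{u,v}$ is $q$-complete modulo $N$.

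\emph{Dimension.} The sum $\mathcal{A}+B_{u,v}$ is disjoint, so $\dim \BCH_q^N(\mathcal{A}+B_{u,v})=\frac{N}{n}(v-u+1)$. Hence $\dim C = N-\frac{N}{n}(v-u+1)=\frac{N}{n}(n-v+u-1)$.

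\emph{Distance, lower bound.} The set $\mathcal{A}+B_{u,v}$ contains the $v-u+1$ consecutive integers $u,\dots,v$. The BCH bound recalled after \Cref{def:BCH} then gives $d(C)\geq v-u+2$.

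\emph{Locality.} I apply \Cref{prop:qui} with $A=\mathcal{A}$, $d_A=n$ (\Cref{prop1}(3)), $B=B_{u,v}$ and $\delta=v-u+2$. The BCH bound again gives $d_B^{\perp_e}\geq v-u+2$, and $\delta\leq n$ holds because $v-u\leq n-2$. So $C$ has locality $(n-\delta+1,\delta)=(n-v+u-1,v-u+2)$.

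\emph{Exact distance and optimality.} Put $r=n-v+u-1$, $\delta=v-u+2$ and $k=\dim C=\frac{N}{n}r$, so that $\lceil k/r\rceil=\frac{N}{n}$. Then
\[
k+\delta+\left(\tfrac{N}{n}-1\right)(\delta-1)=\tfrac{N}{n}(r+\delta-1)+1=N+1 .
\]
Since $d(C)\geq\delta$, the Singleton-like bound \eqref{Singleform} forces $d(C)=v-u+2$, and $C$ is an optimal $(r,\delta)$-LRC. Alternatively, the codeword $\ev_{U(N)}(fg)$ built in the proof of \Cref{pro:purity} also has weight $\card(B_{u,v})+1$.

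\emph{Entanglement.} I use \Cref{prop:hulldim}(1); the quantity there equals the $c$ of \Cref{coro3}, since both are the codimension of the common hull inside $\BCH_q^N(\mathcal{A}+B_{u,v})$. We have $-B_{u,v}=\{n-v,\dots,n-u\}$, so
\[
\card(B_{u,v}\cap -B_{u,v})=\max\{0,\,1+\min\{v,n-u\}-\max\{u,n-v\}\}.
\]
Expanding, $\min\{v,n-u\}-\max\{u,n-v\}=\min\{v-u,\,2v-n,\,n-2u\}$. The term $v-u$ is redundant because $(2v-n)+(n-2u)=2(v-u)$. This gives the stated $c$. The dimension $2\dim C-N+c$ from the formula before \Cref{coro1} is exactly $N-2\frac{N}{n}(v-u+1)+c$.

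\emph{Purity and conclusion.} By \Cref{pro:purity}, either $d\left(C\setminus(C\cap C^{\perp_e})\right)=d(C)$, which is purity, or $C\subseteq C^{\perp_e}$. Then Item (1) of \Cref{coro3} together with the equality computed above yields optimality of the quantum code.

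The main obstacle I expect is the degenerate alternative $C\subseteq C^{\perp_e}$. In that case $d_H(C\setminus(C\cap C^{\perp_e}))=\infty$, and both purity and the value of $k$ need care. I would first compare dimensions: $C\subseteq \BCH_q^N(\mathcal{A}+B_{u,v})$ requires $N-\frac{N}{n}(v-u+1)\leq \frac{N}{n}(v-u+1)$. I would then check whether this, combined with $k>0$, is excluded, or else verify that the minimum distance convention still yields $d=v-u+2$ there. A secondary point is making sure the orientation of duals in \Cref{prop:hulldim} matches the $c$ of \Cref{coro3}.
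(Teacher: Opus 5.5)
Your route is the same as the paper's: classical locality from \Cref{prop1}, \Cref{prop:qui} and the BCH bound, equality in \eqref{Singleform} by the same computation, and then \Cref{coro3}, \Cref{pro:purity} and \Cref{prop:hulldim} with the same count of $\card(B_{u,v}\cap -B_{u,v})$. Your version is more explicit, since you match the two definitions of $c$ and drop $v-u$ from the minimum.

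The one step you leave open is a small but real gap, and neither remedy you suggest closes it. The condition $n\leq 2(v-u+1)$ from your dimension comparison holds for many admissible parameters, e.g.\ $u=1$, $v=n-1$. In the degenerate case $C\subseteq C^{\perp_e}$ (your $C=(\BCH_q^N(\mathcal{A}+B_{u,v}))^{\perp_e}$), the distance is $d_H(\emptyset)=\infty\neq v-u+2$. Moreover $k=2\dim C-N+c=0$ there, which violates the hypothesis $k>0$ of \Cref{coro3}.

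The missing observation is that $1\leq u\leq v\leq n-1$ gives $-B_{u,v}=\{n-v,\dots,n-u\}\subseteq\{1,\dots,n-1\}$, so $0\notin B_{u,v}\cup(-B_{u,v})$. Since $\card(-B_{u,v})=\card(B_{u,v})$, your dimension formula becomes
\[
k=\frac{N}{n}\Bigl(n-2\card(B_{u,v})+\card(B_{u,v}\cap -B_{u,v})\Bigr)=\frac{N}{n}\Bigl(n-\card\bigl(B_{u,v}\cup(-B_{u,v})\bigr)\Bigr)\geq\frac{N}{n}>0.
\]
This supplies $k>0$ for \Cref{coro3}. It also rules out the second alternative of \Cref{pro:purity}, since that alternative forces $k=0$.

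You can also see this directly. Because $0\in\{0,\dots,n-1\}\setminus(-B_{u,v})$, the all-ones vector $\mathbf{1}$ lies in $C$. Then $\mathbf{1}\cdot_e\mathbf{1}=N\neq0$ in $\F_q$ because $p\nmid N$, so $C\not\subseteq C^{\perp_e}$.

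The paper's proof passes over this point without comment. With this line added, your argument is complete.
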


\begin{proof}
	Notice that $B_{u,v}$ is a $q$-complete modulo $n$ set as $q \equiv 1 \pmod{n}$ and $\card(B_{u,v})=v-u+1$.
	By Propositions \ref{prop1} and \ref{prop:qui} and the BCH bound, the code
	$\left(\BCH_q^N(\mathcal{A}+B_{u,v})\right)^{\perp_e}$ is a classical $\left[N, N - \frac{N}{n}\card(B_{u,v}), \card(B_{u,v})+1\right]_q$ LRC
	with locality $\left(n-\card(B_{u,v}), \card(B_{u,v})+1\right)$. These parameters and locality attain the Singleton bound (\ref{Singleform}) because the following sequence of equalities holds:
	\begin{multline*}
		\left(N-\frac{N}{n}\card(B_{u,v})\right) + (\card(B_{u,v})+1) + \\ \left(\left\lceil \frac{N-\frac{N}{n}\card(B_{u,v})}{n-\card(B_{u,v})}\right\rceil-1\right)\card(B_{u,v}) =\\
		 \frac{N}{n}\left(n-\card(B_{u,v})\right) + (\card(B_{u,v})+1) + \left(\left\lceil \frac{N}{n}\right\rceil-1\right)\card(B_{u,v}) =\\
		N+1.
	\end{multline*}
	
	Hence, by \Cref{coro3}, the EAQECC $Q'\left(\BCH_q^N(\mathcal{A}+B_{u,v})\right)$ is an optimal pure (by \Cref{pro:purity}) EA $(r,\delta)$-QLRC with parameters and locality as in the statement.
	The Euclidean amount of entanglement $c$ is given by \Cref{prop:hulldim}
	and the facts that $-B_{u,v} = \left\{n-v,\dots, n-u\right\}$ and
	\begin{eqnarray*}
		\card\left( B_{u,v} \cap -B_{u,v} \right)& =&
		\max\left\{0, 1+ \min\left\{n-u,v\right\} - \max\left\{u, n-v\right\}\right\}\\
		&=& \max\left\{0, 1 + \min\left\{n-2u, 2v - n\right\}\right\}.
	\end{eqnarray*}
\end{proof}

\begin{thm}\label{thm:qplusone}
	Keep the notation from the beginning of Subsection \ref{subsec:BCH}. Assume that $2 \leq n \mid q+1$. Let $u$ and $B_u$ be as above. Then, the code $\left(\BCH_q^N(\mathcal{A}+B_u)\right)^{\perp_e}$ is a classical $\left[N,N-\frac{N}{n}(2u - ((n+1) \bmod 2)), 2u - ((n+1) \bmod 2) + 1\right]_q$ optimal LRC with locality $(r,\delta)=\left(n-2u + ((n+1) \bmod 2),2u - ((n+1) \bmod 2) + 1\right)$. Therefore, the EAQECC $Q'\left(\BCH_q^N(\mathcal{A}+B_u)\right)$ is an optimal pure EA $(r,\delta)$-QLRC with parameters
	$$\left[\left[N,N-\frac{N}{n}(2u - ((n+1) \bmod 2)), 2u - ((n+1) \bmod 2) + 1; c\right]\right]_q,$$
	where $c = \frac{N}{n}(2u - ((n+1) \bmod 2))$, and the locality is $$(r,\delta)=\left(n-2u + ((n+1) \bmod 2),2u - ((n+1) \bmod 2) + 1\right).$$
\end{thm}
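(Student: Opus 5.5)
We follow the proof of \Cref{thm:qminusone} step by step; the only new ingredients are the $q$-completeness of $B_u$ and the computation of $B_u\cap -B_u$. First, since $n \mid q+1$, we have $q \equiv -1 \pmod n$. Hence $q$-completeness of a subset of $\Z_n$ is the same as closure under $e \mapsto n-e$. The set $B_u=\{e : \frac{n}{2}-u<e<\frac{n}{2}+u\}$ is symmetric about $\frac{n}{2}$, so $-B_u=B_u$ and $B_u$ is $q$-complete modulo $n$. It is also a set of consecutive integers. Counting its elements gives $\card(B_u)=2u-1$ when $n$ is even (the integers $\frac n2-u+1,\dots,\frac n2+u-1$) and $2u$ when $n$ is odd (the integers $\frac{n+1}{2}-u,\dots,\frac{n-1}{2}+u$). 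In both cases $\card(B_u)=2u-((n+1)\bmod 2)$; we write $\beta_u$ for this number.

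Second, we obtain the classical LRC structure of $\left(\BCH_q^N(\mathcal{A}+B_u)\right)^{\perp_e}$ from \Cref{prop:qui} with $A=\mathcal{A}$ and $B=B_u$.
\begin{itemize}
\item By \Cref{prop1}, $\mathcal{A}$ and $\mathcal{A}+B_u$ are $q$-complete, and $d_A=n$.
\item Since $B_u$ consists of $\beta_u$ consecutive integers, the BCH bound gives $d\left((C^N_{B_u})^{\perp_e}\right)\geq \beta_u+1$.
\item Taking $\delta=\beta_u+1\leq n$, \Cref{prop:qui} yields locality $(n-\beta_u,\beta_u+1)$.
\end{itemize}
The dimension is $N-\frac{N}{n}\beta_u$, by \Cref{lem:Bdual} or directly from $\card(\mathcal{A}+B_u)=\frac Nn\beta_u$. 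For the minimum distance, the BCH bound applied to $\mathcal{A}+B_u$, which contains $\beta_u$ consecutive integers, gives $\geq \beta_u+1$. For the upper bound, we use the Singleton-like bound \eqref{Singleform}. The same computation as in \Cref{thm:qminusone}, with $\lceil (N-\frac Nn\beta_u)/(n-\beta_u)\rceil=\frac Nn$, shows that the left side equals $N+1$ when $d=\beta_u+1$. Hence $d=\beta_u+1$ exactly and the code is an optimal LRC.

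Third, we compute the entanglement. By \Cref{prop:hulldim}, $c=\frac Nn\card(B_u\cap -B_u)=\frac Nn\card(B_u)=\frac Nn\beta_u$, because $-B_u=B_u$. The quantum dimension is then $2\dim C-N+c$, where $C=\left(\BCH_q^N(\mathcal{A}+B_u)\right)^{\perp_e}$ is the dual-containing code whose dual is the code used in $Q'$. This gives $2(N-\frac Nn\beta_u)-N+\frac Nn\beta_u=N-\frac Nn\beta_u$, which matches the statement.

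Finally, purity and optimality follow. Note that $B_u\cap(\Z_n\setminus(-B_u))=\emptyset$, so $C\cap C^{\perp_e}$ is the BCH code of $\mathcal{A}+\emptyset$, which is $\{\mathbf 0\}$. We can invoke \Cref{pro:purity}; in this case purity is also immediate because the hull is trivial, so $d_H(C\setminus\{\mathbf 0\})=d_H(C)$. Since $C$ is a classical optimal $(r,\delta)$-LRC, \Cref{coro3}(1) gives the bound \eqref{eq:EASing}, and it is attained with equality because the classical bound is tight. Hence the code is an optimal pure EA $(r,\delta)$-QLRC. The only real care point is the parity-dependent count of $B_u$ and the check $\beta_u+1\leq n$ (true since $u\le n/2$) so that \Cref{prop:qui} applies.
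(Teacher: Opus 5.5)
Your proposal is correct and follows the paper's proof. The paper simply reruns the argument of \Cref{thm:qminusone} with $B_u$ in place of $B_{u,v}$, using that $q\equiv -1 \pmod n$ forces $-B_u=B_u$ and hence $\card(B_u\cap -B_u)=\card(B_u)=2u-((n+1)\bmod 2)$. Your two additions are valid refinements of that same route: you use the Singleton-like bound \eqref{Singleform} to pin the classical distance to exactly $\card(B_u)+1$, and you note that $-B_u=B_u$ makes the hull trivial, so purity follows directly without \Cref{pro:purity}.
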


\begin{proof}
	The proof follows similarly to that of \Cref{thm:qminusone}, writing $B_u$ instead of $B_{u,v}$ and taking into account that $B_u$ is a $-1$-complete modulo $n$ set because $n$ divides $q+1$. Then, $-B_u=B_u$ and $\card(B_u \cap -B_u) = \card(B_u) = 2u - ((n+1) \bmod 2)$.
\end{proof}

Hereafter, we will assume that $s=2\varsigma$ to provide the analogue of \Cref{thm:qplusone} by using the Hermitian inner product, that is, considering $\left(\BCH_{q^2}^N(\mathcal{A}+B_u)\right)^{\perp_h}$
instead of $\left(\BCH_q^N(\mathcal{A}+B_u)\right)^{\perp_e}$. For that purpose, the forthcoming Propositions \ref{prop:hdimo} and \ref{prop:hdime} provide the amount of entanglement $\card(B_u \cap - qB_u)$ given in \Cref{prop:hulldim}.

Our next result assumes that $q$, $n$, and $u$ are as follows: $q$ is odd, and $n$ is an even integer such that $2 < n=m^2 + 1$ (for some positive integer $m$) which divides $q^2 +1$. We also assume that $q \equiv \pm m \pmod n$. Let $\ell$ and $r$ be nonnegative integers with $r < m$ and $0 < \ell m + r \leq n/2$, and consider $u := \ell m + r$.

\begin{pro}\label{prop:hdimo}
Let $q, m, \ell, r$, and $u$ be as above. Let $B_u$ be as introduced above \Cref{thm:qminusone}. Then, the cardinality of the intersection $B_u \cap - qB_u$ equals
\begin{equation}
\card\left(B_u \cap - qB_u\right) =
\begin{cases}
 4\ell^2 + 4r - 3 & \text{if } r \leq \ell, \\
 (2\ell+1)^2 & \text{if } \ell < r \leq m - \ell, \\
 (2\ell+1)^2 + 4(r + \ell - m) & \text{if } r > m - \ell.
\end{cases}
\end{equation}
\end{pro}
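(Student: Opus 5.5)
The plan is to turn the count into a count of lattice points in a square, using the arithmetic $m^2\equiv -1 \pmod n$. First, since $n=m^2+1$ is even, $m$ is odd, and so $m\cdot\frac n2\equiv \frac n2 \pmod n$. Every element of $B_u$ can be written uniquely as $e=\frac n2+x$ with $x\in\Z$ and $|x|<u$. Because $u\le n/2$, distinct $x$ give distinct residues and there is no wrap-around. The hypothesis $q\equiv \pm m \pmod n$ gives $-qe\equiv \frac n2 \mp m x \pmod n$. The set $\{x: |x|<u\}$ is symmetric, so in both sign cases
\[
-qB_u=\left\{\tfrac n2 + y \bmod n : y\equiv m x \ (\mathrm{mod}\ n),\ |x|<u\right\}.
\]
Hence $\card(B_u\cap -qB_u)$ equals the number of pairs $(x,y)\in\Z^2$ with $|x|<u$, $|y|<u$ and $y\equiv mx\pmod n$. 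Again by $u\le n/2$, each $x$ contributes at most one $y$, so this count is exact.

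Next, I will describe the lattice $\mathcal L=\{(x,y)\in\Z^2: y\equiv mx \pmod n\}$. It has index $n$ in $\Z^2$. Both $(1,m)$ and $(m,-1)$ lie in $\mathcal L$, the latter because $m^2\equiv-1$. Their determinant is $-(m^2+1)=-n$, so they form a basis. Thus the quantity to compute is
\[
\card\left\{(i,j)\in\Z^2 : |i+jm|<u,\ |im-j|<u\right\},\qquad u=\ell m+r,\ 0\le r<m.
\]
This is a square lattice, rotated and scaled by $\sqrt n$, intersected with the open square $(-u,u)^2$. The problem is now purely combinatorial, and the symmetries $(i,j)\mapsto(-i,-j)$ and $(i,j)\mapsto(j,-i)$ (which maps $(x,y)\mapsto(y,-x)$) can be used to reduce work.

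Then I will count the points. The points with $|i|,|j|\le \ell$ satisfy $|i+jm|\le \ell+\ell m$. This is $<u$ exactly when $\ell<r$, and similarly for the second coordinate. So for $\ell<r$ all $(2\ell+1)^2$ of them count. For $r\le\ell$, the extreme ones, those with $|i|$ or $|j|$ equal to $\ell$ and the wrong relative sign, drop out. Beyond this core, further points appear near the corners when $|i|$ or $|j|$ is $\ell+1$. For instance, $(i,j)=(\ell+1,-\ell)$ gives $x=\ell+1-\ell m$ and $y=(\ell+1)m+\ell$, which counts only when $(\ell+1)m+\ell<\ell m+r$, i.e. $r>m-\ell$; by symmetry it comes with four such points per layer. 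In the same way, I will work out for each $|i|\le \ell+1$ the admissible range of $j$ via the two inequalities $|i+jm|<u$ and $|im-j|<u$, and sum. This splits naturally into the three regimes $r\le\ell$, $\ell<r\le m-\ell$ and $r>m-\ell$ of the statement, giving $4\ell^2+4r-3$, $(2\ell+1)^2$ and $(2\ell+1)^2+4(r+\ell-m)$ respectively.

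The main obstacle is this boundary bookkeeping. I need to determine precisely which points on the $\ell$- and $(\ell+1)$-layers satisfy both strict inequalities, with the $\pm$ sign interplay between $i$ and $j$. In the regime $r\le \ell$ the total $4\ell^2+4r-3$ does not look like a simple square minus corners, so I must count the surviving boundary points row by row. I would first verify each formula on small cases, e.g. $m=3$, $n=10$, before writing the general count.
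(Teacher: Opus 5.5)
Your reduction is correct and is essentially the paper's. You use the lattice $\{(x,y): y\equiv mx \pmod n\}$ with basis $(1,m),(m,-1)$ (your $(i,j)$ is the paper's $(a,-b)$), the open square $(-u,u)^2$, and the quarter-turn symmetry. After that, however, you only announce that the bookkeeping ``splits naturally'' into the three regimes with the stated values. That bookkeeping \emph{is} the proposition, and none of the three formulas is actually derived. Moreover, the one boundary computation you do carry out is wrong. For $(i,j)=(\ell+1,-\ell)$ you get $|y|=(\ell+1)m+\ell$, and $(\ell+1)m+\ell<\ell m+r$ is equivalent to $r>m+\ell$ (not $r>m-\ell$). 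This never holds, because $r<m$. The points that appear in the third regime have $i$ and $j$ of the same sign, e.g.\ $(\ell+1,\ell)$.

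Three ingredients are missing.

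(1) You restrict to $|i|\le\ell+1$ without proof. Since $x+my=ni$, one has $|i|\,n\le(u-1)(m+1)\le(\ell m+m-2)(m+1)$. On the other hand, $(\ell+2)(m^2+1)-(\ell m+m-2)(m+1)=m(m-\ell)+m+\ell+4>0$, so $|i|\ge\ell+2$ is impossible. The paper reaches the same contradiction from $a(m^2+1)\le\mathfrak u(m+1)$.

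(2) For $r\le\ell$ you need the exact failing set. A point $(i,\ell)$ with $|i|\le\ell$ violates $|i+jm|<u$ if and only if $i\ge r$, so the criterion is not just ``wrong relative sign''. The failures are $\{(i,\ell): r\le i\le\ell\}$ and its three rotations, which are pairwise disjoint because $\ell\ge1$ in this regime. The count is therefore $(2\ell+1)^2-4(\ell-r+1)=4\ell^2+4r-3$, which is a square minus four boundary strips after all.

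(3) On the layer $i=\ell+1$, the binding inequalities are $m-r<j$ and $jm<\ell m+r-\ell-1$, which force $m-r<j\le\ell$. This range is empty when $r\le m-\ell$, which covers your first two regimes. When $r>m-\ell$, you must use $m\ge2\ell+1$ (a consequence of $u\le n/2$ and $m$ odd) to get $r\ge\ell+2$. Then $j=\ell$ is admissible and there are exactly $r+\ell-m$ solutions. The quarter-turn maps these to four disjoint copies outside the core $|i|,|j|\le\ell$.

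With these three steps your plan becomes a complete proof along the paper's lines.
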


\begin{proof}
Since $n$ is even, $n/2$ is an integer and then $B_u = \left\{ n/2 - u + 1, \dots, n/2 + u - 1 \right\}$. Setting $\mathfrak{u} = u-1$, elements in $B_u$ can be expressed as $n/2 + j$, $j \in \left[ -\mathfrak{u}, \mathfrak{u}\right] \cap \mathbb{Z}$. Elements in $- qB_u$ are of the form $ -q(n/2 + i) \mod n$, $i \in \left[ -\mathfrak{u}, \mathfrak{u}\right] \cap \mathbb{Z}$. \textit{The elements in $B_u \cap - qB_u$ correspond to pairs of integers $(i,j)$ that satisfy}
\begin{equation}
\label{CG1}
n/2 + j \equiv -q(n/2 + i) \pmod n.
\end{equation}
Because $q$ is odd (let $q = 2k+1$) and $n$ is even, we observe that
$$ -q(n/2) = -(2k+1)(n/2) = -kn - n/2 \equiv -n/2 \equiv n/2 \pmod n. $$
By (\ref{CG1}), \textit{our goal is to count the number of integer pairs $(i,j)$ in the set}
$$\left( [-\mathfrak{u}, \mathfrak{u}] \cap \mathbb{Z} \right)^2 := \left([-\mathfrak{u}, \mathfrak{u}] \cap \mathbb{Z}\right) \times \left([-\mathfrak{u}, \mathfrak{u}] \cap \mathbb{Z}\right)$$
\textit{that satisfy} $j \equiv -qi \pmod n$.

By assumption, $q \equiv \pm m \pmod n$; then if $q \equiv m \pmod n$, one gets that $j \equiv -mi \pmod n$ and $j \equiv mi \pmod n$ whenever $q \equiv -m \pmod n$. Because the square $\left( [-\mathfrak{u}, \mathfrak{u}] \cap \mathbb{Z} \right)^2$ is symmetric across the axes (i.e., closed under sign changes in the coordinates), we only need to count the integer pairs $(i,j)$ which are included in the linear lattice
$$ L = \left\{(i,j) \in \mathbb{Z}^2 \mid j \equiv mi \pmod n \right\}. $$

$L$ is orthogonal and generated by the basis vectors $v_1 = (1, m)$ and $v_2 = (-m, 1)$ because $m^2 \equiv -1 \pmod n$. Thus, elements in $L$ have the form $(a - bm, am + b)$ with $a, b \in \Z$. Since $u = \ell m + r$ and noting that the pairs $(i,j)$ that we have to count belong to the square $\left( [-\mathfrak{u}, \mathfrak{u}] \cap \Z \right)^2$, we get
\begin{align}
\left|a - bm\right| &\leq \ell m + r - 1, \label{eq:bound1} \\
\left|am + b\right| &\leq \ell m + r - 1. \label{eq:bound2}
\end{align}

Thus, \textit{determining $\card\left(B_u \cap - qB_u\right)$ is equivalent to counting the number of integer pairs $(a,b)$ satisfying Inequalities \eqref{eq:bound1} and \eqref{eq:bound2}}. Let us compute this number.

The points such that $\left|a\right|,\,\left|b\right| \leq \ell-1$ satisfy Inequalities \eqref{eq:bound1} and \eqref{eq:bound2}. This is because, by the triangle inequality property, we have that
$$\left|a-bm\right|,\,\left|am+b\right| \leq (\ell-1)m + (\ell-1) = \ell m - m + \ell - 1,$$
a value that is less than or equal to $\ell m + r - 1$ if and only if $\ell - m \leq r$, which is true as $r\geq 0$ and $\ell < m$, because $u=\ell m + r \leq \frac{n}{2}=\frac{m^2+1}{2}$.

Consider
$$G_0 := \left\{ (a, b) \in \Z^2 : \left|a\right| \le \ell, \left|b\right| \le \ell \right\}$$
as our initial candidate grid, which contains exactly $(2\ell + 1)^2$ points, and let
$$G \subset \Z^2$$
be \textit{the set of solutions $(a,b)$ corresponding to Inequalities \eqref{eq:bound1} and \eqref{eq:bound2}}. Define the map
$$R \colon \Z^2 \to \Z^2, \quad R(a, b) = (-b, a),$$
which can be thought of geometrically as a $90^\circ$ counterclockwise rotation. The grid $G_0$ is invariant under the above map, that is, $R(G_0) = G_0$, and the same holds for $G$.
Indeed, evaluating Inequality \eqref{eq:bound1} for a rotated point $R(a,b)$ gives Inequality \eqref{eq:bound2} as $\left|-b - am\right| = \left|am + b\right| \le \ell m + r - 1$, and vice versa ($\left|-bm + a\right| = \left|a - bm\right| \le \ell m + r - 1$).

Then, the number of valid points $(a,b) \in G$ fixing a vertical line $a=v_0$ is the same as the number of valid points fixing the horizontal line $b=v_0$. This allows us to restrict our search to the possible range of values of the single variable $a$.

Let us prove that $G$ contains no valid points for any vertical line $a \ge \ell + 2$. Recalling that $\mathfrak{u} = \ell m + r - 1$, Inequalities \eqref{eq:bound1} and \eqref{eq:bound2} provide the following bounds for $b$:
	\begin{align}
		\left|a - bm\right| \le \mathfrak{u} &\iff \frac{a - \mathfrak{u}}{m} \le b \le \frac{a + \mathfrak{u}}{m} \quad \mbox{and} \quad \label{eq:boundb1} \\
		\left|am + b\right| \le \mathfrak{u} &\iff -\mathfrak{u} - am \le b \le \mathfrak{u} - am. \label{eq:boundb2}
	\end{align}
In particular, for an integer solution $b$ to exist, it must hold that $\frac{a - \mathfrak{u}}{m} \leq \mathfrak{u} - am$. Multiplying by $m>0$, we obtain $a - \mathfrak{u} \le m\mathfrak{u} - am^2$, or equivalently
\begin{equation}\label{eq:generala}
	a(m^2 + 1) \le \mathfrak{u}(m + 1) = (\ell m + r - 1)(m+1).
\end{equation}
Then, the fact that $r \leq m-1$ and $a \geq \ell +2$ allows us to prove that
\begin{multline*}
\ell m^2 + 2m^2 + \ell + 2 = (\ell + 2)(m^2 + 1) \leq a(m^2 + 1) \leq \\(\ell m + m - 2)(m + 1) = \ell m^2 + m^2 + \ell m - m - 2,
\end{multline*}
which implies
$$ \ell m^2 + 2m^2 + \ell + 2 \le \ell m^2 + m^2 + \ell m - m - 2,$$
that is,
$$ m(m - \ell) + m + \ell + 4 \le 0,$$
a contradiction with the facts that $m,\,\ell\geq 0$ and $\ell < m$.

Now, we search for the points $(a,b)\in G_0$ on the vertical line $a=\ell$ that are valid. From Inequality \eqref{eq:bound2}, we have:
$$ -(\ell m + r - 1) \le \ell m + b \le \ell m + r - 1.$$
The lower bound is satisfied trivially for all $b \ge -\ell$, as $-\ell \geq -2\ell m - r +1$ (notice that for $\ell=0$ we always have $r\geq 1$ and for $\ell \geq 1$ one has $\ell(2m-1)\geq 1-r$). However, the upper bound restricts to $b \le r - 1$.

We perform the rest of the proof by analyzing the parameter $r$, according to the following cases that match the statement:

\textit{Case 1: $r \leq \ell$.} By the definition of $G_0$, which makes $b\leq \ell$, and the fact that $r \leq \ell$, we have that the points on the vertical line $a=\ell$ failing Inequality \eqref{eq:bound2} are those in the set $$E_1 = \left\{(\ell, b) \in \Z^2 : r \le b \le \ell\right\},$$ whose cardinality equals $\ell - r + 1$. Additionally, Inequality \eqref{eq:bound1} for all points on this line is
\begin{equation}\label{eq:l4.6}
-\ell m - r +1 \leq \ell - b m \leq \ell m + r - 1,
\end{equation}
which holds except for the bottom corner $(\ell, -\ell)$. As established previously, evaluating Inequality \eqref{eq:bound1} at any point is equivalent to evaluating Inequality \eqref{eq:bound2} at its rotated point. By the rotational invariance $R(G)=G$ and $R(G_0)=G_0$, the sets $E_2 = R(E_1)$, $E_3 = R^2(E_1)$, and $E_4 = R^3(E_1)$ (which contains the corner $(\ell, -\ell)$) contain the remaining points in $G_0$ that fail either Inequality \eqref{eq:bound1} or \eqref{eq:bound2}. Moreover, these four sets are mutually disjoint. Therefore, the candidate for the cardinality of the intersection $B_u \cap -qB_u$ is
\begin{multline*}
\card\left(B_u \cap -qB_u\right) = \card(G_0) - \sum_{k=1}^{4} \card(E_k) = \\ (2\ell + 1)^2 - 4(\ell - r + 1) = 4\ell^2 + 4r - 3.
\end{multline*}
This completes the proof for Case 1 after noticing that, in this case, no valid points on the vertical line $a=\ell+1$ exist. Indeed, from Equation \eqref{eq:generala}, we obtain
$$(\ell+1)(m^2+1) \leq (\ell m + r -1)(m+1).$$
However, the difference between the left and right sides is positive, since, on the one hand,
\begin{equation}\label{eq:l+1}
(\ell + 1)(m^2 + 1) - (\ell m + r - 1)(m + 1) = m^2 - m(\ell + r - 1) + \ell - r + 2,
\end{equation}
and, on the other hand, $\ell m + r = u \leq \frac{n}{2}=\frac{m^2+1}{2}$, which implies $m^2 \ge 2\ell m + 2r - 1$, and then the above difference is greater than or equal to:
$$(2\ell m + 2r - 1) - m\ell - mr + m + \ell - r + 2 = m(\ell - r + 1) + \ell + r + 1 >0.$$

\textit{Case 2: $\ell < r \leq m - \ell$.} The reasoning in the paragraph above Case 1 proves that Inequality \eqref{eq:bound2} is satisfied for all points $(\ell,b) \in G_0$, and Inequality \eqref{eq:bound1} also holds for these points; see Inequalities \eqref{eq:l4.6}. This proves that $G_0 \subseteq G$ in this case. Let us see that no valid points with $a = \ell + 1$ exist, leading to $G_0=G$. Indeed, factoring out $m$ from Equality \eqref{eq:l+1}, we obtain
$$ m(m - \ell - r + 1) + \ell - r + 2 \le 0. $$
However, using the Case 2 hypotheses $m - \ell - r \ge 0$ and that $m-r\geq \ell$, we have:
$$m(m - \ell - r + 1) + \ell - r + 2 \geq m + \ell - r + 2 \geq 2 \ell + 2 >0,$$
a contradiction. This completes the proof for Case 2, showing that
$$ \card\left(B_u \cap -qB_u\right) = (2\ell + 1)^2.$$

It only remains to consider the last case.

\textit{Case 3: $r > m - \ell$.} Notice that $\ell \neq 0$ in this case and the fact that $u\leq \frac{n}{2}$ implies $2\ell m \leq 2\ell m + 2r \leq m^2+1$, that is, $2\ell \leq m + \frac{1}{m}$. Then, since $m>2$ is odd, we have $m>2\ell$, and also $m-\ell > \ell$. As above, we have $G_0 \subseteq G$. Let us prove that, now, there exist valid points with $a = \ell + 1$. From Inequalities \eqref{eq:boundb1} and \eqref{eq:boundb2}, we have:
$$ \max\left( \frac{\ell + 1 - \mathfrak{u}}{m}, -\mathfrak{u} - (\ell + 1)m \right) \le b \le \min\left( \frac{\ell + 1 + \mathfrak{u}}{m}, \mathfrak{u} - (\ell + 1)m \right).$$
Let us compute an upper bound for $b$. Since $\mathfrak{u} = \ell m + r - 1$, then
$$ \frac{\ell + 1 + (\ell m + r - 1)}{m} = \frac{\ell m + \ell + r}{m} = \ell + \frac{\ell + r}{m} > 0,$$
and
$$ \mathfrak{u} - (\ell + 1)m = (\ell m + r - 1) - \ell m - m = r - m - 1 < 0.$$
This shows that the upper bound for $b$ is
$$ b \le r - m - 1.$$
With respect to a lower bound, we have
$$ -\mathfrak{u} - (\ell + 1)m = -(\ell m + r - 1) - \ell m - m = -2\ell m - m - r + 1 $$
and
$$ \frac{\ell + 1 - (\ell m + r - 1)}{m} = \frac{-\ell m + \ell - r + 2}{m} = -\ell + \frac{\ell - r + 2}{m}.$$
Now, $-\ell + \frac{\ell - r + 2}{m} > -\ell -1$ because, from $\ell - r + 2 > \ell - m + 2$, the sequence of inequalities $\frac{\ell - r + 2}{m} > \frac{\ell - m + 2}{m} = \frac{\ell + 2}{m} - 1 > -1$ holds. Moreover, $-2\ell m - m - r + 1 = -m(2\ell + 1) - r + 1 < -\ell -1$, because $\ell>0$ and the facts that $m \geq 2\ell +1$ and $r \geq 0$ imply that $-m(2\ell + 1) - r + 1 \leq -(2\ell + 1)(2\ell + 1) - 0 + 1=-4\ell^2 - 4\ell < -\ell - 1$. Then, the lower bound is
$$ b \geq -\ell + \left\lceil \frac{\ell - r + 2}{m} \right\rceil.$$
Notice that $\frac{\ell - r + 2}{m} \leq 0$ as, under the hypothesis in this case, we have $\ell-r+2 \le \ell - (m - \ell + 1) + 2 = 2\ell - m + 1 \leq 0$. Moreover, $\frac{\ell - r + 2}{m} > -1$ because $\ell - r + 2 > \ell - m + 2 > -m$. Therefore, $\left\lceil \frac{\ell - r + 2}{m} \right\rceil = 0$ and the actual range for $b$ is
$$ -\ell \le b \le r - m - 1,$$
giving $r + \ell - m$ solutions on the vertical line $a= \ell +1$. The rotational invariance by the map $R$ of the solutions proves that
$$\card\left(B_u \cap -qB_u\right) = (2\ell + 1)^2 + 4(r + \ell - m).$$
This concludes the proof.
\end{proof}

Next, we consider a second situation where we are able to compute the cardinality of the set $B_u \cap -qB_u$. As before, we are in the case $s=2\varsigma$. Consider $q$ a prime power (either even or odd) and $n$ an odd integer such that $5 \le n = m^2 + 1$ (for some positive integer $m$) which divides $q^2 + 1$. Assume also that $q \equiv \pm m \pmod n$. Let $\ell$ and $r$ be nonnegative integers with $r < m$ and $0 < \ell m + r \le \frac{n-1}{2}$, and define $u = \ell m + r$.

\begin{pro}\label{prop:hdime}
Let $q, u, \ell, r$, and $m$ be as before. Then, the cardinality of the intersection $B_u \cap -qB_u$ equals
\begin{equation}
\card(B_u \cap -qB_u) =
\begin{cases}
 4\ell^2 & \text{if } r \le \frac{m}{2} - \ell, \\[6pt]
 4\ell^2 + 4\left(r + \ell - \frac{m}{2}\right) & \text{if } \frac{m}{2} - \ell < r \le \frac{m}{2} + \ell + 1, \\[6pt]
 4(\ell + 1)^2 & \text{if } r > \frac{m}{2} + \ell + 1.
\end{cases}
\end{equation}
\end{pro}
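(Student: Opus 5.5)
The plan is to reduce the count to lattice points, mirroring the proof of \Cref{prop:hdimo}, with one change forced by the parity: since $n$ is odd, $n/2$ is a half-integer. First, I will double every element. Because $n$ is odd, $2$ is invertible modulo $n$, so $e\mapsto 2e-n$ is a bijection of $\Z_n$ onto the residues of odd integers. Under it, $B_u=\{e : n/2-u<e<n/2+u\}$ corresponds to the $2u$ odd integers $x\in[-(2u-1),2u-1]$.

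For $e=(n+x)/2$ and $e'=(n+y)/2$ in $B_u$, the condition $e\equiv -qe'\pmod n$ becomes, after multiplying by $2$,
\[
n+x\equiv -q(n+y) \pmod n,
\]
that is, $x\equiv -qy\pmod n$. So $\card(B_u\cap -qB_u)$ equals the number of pairs $(x,y)$ of odd integers in $[-(2u-1),2u-1]^2$ with $x\equiv \mp m y\pmod n$. Using the symmetry of the square under sign changes, it suffices to count pairs with $x\equiv my\pmod n$.

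Next, since $m^2\equiv -1\pmod n$, the lattice $L=\{(i,j): j\equiv mi \pmod n\}$ is generated by $(1,m)$ and $(-m,1)$. Its points are $(a-bm,\,am+b)$ with $a,b\in\Z$. Because $n=m^2+1$ is odd, $m$ is even, so both coordinates are odd exactly when $a$ and $b$ are both odd. The count therefore becomes the number of odd pairs $(a,b)$ satisfying
\[
|a-bm|\le 2u-1 \quad\text{and}\quad |am+b|\le 2u-1,
\]
with $2u-1=2\ell m+2r-1$. As before, the solution set is invariant under the rotation $R(a,b)=(-b,a)$, which preserves oddness. Hence only the vertical lines $a=a_0$ need to be analysed, and the rotation transfers the result to horizontal lines.

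The base grid will be $G_0=\{(a,b) : a,b \text{ odd},\ |a|,|b|\le 2\ell-1\}$, which has $(2\ell)^2=4\ell^2$ points. By the triangle inequality, every point of $G_0$ satisfies
\[
|a-bm|,\,|am+b|\le (2\ell-1)(m+1),
\]
and I must check that this is at most $2\ell m+2r-1$ or else remove failing boundary points. I expect this check to show that $G_0$ is entirely valid, since $\ell<m/2$ follows from $u\le (n-1)/2$. Then, as in \Cref{prop:hdimo}, I will show from the analogue of \eqref{eq:generala},
\[
a(m^2+1)\le (2u-1)(m+1),
\]
that no solutions exist with $a\ge 2\ell+3$. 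The whole case split then comes from the line $a=2\ell+1$. There, \eqref{eq:boundb1} and \eqref{eq:boundb2} give
\[
b\le 2u-1-(2\ell+1)m=2r-m-1,
\]
together with a lower bound that should come out to $b\ge -(2\ell+1)$ after a ceiling computation. Counting the odd $b$ in this range, the number of extra points on that line is $0$ when $r\le m/2-\ell$. When $m/2-\ell<r$ it is $r+\ell-m/2$, and it saturates at $2\ell+1$ once $r>m/2+\ell+1$, where the range is capped by $|b|\le 2\ell+1$. Multiplying by $4$ via the rotation, and noting that in the last regime the corners are shared so the total becomes $(2\ell+2)^2$, yields the three formulas.

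The main obstacle is exactly this line $a=2\ell+1$: pinning down the precise lower bound for $b$ (a ceiling of $\bigl((2\ell+1)-(2u-1)\bigr)/m$ intersected with the odd integers) and handling the corner overlaps when rotating. I will treat the corner bookkeeping by counting the full ring $\{\max(|a|,|b|)=2\ell+1\}$ directly in Case 3, rather than as $4$ times a line.
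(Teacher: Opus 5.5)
Your reduction follows the paper's proof step for step, and that part is sound: the map $e\mapsto 2e-n$, the parametrization $(a-bm,\,am+b)$ with $a,b$ odd, the grid $G_0$ of $4\ell^2$ points, invariance under $R$, and the exclusion of $a\ge 2\ell+3$. (Strictly, $u\le m^2/2$ only gives $\ell\le m/2$, with equality when $r=0$. But $G_0\subseteq G$ only needs $2\ell\le m+2r$.)

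The gap is in the step that produces the case split. On the line $a=2\ell+1$, the inequality $|a-bm|\le 2u-1$ gives
\[
b\;\ge\;\frac{(2\ell+1)-(2\ell m+2r-1)}{m}\;=\;-2\ell+\frac{2\ell+2-2r}{m}.
\]
This exceeds $-(2\ell+1)$ whenever $2r<m+2\ell+2$; indeed the corner $(2\ell+1,-(2\ell+1))$ has $|a-bm|=2\ell m+m+2\ell+1>2\ell m+2r-1$. So for $r\le m/2+\ell$ the smallest admissible odd $b$ is $-2\ell+1$, not $-(2\ell+1)$ as you predict.

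This matters for the count. With your range $[-(2\ell+1),\,2r-m-1]$, the line would carry $r+\ell-m/2+1$ odd values once $r\ge m/2-\ell$. That gives a spurious point at $r=m/2-\ell$ (inside the first case) and an overcount by $4$ throughout the second case. The per-line counts $0$ and $r+\ell-m/2$ that you state are right, but they follow only from the range $-2\ell+1\le b\le 2r-m-1$. This is exactly what the paper checks in its Case 1 (that $b=-2\ell-1$ fails \eqref{eq:odd_bound1}). It is also the analogue of the bound $b\ge-\ell$, not $-(\ell+1)$, in \Cref{prop:hdimo}.

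The corner bookkeeping also bites earlier than you plan. At $r=m/2+\ell+1$, which lies in the middle case of the statement, the lower bound drops to $-(2\ell+1)$ exactly when the upper bound $2r-m-1$ reaches $2\ell+1$. Both corners of the line therefore enter at once, the line carries $2\ell+2$ points, and ``four times the line'' overcounts.

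The fix is to count on the half-open side $\{(2\ell+1,b): -(2\ell+1)<b\le 2\ell+1\}$. Its four images under $R$ partition the ring $\max\{|a|,|b|\}=2\ell+1$. On that side the number of valid odd $b$ is $0$, $r+\ell-m/2$, and $2\ell+1$ in the three regimes, which gives the three formulas directly. Your saturated value $2\ell+1$ is already this half-open count (the full line has $2\ell+2$ odd values), so no further corner correction should be applied in the last case.
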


\begin{proof}
The proof is similar to the one in \Cref{prop:hdimo}. In this case, since $n$ is odd, $B_u = \left\{ \frac{n-1}{2} - u + 1, \dots, \frac{n-1}{2} + u \right\}$ and it contains exactly $2u$ integers. The set $B_u$ is in bijection with the set of odd elements in $\left[ -(2u-1), 2u-1 \right] \cap \mathbb{Z}$ by means of the map $\mathfrak{b} \mapsto i := 2\mathfrak{b}-n$. Analogously, the elements $x \in -qB_u \cap B_u$ correspond to an integer $j := 2x - n$. It is clear that $x \equiv -q \mathfrak{b} \pmod n$ and, after multiplying by $2$, one deduces that we are looking for pairs of odd integers $(i,j)$ such that $j \equiv -qi \pmod n$ and belong to the square $\left( \left[ -(2u-1), 2u-1 \right] \cap \mathbb{Z} \right)^2$.

By assumption, $q \equiv \pm m \pmod n$, so we wish to find elements in the set
$$ L = \left\{ (i,j) \in \mathbb{Z}^2 \mid j \equiv mi \pmod n \right\}. $$
Since $m^2 \equiv -1 \pmod n$, $L$ is orthogonal and generated by $(1, m)$ and $(-m, 1)$; hence, the above pairs $(i,j)$ can be expressed as $(i, j) = (a - bm, am + b)$ for suitable integers $a,b$. We require $i$ and $j$ to be odd integers, which implies that the integers $a$ and $b$ must be odd because $m$ is even. Recalling that $u = \ell m + r$ for $\ell$ and $m$ as in the statement, we must find the number of pairs of odd integers $(a,b)$ that satisfy
\begin{align}
\left| a - bm \right| &\le 2\ell m + 2r - 1, \label{eq:odd_bound1} \\
\left| am + b \right| &\le 2\ell m + 2r - 1. \label{eq:odd_bound2}
\end{align}

Let us denote by
$$ G \subset \mathbb{Z}^2 $$
the set of such solutions. The points with $\left| a \right|, \left| b \right| \le 2\ell-1$ satisfy the above inequalities by the triangle inequality and the facts that $r \ge 0$ and $\ell < \frac{m}{2}$ (this last inequality holds because $u \le \frac{m^2}{2}$). This allows us to define an initial candidate grid
$$ G_0 = \left\{ (a, b) \in \mathbb{Z}^2 : \left| a \right| \le 2\ell-1, \left| b \right| \le 2\ell-1, \, a \text{ and } b \text{ are odd} \right\} \subseteq G $$
whose cardinality is $(2\ell)^2 = 4\ell^2$. Moreover, the pair of Inequalities \eqref{eq:odd_bound1} and \eqref{eq:odd_bound2} is invariant under the map $R$ defined in the proof of \Cref{prop:hdimo}.

Now, let us show that $G$ does not contain points with $a \ge 2\ell+3$. Inequalities \eqref{eq:odd_bound1} and \eqref{eq:odd_bound2} provide the following bounds for $b$:
\begin{align}
L(a) := \frac{a - (2\ell m + 2r - 1)}{m} &\le b \le \frac{a + 2\ell m + 2r - 1}{m}, \label{eq:boundb1bis} \\
-(2\ell m + 2r - 1) - am &\le b \le 2\ell m + 2r - 1 - am =: U(a). \label{eq:boundb2bis}
\end{align}
The value $L(a)$ increases (and $U(a)$ decreases) as $a$ increases; therefore, it suffices to prove that $L(2\ell+3) > U(2\ell+3)$ to conclude that $G$ contains no points with $a \ge 2\ell+3$. Reasoning by contradiction, if $L(2\ell+3) \le U(2\ell+3)$, then
$$ \frac{2\ell(1-m) + 4 - 2r}{m} \le 2r - 3m - 1, $$
and thus
$$ 3m^2 + m + 4 \le 2r(m+1) + 2\ell(m-1). $$
Since $r \le m-1$ and $\ell < \frac{m}{2}$, we have
$$ 3m^2 + m + 4 < 2(m-1)(m+1) + 2\left(\frac{m}{2}\right)(m-1) = 3m^2 - m - 2, $$
a contradiction to the fact that $m > 0$.

It remains to study the existence of solutions $(a,b) \in G$ with $a = 2\ell+1$. We divide our study into three cases.

\textit{Case 1: $r \le \frac{m}{2} - \ell$.} The upper bound $U(2\ell+1) = 2r - m - 1$ in Inequalities \eqref{eq:boundb2bis} indicates that the maximum allowed value for $b$ is $-2\ell - 1$, which is the only odd integer under consideration satisfying $\left| b \right| \le 2\ell+1$. Then, Condition \eqref{eq:odd_bound1} yields
$$ \left| 2\ell + 1 - (-2\ell - 1)m \right| = 2\ell m + m + 2\ell + 1, $$
which is strictly greater than $2\ell m + 2r - 1$. Therefore,
$$ \card(B_u \cap -qB_u) = \card(G_0) = 4\ell^2. $$

\textit{Case 2: $\frac{m}{2} - \ell < r \le \frac{m}{2} + \ell + 1$.} Once $r > \frac{m}{2} - \ell$, new solutions are captured. First, let us specify the allowable interval for $b$ from Conditions \eqref{eq:boundb1bis} and \eqref{eq:boundb2bis} when $a = 2\ell+1$. Let $U_1 = \frac{2\ell(m+1)+2r}{m}$ and $L_2 = -4\ell m - m - 2r + 1$ denote the upper and lower bounds in Conditions \eqref{eq:boundb1bis} and \eqref{eq:boundb2bis}, respectively. Note that
$$ L(2\ell+1) - L_2 = \frac{2\ell m(2m - 1) + m(m - 1) + 2r(m - 1) + 2\ell + 2}{m}, $$
and the facts that $m \ge 2$ and $\ell, r \ge 0$ ensure that the numerator is strictly positive. Hence, $L(2\ell+1) > L_2$. Moreover,
$$ U_1 - U(2\ell+1) = \frac{(m+2\ell)(m+1) - 2r(m-1)}{m}. $$
Since $2r \le m + 2\ell + 2$ in this case, we deduce that the numerator is greater than or equal to
$$ (m+2\ell)(m+1) - (m + 2\ell + 2)(m-1) = 4\ell+2 > 0. $$
This proves that $U(2\ell+1) < U_1$, and therefore the bounding interval for $b$ given by Conditions \eqref{eq:boundb1bis} and \eqref{eq:boundb2bis} is $\left[ L(2\ell+1), U(2\ell+1) \right]$.

Now, by symmetry across the axes, let us show that the valid odd values for $b$ belong to $\left[ -U(2\ell+1), U(2\ell+1) \right]$. Indeed, $-U(2\ell+1) \ge L(2\ell+1)$ because
$$ -2r + m + 1 \ge \frac{2\ell + 1 - 2\ell m - 2r + 1}{m}, $$
or equivalently,
$$ (m+2)(m-1) + 2\ell(m-1) = m^2 + m - 2\ell - 2 + 2\ell m \ge 2rm - 2r = 2r(m-1), $$
that is, $m + 2 + 2\ell \ge 2r$, which is assumed in this case.

Let us count the new solutions beyond those in Case 1. Let $r = \frac{m}{2} - \ell + r'$ for some $r' > 0$. Substituting into the upper bound, we have $U(2\ell+1) = 2r' - 2\ell - 1$. For every unit increment of $r$ (and thus of $r'$), the upper bound $U(2\ell+1)$ increases by $2$ units, capturing exactly $1$ new valid (odd) integer $b$. The rotational invariance under the map $R$ of the solutions in $G$ shows that each unit increment in $r$ adds $4$ new solutions. Therefore,
$$ \card(B_u \cap -qB_u) = 4\ell^2 + 4r' = 4\ell^2 + 4\left(r - \left(\frac{m}{2} - \ell\right)\right) = 4\ell^2 + 4\left(r + \ell - \frac{m}{2}\right). $$
Finally, let us consider the remaining case.

\textit{Case 3: $r > \frac{m}{2} + \ell + 1$.} Let us show that the bounds for $b$ given by Conditions \eqref{eq:boundb1bis} and \eqref{eq:boundb2bis} contain the interval $\left[ -(2\ell+1), 2\ell+1 \right]$. For the upper bounds, on the one hand, we have $U(2\ell+1) > 2\ell + 1$ since $U(2\ell+1) - (2\ell + 1) = 2r - m - 2\ell - 2 = 2\left(r - \left(\frac{m}{2} + \ell + 1\right)\right) > 0$ by the hypothesis in this case. The other upper bound is
$$ U_1 = \frac{2\ell m + 2\ell + 2r}{m} = 2\ell + \frac{2\ell + 2r}{m}, $$
and the fact that $2r > m + 2\ell + 2$ implies $U_1 > 2\ell + 1 + \frac{4\ell + 2}{m} > 2\ell + 1$.

For the lower bounds, we have $-(2\ell + 1) > L(2\ell+1) = \frac{2\ell + 1 - (2\ell m + 2r - 1)}{m}$, since
$$ -2\ell m - m \ge 2\ell + 1 - 2\ell m - 2r + 1, $$
or equivalently, $2r \ge m + 2\ell + 2$, which is our assumption. Substituting this into the other lower bound $L_2 = -4\ell m - m - 2r + 1$, we obtain
$$ L_2 < -4\ell m - m - (m + 2\ell + 2) + 1 = -4\ell m - 2m - 2\ell - 1 < -(2\ell + 1). $$

Thus, all pairs of odd integers $(a,b)$ with $\left| a \right|, \left| b \right| \le 2\ell + 1$ belong to $G$, giving a total of $4(\ell + 1)^2$ solutions; therefore,
$$ \card(B_u \cap -qB_u) = 4(\ell + 1)^2. $$
\end{proof}

With the help of the two previous propositions, we next provide the analogue of Theorem \ref{thm:qplusone} for the Hermitian inner product. It can be proved in the same way.

\begin{thm}\label{thm:qplusonehermitian}
Keep the notation as at the beginning of Subsection \ref{subsec:BCH}. Assume that $s=2\varsigma$ and $q$ is a prime power. Let $n > 2$ be a positive integer such that $n = m^2 + 1$ for some positive integer $m$, $n$ divides $q^2+1$, and $q \equiv \pm m \pmod n$. Suppose also that if $n$ is even, then $q$ is odd. Let $u$ and $B_u$ be as in \Cref{thm:qminusone}. Then, the code $(\BCH_{q^2}^N(\mathcal{A}+B_u))^{\perp_h}$ is a classical
$$ \left[ N, N-\frac{N}{n}(2u - ((n+1) \bmod 2)), 2u - ((n+1) \bmod 2) + 1 \right]_{q^2} $$
optimal LRC with locality $$(r,\delta) = \left( n-2u + ((n+1) \bmod 2), 2u - ((n+1) \bmod 2) + 1 \right).$$ Therefore, the EAQECC $Q'(\BCH_{q^2}^N(\mathcal{A}+B_u))$ is an optimal pure EA $(r,\delta)$-QLRC with parameters
$$ \left[\left[ N, N-2\frac{N}{n}(2u - ((n+1) \bmod 2))+c, 2u - ((n+1) \bmod 2) + 1; c \right]\right]_q, $$
and locality $(r,\delta) = \left( n-2u + ((n+1) \bmod 2), 2u - ((n+1) \bmod 2) + 1 \right)$. The value $c = \frac{N}{n}\card(B_u \cap -qB_u)$ can be computed, according to the parity of $q$ and $n$, with the formulae in \Cref{prop:hdimo} or \Cref{prop:hdime}.
\end{thm}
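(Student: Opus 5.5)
The plan is to mirror the proof of \Cref{thm:qplusone}, with every Euclidean ingredient replaced by its Hermitian counterpart. Write $\kappa:=\card(B_u)=2u-((n+1)\bmod 2)$. If $n$ is even, $B_u=\{n/2-u+1,\dots,n/2+u-1\}$, so $\kappa=2u-1$. If $n$ is odd, $B_u=\{\frac{n-1}{2}-u+1,\dots,\frac{n-1}{2}+u\}$, so $\kappa=2u$.

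\textbf{Step 1: $B_u$ is $q^2$-complete modulo $n$.} Since $n\mid q^2+1$, we have $q^2\equiv -1\pmod n$. Moreover $B_u$ is symmetric under $x\mapsto n-x$: its elements are centred at $n/2$ when $n$ is even, and at $\frac{n-1}{2}+\frac12$ when $n$ is odd. Hence $q^2B_u=-B_u=B_u$, and $B_u$ is $q^2$-complete modulo $n$. By \Cref{prop1}, the set $\mathcal{A}+B_u$ is $q^2$-complete modulo $N$.

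\textbf{Step 2: the classical code.} Apply the Hermitian version of \Cref{prop:qui} with $A=\mathcal{A}$ and $B=B_u$. By \Cref{prop1}(3), $d_A=n$. Because $B_u$ consists of $\kappa$ consecutive integers, the BCH bound (\Cref{rem:2s}) gives $d_B^{\perp_h}\ge\kappa+1$. Take $\delta=\kappa+1\le n$; this is allowed since $u\le n/2$. Then $(\BCH_{q^2}^N(\mathcal{A}+B_u))^{\perp_h}$ is an $(r,\delta)$-LRC with $r=n-\kappa$. By the Hermitian part of \Cref{lem:Bdual}, this dual code is $\BCH_{q^2}^N(\mathcal{A}+(\Z_n\setminus(-qB_u)))$, which has dimension $N-\frac Nn\kappa$. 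Its minimum distance is at least $\kappa+1$ by the BCH bound. The Singleton computation of \Cref{thm:qminusone} then applies verbatim: since $\lceil N/n\rceil=N/n$, the bound \eqref{Singleform} is met with equality. This also forces the minimum distance to be exactly $\kappa+1$, so the code is optimal.

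\textbf{Step 3: the quantum code.} By \Cref{coro3}(1), $Q'(\BCH_{q^2}^N(\mathcal{A}+B_u))$ is an EA $(r,\delta)$-QLRC that satisfies \eqref{eq:EASing} with equality. The dimension comes from the formula stated before \Cref{coro1}: it equals $2(N-\frac Nn\kappa)-N+c$. By \Cref{prop:hulldim}(2), $c=\frac Nn\card(B_u\cap(-qB_u))$, and this cardinality is supplied by \Cref{prop:hdimo} when $n$ is even and $q$ is odd, and by \Cref{prop:hdime} when $n$ is odd. The stated parameters follow.

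\textbf{Step 4: purity.} This is the main point to check. Apply the Hermitian alternative of \Cref{pro:purity}; its hypotheses hold because $B_u$ is consecutive and $q^2$-complete. Either the distance of the dual code is attained outside the hull, which is exactly purity, or the dual equals its intersection with the code. The second alternative would give $c=0$ and leave the quantum distance possibly undetermined. In that degenerate case I would still argue purity using the convention $d(\emptyset)=\infty$, or otherwise exclude the case by noting that $\card(B_u\cap -qB_u)<\card(B_u)$ forces a proper inclusion. I expect this dichotomy, together with the check that Propositions \ref{prop:hdimo} and \ref{prop:hdime} cover every admissible $u$, to be the only delicate part; the rest is a routine substitution $\perp_e\to\perp_h$, $q\to q^2$.
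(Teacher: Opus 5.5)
Your Steps 1--3 follow the paper's route: the paper says only that the result is proved like \Cref{thm:qplusone}, i.e.\ via \Cref{prop:qui}, the BCH bound, the Singleton computation, \Cref{coro3}, and \Cref{prop:hulldim}(2) with \Cref{prop:hdimo}/\Cref{prop:hdime}. Those steps are correct.

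The gap is in Step 4. You rightly notice that \Cref{pro:purity} is a dichotomy; the paper cites it without discussing the second branch. However, you misidentify that branch and neither remedy works. Write $D:=(\BCH_{q^2}^N(\mathcal{A}+B_u))^{\perp_h}$. The second branch says $D=D\cap D^{\perp_h}$, i.e.\ $D\subseteq D^{\perp_h}$. The entanglement of $Q'(\BCH_{q^2}^N(\mathcal{A}+B_u))=Q(D)$ is $c=\dim D^{\perp_h}-\dim(D\cap D^{\perp_h})$, not $\dim D-\dim(D\cap D^{\perp_h})$. So this branch gives $c=\dim D^{\perp_h}-\dim D$ and logical dimension $2\dim D-N+c=\dim D+\dim D^{\perp_h}-N=0$. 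It is the case $k=0$, not $c=0$. (The case $c=0$ genuinely occurs and is harmless, e.g.\ $n$ odd and $u\le m/2$ by \Cref{prop:hdime}.)

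Your first remedy fails because the convention $d(\emptyset)=\infty$ would make the quantum distance $\infty\neq d_H(D)=\kappa+1$, which is the opposite of purity. Your second remedy fails because $\card(B_u\cap -qB_u)<\card(B_u)$ is false for $n$ even and $u=n/2$, where $B_u=\Z_n\setminus\{0\}=-qB_u$. Moreover, that inequality is equivalent to $D\cap D^{\perp_h}\neq\{0\}$, which says nothing about whether $D\subseteq D^{\perp_h}$.

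The correct exclusion is short and uses only the dimension formulas. With $\kappa=\card(B_u)=\card(-qB_u)$ (multiplication by $-q$ is a bijection of $\Z_n$ since $\gcd(q,n)=1$), the logical dimension is
$N-2\frac{N}{n}\kappa+\frac{N}{n}\card(B_u\cap -qB_u)=\frac{N}{n}\bigl(n-\card(B_u\cup -qB_u)\bigr)$.
Since $u\le n/2$ (so $u\le (n-1)/2$ when $n$ is odd), the smallest element of $B_u$ is at least $1$, so $0\notin B_u$, and hence $0\notin -qB_u$. Therefore $k\ge \frac{N}{n}>0$, and the second branch of \Cref{pro:purity} can never occur. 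The first branch gives purity, and the same computation verifies the hypothesis $k>0$ of \Cref{coro3}.

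Your other worry is immediate. Every $1\le u\le n/2$ can be written as $\ell m+r$ with $0\le r<m$ by division. Also, $n$ even forces $q$ odd because $n\mid q^2+1$. So \Cref{prop:hdimo} and \Cref{prop:hdime} cover all admissible cases.
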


\subsection{EA $(r,\delta)$-QLRCs from homothetic-BCH codes}\label{subsec:homo}
Keep the notation as at the beginning of \Cref{sec:BCHhomo}. Consider the sets $\mathcal{A}$, $B_{u,v}$, and $B_{u}$ introduced in Subsection \ref{subsec:BCH} and recall that $\beta, \zeta_n \in \mathbb{F}_{q^s}$ are primitive $N$-th and $n$-th roots of unity, respectively. Denote by $B \subseteq \mathbb{Z}_n$ a $q$-complete set modulo $n$, which will be chosen to be $B_{u,v}$ or $B_{u}$. Let $\ell$ be a positive integer and $\lambda := \frac{N}{n}-\ell$. Consider the homothetic-BCH code
$$ \mathcal{S}_{\mathcal{A}+B}^{P,q} \subseteq \mathbb{F}_q^{N-\ell n}, $$
where
$$ P = \left\{1, \zeta_n, \dots, \zeta_n^{n-1}, \beta, \beta\zeta_n, \dots, \beta\zeta_n^{n-1}, \dots, \beta^{\lambda-1}, \beta^{\lambda-1}\zeta_n, \dots, \beta^{\lambda-1}\zeta_n^{n-1}\right\}. $$
The homothetic-BCH code $\mathcal{S}_{\mathcal{A}+B}^{P,q}$ coincides with the punctured code $\pi_{\left\{1,\dots,N-\ell n\right\}}(\BCH_q^N(\mathcal{A}+B))$ for a certain order of coordinates \cite[Proposition 1.5.8]{HMCthesis}. We will use this order.

\begin{lem}\label{lem:hdim}
The dimension of the homothetic-BCH code $\mathcal{S}_{\mathcal{A}+B}^{P,q}$ equals
\begin{equation}
\dim \left(\mathcal{S}_{\mathcal{A}+B}^{P,q}\right) = \left( \frac{N}{n}-\ell \right) \card(B).
\end{equation}
\end{lem}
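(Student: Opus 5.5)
The plan is to show that the puncturing map $\pi_{\{1,\dots,N-\ell n\}}$ restricted to $\BCH_q^N(\mathcal{A}+B)$ is injective on a subspace of the right dimension, and that the image has exactly that dimension. Since $\mathcal{S}_{\mathcal{A}+B}^{P,q}$ coincides with this punctured code (in the coordinate order fixed above), the lemma becomes a statement about the BCH code.

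First, I will describe $\BCH_q^N(\mathcal{A}+B)$ concretely. A polynomial supported on $\mathcal{A}+B$ can be written as
\[
f(X)=\sum_{b\in B} X^b h_b(X^n),
\]
with $\deg h_b<N/n$. Evaluating at $\beta^{j}\zeta_n^{k}$, for $0\le j<N/n$ and $0\le k<n$, gives
\[
f(\beta^j\zeta_n^k)=\sum_{b\in B}\beta^{jb}\zeta_n^{kb}\,h_b(\beta^{jn}).
\]
Thus on each coset block $\beta^jU(n)$ the codeword is the image of the vector $(\beta^{jb}h_b(\beta^{jn}))_{b\in B}$ under the injective map $\mathbb{F}_{q^s}^{\card B}\to\mathbb{F}_{q^s}^n$ given by $\ev_{U(n)}$ on $\spn\{X^b:b\in B\}$. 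This is the MDS code $C_B^n$ of dimension $\card(B)$.

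Second, I will use that the values $\beta^{jn}$, $0\le j<N/n$, are the $N/n$ distinct $(N/n)$-th roots of unity. Hence $h_b$ is arbitrary and determined by its values there, so $C^N_{\mathcal{A}+B}$ is isomorphic to a direct sum of $N/n$ block copies of $C_B^n$. Puncturing to the first $\lambda=N/n-\ell$ blocks, the homothetic code $\mathcal{H}^P_{\mathcal{A}+B}$ is the direct sum of $\lambda$ copies of $C_B^n$ (the values $h_b(\beta^{jn})$ for $j<\lambda$ are free). Its dimension over $\mathbb{F}_{q^s}$ is therefore $\lambda\card(B)$.

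Third, I will pass to the subfield subcode. The main obstacle is that blocks $\beta^jU(n)$ are not individually Galois-stable in general, since Frobenius sends $\beta^j$ to $\beta^{qj}$. The cleanest route is to avoid the homothetic code directly and use the identification $\mathcal{S}^{P,q}_{\mathcal{A}+B}=\pi(\BCH_q^N(\mathcal{A}+B))$.
\begin{itemize}
\item Upper bound: $\dim\pi(\BCH)\le\dim_{\mathbb{F}_q}\mathcal{S}\le\dim_{\mathbb{F}_{q^s}}\mathcal{H}=\lambda\card B$.
\item Lower bound: I need $\dim\ker\pi|_{\BCH}\le \ell\card B$, since $\dim\BCH_q^N(\mathcal{A}+B)=(N/n)\card B$. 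The kernel consists of BCH codewords vanishing on the first $\lambda$ blocks. Tensoring with $\mathbb{F}_{q^s}$, this is contained in the $\mathbb{F}_{q^s}$-subspace of $C^N_{\mathcal{A}+B}$ of words vanishing on those blocks. By the block decomposition that subspace has dimension exactly $\ell\card B$. An $\mathbb{F}_q$-subspace of $\mathbb{F}_q^N$ has $\mathbb{F}_q$-dimension equal to the $\mathbb{F}_{q^s}$-dimension of its span, so the bound follows.
\end{itemize}

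Combining the two bounds gives $\dim\mathcal{S}^{P,q}_{\mathcal{A}+B}=(N/n-\ell)\card(B)$.

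The delicate point is the description of the kernel. It relies on the block structure coming from $\mathcal{A}=n\mathbb{Z}_N$, which makes $X^n$ constant on each block, together with the facts that $\card B\le n$ and the exponents in $B$ are distinct modulo $n$.
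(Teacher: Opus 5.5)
Your proof is correct, and its counting skeleton is the paper's. Both start from $\dim \BCH_q^N(\mathcal{A}+B)=\frac{N}{n}\card(B)$ and bound separately what lives on the first $\lambda=\frac{N}{n}-\ell$ blocks and what lives on the last $\ell$ blocks. Your kernel of the puncturing map embeds injectively into $\pi_K(\BCH_q^N(\mathcal{A}+B))$, with $L,K$ as in the paper, so your rank--nullity step is the paper's inclusion $\BCH_q^N(\mathcal{A}+B)\subseteq \pi_L(\cdot)\times\pi_K(\cdot)$ in another guise.

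Where you differ is in how the per-block bounds are justified. The paper imports them from Proposition 1.5.8 of the thesis it references, using it for two things:
\begin{enumerate}
\item the identification $\mathcal{S}^{P,q}_{\mathcal{A}+B}=\pi_L(\BCH_q^N(\mathcal{A}+B))$;
\item the fact that each $\pi_{I_j}(\BCH_q^N(\mathcal{A}+B))$ is a BCH code with defining set $B$.
\end{enumerate}
After that it only needs the upper bound $\card(B)$ on each block.

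You instead derive the decomposition $C^N_{\mathcal{A}+B}=(C^n_B)^{N/n}$ over $\mathbb{F}_{q^s}$ directly from $f=\sum_{b\in B}X^bh_b(X^n)$. You then descend to $\mathbb{F}_q$ with the two standard dimension facts for subfield subcodes and $\mathbb{F}_{q^s}$-spans. This makes your argument self-contained. Moreover, your chain uses only the trivial inclusion $\pi_L(\BCH_q^N(\mathcal{A}+B))\subseteq\mathcal{S}^{P,q}_{\mathcal{A}+B}$, so it proves the identification with the punctured code rather than assuming it.

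Two side remarks.

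\emph{The obstacle you flag is not real.} Frobenius acts coordinatewise on codewords, not on the evaluation points. The twist $\beta^{jb}$ only rescales the basis vectors $\ev_{U(n)}(X^b)$, so the block-$j$ component of $C^N_{\mathcal{A}+B}$ is the same code $C^n_B$ for every $j$. A blockwise product intersected with $\mathbb{F}_q^{\lambda n}$ is the product of the blockwise subfield subcodes. Hence your second step already gives $\mathcal{S}^{P,q}_{\mathcal{A}+B}=(\BCH^n_q(B))^{\lambda}$ in block coordinates. The dimension $\lambda\card(B)$ then follows from the $q$-completeness of $B$ modulo $n$, with no kernel argument at all.

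\emph{The word ``MDS'' is not justified in general.} The code $C^n_B$ is MDS only when $B$ is a cyclic interval, so the adjective is wrong for a general $q$-complete $B$. You never use it, so the proof is unaffected.
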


\begin{proof}
Consider the subsets $I_j := \left\{ jn + 1, \ldots, jn+n \right\}$, $0 \le j \le \frac{N}{n}-1$, of the set $\left\{1,\dots,N\right\}$. Define $L := I_0 \cup \cdots \cup I_{\frac{N}{n}-\ell-1}$ and $K := I_{\frac{N}{n}-\ell} \cup \cdots \cup I_{\frac{N}{n}-1}$. Then, $\mathcal{S}_{\mathcal{A}+B}^{P,q} = \pi_L(\BCH_q^N(\mathcal{A}+B))$.
By \cite[Proposition 1.5.8]{HMCthesis}, $\pi_{I_j}(\BCH_q^N(\mathcal{A}+B)) \subseteq \mathbb{F}_q^n$ is the BCH code (written using homothetic-BCH code notation)
\[
\mathcal{S}_{\mathcal{A}+B}^{\left\{\beta^j, \beta^j\zeta_n, \beta^j\zeta_n^2, \ldots, \beta^j\zeta_n^{n-1}\right\},q},
\]
which is equal to
\[
\mathcal{S}_{B}^{\left\{\beta^j, \beta^j\zeta_n, \beta^j\zeta_n^2, \ldots, \beta^j\zeta_n^{n-1}\right\},q},
\]
as the set $\mathcal{A}+B$ is congruent to $B$ modulo $n$ by the definition of $\mathcal{A}$.
Then, we have $\dim \pi_{I_j}(\BCH_q^N(\mathcal{A}+B)) = \card(B)$ by the paragraph following \Cref{def:BCH}.

Since
$$ \BCH_q^N(\mathcal{A}+B) \subseteq \pi_L(\BCH_q^N(\mathcal{A}+B)) \times \pi_K(\BCH_q^N(\mathcal{A}+B)), $$
it follows that
\begin{multline*}
\frac{N}{n}\card(B) = \dim \left(\BCH_q^N(\mathcal{A}+B)\right) \le \\ \dim \left(\pi_L(\BCH_q^N(\mathcal{A}+B))\right) + \dim \left(\pi_K(\BCH_q^N(\mathcal{A}+B))\right).
\end{multline*}
Moreover, by the same reasoning, we have
\begin{multline*} \dim \left(\pi_L(\BCH_q^N(\mathcal{A}+B))\right) \le \sum_{j=0}^{\frac{N}{n}-1 -\ell} \dim \left(\pi_{I_j}(\BCH_q^N(\mathcal{A}+B))\right) = \\ \left(\frac{N}{n} -\ell\right)\card(B)
\end{multline*}
and
$$ \dim \left(\pi_K(\BCH_q^N(\mathcal{A}+B))\right) \le \sum_{j=\frac{N}{n} -\ell}^{\frac{N}{n}-1} \dim\left( \pi_{I_j}(\BCH_q^N(\mathcal{A}+B))\right) = \ell\card(B).
$$
Combining the above three expressions concludes the proof.
\end{proof}

We are going to provide EA quantum $(r,\delta)$-LRCs from homothetic-BCH codes of the form $\mathcal{S}_{\mathcal{A}+B}^{P,q}$. Our next result proves that their Euclidean and Hermitian duals are classical $(r,\delta)$-LRCs. This is a direct consequence of the fact that $(\mathcal{S}_{\mathcal{A}+B}^{P,q})^{\perp_e}$ is a shortened code of $\BCH_q^N(\mathcal{A}+B)^{\perp_e}$ by \cite{Pless} (and the same holds for the Hermitian inner product), \Cref{thm:qminusone}, and \Cref{lem:hdim}.

\begin{cor}\label{cor:homoclasrd}
Keep the notation as above. If $B$ is chosen to be $B_{u,v}$ or $B_u$, then the code $(\mathcal{S}_{\mathcal{A}+B}^{P,q})^{\perp_e}$ is a classical
$$ \left[ N-\ell n, \left(\frac{N}{n}-\ell\right)(n-\card(B)), \card(B)+1 \right]_q $$
optimal LRC with locality $(r,\delta) = \left( n-\card(B), \card(B)+1 \right)$.

Similarly, in the case $s=2\varsigma$, the code $(\mathcal{S}_{\mathcal{A}+B}^{P,q^2})^{\perp_h}$ is a classical $q^2$-ary optimal LRC with the same parameters and locality $(r,\delta)$ as above.
\end{cor}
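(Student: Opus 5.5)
The plan is to establish four things in order: the length and dimension of $(\mathcal{S}_{\mathcal{A}+B}^{P,q})^{\perp_e}$, its minimum distance, its locality, and then optimality via the Singleton-like bound \eqref{Singleform}. Throughout, I use the coordinate order fixed above, in which $\mathcal{S}_{\mathcal{A}+B}^{P,q}=\pi_L(\BCH_q^N(\mathcal{A}+B))$ with $L=I_0\cup\cdots\cup I_{\frac{N}{n}-\ell-1}$.

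\emph{Length and dimension.} The length is $\lambda n=N-\ell n$. By \Cref{lem:hdim}, $\dim\mathcal{S}_{\mathcal{A}+B}^{P,q}=\left(\frac{N}{n}-\ell\right)\card(B)$, so the dual has dimension $\left(\frac{N}{n}-\ell\right)(n-\card(B))$.

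\emph{Shortening identity.} By the standard duality between puncturing and shortening \cite{Pless}, $(\pi_L(C))^{\perp_e}=\sigma_L(C^{\perp_e})$. Applied to $C=\BCH_q^N(\mathcal{A}+B)$, this gives that $(\mathcal{S}_{\mathcal{A}+B}^{P,q})^{\perp_e}$ is the shortening on $L$ of $(\BCH_q^N(\mathcal{A}+B))^{\perp_e}$. Shortening never decreases minimum distance. For $B=B_{u,v}$, \Cref{thm:qminusone} gives the bound $\card(B)+1$, and \Cref{thm:qplusone} gives the same bound for $B_u$; for $B_u$, $\card(B)$ is what appears in those theorems as $2u-((n+1)\bmod 2)$. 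Hence $d\ge \card(B)+1$.

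\emph{Locality.} The recovery sets of $(\BCH_q^N(\mathcal{A}+B))^{\perp_e}$ from \Cref{prop:qui} are the blocks $I_j$. This is because the local code on $I_j$ is the dual of the length-$n$ MDS code $\BCH_q^n(B)$ (restrict to the coset $\beta^jU(n)$, as in the proof of \Cref{lem:hdim}), and that dual has distance $\card(B)+1$. Since $L$ is a union of whole blocks, for $j\le \frac{N}{n}-\ell-1$ we have $\pi_{I_j}(\sigma_L(D))\subseteq\pi_{I_j}(D)$, where $D$ is the full dual code. So each coordinate of the shortened code still has a recovery set $I_j$ of size $n=r+\delta-1$ on which the restricted code has distance at least $\delta=\card(B)+1$. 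Thus the code has locality $(n-\card(B),\card(B)+1)$.

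\emph{Optimality.} I plug into \eqref{Singleform}, with $k=(\frac{N}{n}-\ell)(n-\card(B))$ and $r=n-\card(B)$, so that $\lceil k/r\rceil=\frac{N}{n}-\ell$. The left-hand side is at least
\[(\tfrac{N}{n}-\ell)(n-\card(B))+\card(B)+1+(\tfrac{N}{n}-\ell-1)\card(B)=N-\ell n+1,\]
which is the length plus one. The bound therefore forces $d=\card(B)+1$ exactly, and equality holds.

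The Hermitian case is verbatim the same, using the Hermitian versions of \Cref{prop:qui} and \Cref{lem:Bdual}, together with $(\pi_L C)^{\perp_h}=\sigma_L(C^{\perp_h})$, which holds since the Frobenius map is coordinatewise.

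The main obstacle I expect is justifying that the local restriction stays the full MDS dual and keeps distance at least $\delta$ after shortening. Here the only needed direction is containment, which makes the distance bound automatic. The remaining subtle point is the nonemptiness and dimension count, which \Cref{lem:hdim} already handles.
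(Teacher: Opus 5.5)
Your proposal is correct and is essentially the paper's argument, which obtains the corollary from three ingredients: the puncturing--shortening duality (making $(\mathcal{S}_{\mathcal{A}+B}^{P,q})^{\perp_e}$ a shortening of $(\BCH_q^N(\mathcal{A}+B))^{\perp_e}$), the BCH-case theorems, and \Cref{lem:hdim}. You spell out the steps the paper leaves implicit: shortening on a union of whole blocks $I_j$ preserves both the distance bound and the block recovery sets, and the Singleton-like bound then forces $d=\card(B)+1$ with equality.
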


The next proposition shows that the Euclidean and Hermitian duals of the above homothetic-BCH codes are again homothetic-BCH codes, with defining sets as in the BCH case.

\begin{pro}\label{prop10}
Keep the notation as above. Let $B \subseteq \mathbb{Z}_n$ be a $q$-complete set modulo $n$. Then, the following statements hold.
\begin{enumerate}
	\item The Euclidean dual $(\mathcal{S}_{\mathcal{A}+B}^{P,q})^{\perp_e}$ is the code
	$$ (\mathcal{S}_{\mathcal{A}+B}^{P,q})^{\perp_e} = \mathcal{S}_{\mathcal{A}+\left(\{0, \ldots, n-1\} \setminus -B\right)}^{P,q}. $$
	\item In the case $s=2\varsigma$, the Hermitian dual $(\mathcal{S}_{\mathcal{A}+B}^{P,q^2})^{\perp_h}$ is the code
	$$ (\mathcal{S}_{\mathcal{A}+B}^{P,q^2})^{\perp_h} = \mathcal{S}_{\mathcal{A}+\left(\{0, \ldots, n-1\} \setminus -qB\right)}^{P,q^2}. $$
\end{enumerate}
\end{pro}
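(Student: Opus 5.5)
The plan is to prove the inclusion ``$\supseteq$'' by a structural argument, and then get equality by comparing dimensions. The structural tool is that the homothetic-BCH code is a block code, block by block, of BCH codes of length $n$. Throughout I work in the coordinate order fixed above, where $\mathcal{S}_{\mathcal{A}+B}^{P,q}=\pi_L(\BCH_q^N(\mathcal{A}+B))$ with $L=I_0\cup\cdots\cup I_{\frac{N}{n}-\ell-1}$. Set $B^{\perp_e}:=\{0,\dots,n-1\}\setminus(-B)$. I treat the Euclidean case in detail; the Hermitian case is identical with $-qB$ in place of $-B$ and $q^2$ in place of $q$.

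\emph{Step 1: block decomposition.} The proof of \Cref{lem:hdim} shows that $\dim\pi_L(\BCH_q^N(\mathcal{A}+B))$ equals the sum of the dimensions of the block projections $\pi_{I_j}$, $0\le j\le \frac{N}{n}-\ell-1$. Since $\pi_L(C)$ always sits inside the direct product of its block projections, this dimension count forces
$$\mathcal{S}_{\mathcal{A}+B}^{P,q}=\prod_{j=0}^{\frac{N}{n}-\ell-1}\pi_{I_j}(\BCH_q^N(\mathcal{A}+B)).$$
By \cite[Proposition 1.5.8]{HMCthesis}, each factor is the length-$n$ code $\mathcal{S}_B^{\{\beta^j\zeta_n^i\}_i,q}$. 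Because $X^e$ for $e\in\mathcal{A}+B$ agrees with $X^{e \bmod n}$ on $\zeta_n$-cosets up to the scalar $\beta^{j(e-(e\bmod n))}$, this factor is the image of $\BCH_q^n(B)$ under the diagonal scaling $\mathrm{diag}(\beta^{jk}\zeta_n^{ik})$. More carefully, it is the evaluation of $\mathrm{span}\{X^b:b\in B\}$ at the points $\beta^j\zeta_n^i$, intersected with $\F_q^n$.

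\emph{Step 2: duality blockwise.} The dual of a direct product is the product of the duals. So it suffices to show, for each $j$, that the Euclidean dual of the block code given by $B$ on the coset $\beta^jU(n)$ is the block code given by $B^{\perp_e}$ on the same coset. The inclusion ``$\subseteq$'' of the first code into the dual of the second is a character-sum computation. For $b\in B$ and $b'\in B^{\perp_e}$ one has
$$\sum_i (\beta^j\zeta_n^i)^{b+b'}=\beta^{j(b+b')}\sum_i\zeta_n^{i(b+b')}=0,$$
because $b+b'\not\equiv 0\pmod n$. This orthogonality holds for the codes over $\F_{q^s}$ and therefore also for their subfield subcodes. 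Applying the trace to reach subfield subcodes is harmless: orthogonality of the extension codes implies orthogonality of their $\F_q$-subcodes.

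\emph{Step 3: dimensions.} It remains to show that both block codes have full dimensions, namely $\card(B)$ and $n-\card(B)$. For $B$ this is stated in the proof of \Cref{lem:hdim}. For $B^{\perp_e}$ the same argument applies, since $B^{\perp_e}$ is again $q$-complete modulo $n$, as $B$ and $-B$ are. The two dimensions add up to $n$, so orthogonality upgrades to equality of the block dual. Taking the product over $j$, and noting that $\mathcal{A}+B^{\perp_e}$ restricted to blocks gives exactly these factors, yields $(\mathcal{S}_{\mathcal{A}+B}^{P,q})^{\perp_e}=\mathcal{S}_{\mathcal{A}+B^{\perp_e}}^{P,q}$.

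\emph{Main obstacle.} The delicate point is Step 1, the identification of the punctured code with a direct product of block codes. This relies on the dimension equality extracted from \Cref{lem:hdim}, together with the correct handling of subfield subcodes on the shifted cosets $\beta^jU(n)$, which need not consist of $\F_q$-rational points individually. If that identification proves awkward, a fallback is to cite \cite{Pless}: the dual of a punctured code is the shortened dual. Combined with \Cref{lem:Bdual}, this gives that $(\mathcal{S}_{\mathcal{A}+B}^{P,q})^{\perp_e}$ is the shortening of $\BCH_q^N(\mathcal{A}+B^{\perp_e})$ on $L$. The shortening equals the puncturing there, again because the BCH code is the direct product of its block projections.
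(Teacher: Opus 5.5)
Your proof is correct, but it is organized differently from the paper's. The paper does not split into blocks. For $i\in\mathcal{A}+B$ and $j\in\mathcal{A}+B^{\perp_e}$ it computes the inner product over all of $P$ at once:
$\ev_P(X^i)\cdot_e\ev_P(X^j)=\bigl(\sum_{k=0}^{n-1}\zeta_n^{k(i+j)}\bigr)\bigl(\sum_{t=0}^{\lambda-1}\beta^{t(i+j)}\bigr)$.
The first factor vanishes, which gives $\mathcal{S}_{\mathcal{A}+B^{\perp_e}}^{P,q}\subseteq(\mathcal{S}_{\mathcal{A}+B}^{P,q})^{\perp_e}$, with the passage to subfield subcodes handled as you do. The paper then applies \Cref{lem:hdim} to $B$ and to $B^{\perp_e}$: the dimensions $\lambda\card(B)$ and $\lambda(n-\card(B))$ add up to the length $\lambda n$.

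You use the same two ingredients. The difference is that you first read off, from the equality case in the proof of \Cref{lem:hdim}, that the code is the direct product of its $\lambda$ block projections. You then run orthogonality and dimension counting block by block. The paper's version is shorter, because the factorization of the global character sum makes your Step 1 unnecessary. Yours costs that extra step, but it exhibits $\mathcal{S}_{\mathcal{A}+B}^{P,q}$ as $\lambda$ copies of a length-$n$ BCH code. That makes the duality, the hull dimension in \Cref{prop:homohulldim}, and the ``shortening equals puncturing'' claim in your fallback essentially transparent.

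One inaccuracy in Step 1: the factor on $\beta^jU(n)$ is not a coordinatewise (diagonal) rescaling of $\BCH_q^n(B)$. In $\ev_{\beta^jU(n)}(X^b)=\beta^{jb}\,\ev_{U(n)}(X^b)$, the scalar depends on the exponent $b$, not on the position. The correct observation is simpler and removes your ``main obstacle.'' Rescaling each spanning vector does not change the $\F_{q^s}$-span. So on that block the extension code is literally $C_B^n$, and its subfield subcode is literally $\BCH_q^n(B)$. Whether the points $\beta^j\zeta_n^i$ lie in $\F_q$ plays no role.

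Also, in Step 3, the $q$-completeness of $B^{\perp_e}$ needs one more word. Multiplication by $q$ permutes $\Z_n$ because $\gcd(q,n)=1$. Hence the complement of the $q$-complete set $-B$ is again $q$-complete.
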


\begin{proof}
Let $i \in \mathcal{A}+B$ and $j \in \mathcal{A}+ \left(\{0, \dots, n-1\} \setminus -B\right)$. The evaluation vectors $\ev_P(X^i)$ and $\ev_P(X^j)$, where $\ev_P$ denotes the evaluation map in Equation \eqref{homoevamap}, are orthogonal to each other with respect to the Euclidean inner product. This is because, similarly to \cite[Equation (5)]{hbch25}, it holds that
$$ \ev_P(X^i) \cdot_e \ev_P(X^j) = \left( \sum_{k=0}^{n-1}\zeta_n^{k(i+j)} \right) \left( 1+\beta^{i+j}+\cdots+\beta^{(\lambda-1)(i+j)}\right), $$
and the first factor on the right-hand side vanishes.
This implies that $\mathcal{S}_{\mathcal{A}+\left(\{0, \ldots, n-1\} \setminus -B\right)}^{P,q} \subseteq (\mathcal{S}_{\mathcal{A}+B}^{P,q})^{\perp_e}$.
Lemma \ref{lem:hdim} and a dimensionality argument establish equality. The same argument applies to the code $\mathcal{S}_{\mathcal{A}+\left(\{0, \ldots, n-1\} \setminus -B\right)}^{P,q}$ since $\{0, \ldots, n-1\} \setminus -B$ is also $q$-complete modulo $n$.

Statement (2) can be proved analogously, also as a consequence of \cite[Equation (5)]{hbch25}.
\end{proof}

As in \Cref{prop:hulldim} for the BCH construction, one can compute the amount of entanglement $c$, which follows from \Cref{prop10}.

\begin{pro}\label{prop:homohulldim}
Keep the notation as at the beginning of Subsection \ref{subsec:homo}. Let $B \subseteq \mathbb{Z}_n$ be a $q$-complete set modulo $n$. Then, the following statements hold.
\begin{enumerate}
	\item The Euclidean amount of entanglement
	$$ c = \dim \left(\mathcal{S}_{\mathcal{A}+B}^{P,q}\right) - \dim \left(\mathcal{S}_{\mathcal{A}+B}^{P,q} \cap (\mathcal{S}_{\mathcal{A}+B}^{P,q})^{\perp_e}\right) $$
	is equal to $\left(\frac{N}{n}-\ell\right) \card(B \cap -B)$.
	\item In the case $s=2\varsigma$, the Hermitian amount of entanglement
	$$ c = \dim \left(\mathcal{S}_{\mathcal{A}+B}^{P,q^2}\right) - \dim \left(\mathcal{S}_{\mathcal{A}+B}^{P,q^2} \cap (\mathcal{S}_{\mathcal{A}+B}^{P,q^2})^{\perp_h}\right) $$
	equals $\left(\frac{N}{n}-\ell\right) \card(B \cap (-qB))$.
\end{enumerate}
\end{pro}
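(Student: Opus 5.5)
The plan is to mirror the proof of \Cref{prop:hulldim}, using \Cref{prop10} in place of \Cref{lem:Bdual}, together with the dimension formula of \Cref{lem:hdim}. Write $C:=\mathcal{S}_{\mathcal{A}+B}^{P,q}$ and $B^{\perp_e}:=\{0,\dots,n-1\}\setminus(-B)$. By \Cref{prop10}, $C^{\perp_e}=\mathcal{S}_{\mathcal{A}+B^{\perp_e}}^{P,q}$. Since $-B$ is $q$-complete modulo $n$, so is $B^{\perp_e}$, and likewise $B\cap B^{\perp_e}$. By \Cref{prop1}(2), all the sets $\mathcal{A}+B$, $\mathcal{A}+B^{\perp_e}$ and $\mathcal{A}+(B\cap B^{\perp_e})$ are then $q$-complete modulo $N$, so \Cref{lem:hdim} applies to each of them.

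The key step is to show that
\[
C\cap C^{\perp_e}=\mathcal{S}_{\mathcal{A}+(B\cap B^{\perp_e})}^{P,q}.
\]
The inclusion $\supseteq$ is immediate, since this code lies in both $C$ and $C^{\perp_e}$. For equality, I would compare dimensions through the sum. The inclusion $C+C^{\perp_e}\subseteq \mathcal{S}_{\mathcal{A}+(B\cup B^{\perp_e})}^{P,q}$ holds, and by \Cref{lem:hdim} the latter has dimension $\lambda\,\card(B\cup B^{\perp_e})$, where $\lambda=\frac{N}{n}-\ell$. Therefore
\[
\dim(C\cap C^{\perp_e})=\dim C+\dim C^{\perp_e}-\dim(C+C^{\perp_e})\ge \lambda\bigl(\card B+\card B^{\perp_e}-\card(B\cup B^{\perp_e})\bigr)=\lambda\,\card(B\cap B^{\perp_e}).
\]
By \Cref{lem:hdim}, this lower bound is exactly the dimension of the subcode $\mathcal{S}_{\mathcal{A}+(B\cap B^{\perp_e})}^{P,q}$, so equality holds. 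This argument avoids needing a direct description of intersections of subfield subcodes, which I expect to be the main subtlety; the dimension count is what makes it work.

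It remains to compute $c$:
\[
c=\dim C-\dim(C\cap C^{\perp_e})=\lambda\bigl(\card B-\card(B\cap B^{\perp_e})\bigr)=\lambda\,\card\bigl(B\setminus B^{\perp_e}\bigr)=\lambda\,\card(B\cap -B),
\]
since $B\setminus B^{\perp_e}=B\cap(-B)$.

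The Hermitian statement follows by the same argument, using \Cref{prop10}(2), the set $B^{\perp_h}=\{0,\dots,n-1\}\setminus(-qB)$, which is $q^2$-complete, and the $q^2$-ary version of \Cref{lem:hdim}. This gives $c=\lambda\,\card(B\cap(-qB))$.
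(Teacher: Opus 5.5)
Your overall route matches the paper's, which says only that the result follows from \Cref{prop10}, with \Cref{lem:hdim} supplying the dimensions. However, your justification of the key step $C\cap C^{\perp_e}=\mathcal{S}_{\mathcal{A}+(B\cap B^{\perp_e})}^{P,q}$ runs in the wrong direction. The inclusion $C+C^{\perp_e}\subseteq \mathcal{S}_{\mathcal{A}+(B\cup B^{\perp_e})}^{P,q}$ gives an \emph{upper} bound on $\dim(C+C^{\perp_e})$. Hence it gives only the \emph{lower} bound $\dim(C\cap C^{\perp_e})\ge\lambda\,\card(B\cap B^{\perp_e})$. The trivial inclusion $\mathcal{S}_{\mathcal{A}+(B\cap B^{\perp_e})}^{P,q}\subseteq C\cap C^{\perp_e}$ already gives exactly this. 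A lower bound on $\dim(C\cap C^{\perp_e})$ that equals the dimension of a known subspace of $C\cap C^{\perp_e}$ cannot force that subspace to be all of $C\cap C^{\perp_e}$. What you need is the opposite inequality $\dim(C\cap C^{\perp_e})\le\lambda\,\card(B\cap B^{\perp_e})$. Equivalently, you need the reverse inclusion $\mathcal{S}_{\mathcal{A}+(B\cup B^{\perp_e})}^{P,q}\subseteq C+C^{\perp_e}$, or directly $C\cap C^{\perp_e}\subseteq\mathcal{S}_{\mathcal{A}+(B\cap B^{\perp_e})}^{P,q}$. As written, you have only proved $c\le\lambda\,\card(B\cap -B)$. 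The missing direction is the delicate one. The $N>\lambda n$ vectors $\ev_P(X^e)$, $e\in\Z_N$, are linearly dependent, so homothetic codes with arbitrary defining sets need not intersect according to their defining sets.

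Either of the following closes the gap. (i) The relevant defining sets are $q$-complete modulo $N$, and every point of $P$ is an $N$-th root of unity. So each $\mathcal{H}^P_\Delta$ is stable under the componentwise $q$-th power, which gives $\mathcal{S}^{P,q}_\Delta=\tr_{q^s/q}(\mathcal{H}^P_\Delta)$. For the nontrivial inclusion, note that every $x\in\mathcal{S}^{P,q}_\Delta$ equals $\tr_{q^s/q}(\gamma x)$ for any $\gamma$ with $\tr_{q^s/q}(\gamma)=1$. Hence $\mathcal{S}^{P,q}_{\Delta_1\cup\Delta_2}=\tr_{q^s/q}(\mathcal{H}^P_{\Delta_1}+\mathcal{H}^P_{\Delta_2})=\mathcal{S}^{P,q}_{\Delta_1}+\mathcal{S}^{P,q}_{\Delta_2}$. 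This is precisely the missing inclusion, and with it your dimension count does give equality. (ii) Alternatively, the dimension equalities in the proof of \Cref{lem:hdim} show that $\mathcal{S}^{P,q}_{\mathcal{A}+B}$ is the full product $\prod_{j=0}^{\lambda-1}\pi_{I_j}(\BCH_q^N(\mathcal{A}+B))$. Each factor equals $\BCH_q^n(B)$, since evaluating $X^b$ at $\beta^j\zeta_n^k$ only rescales by $\beta^{jb}\neq 0$. So duals and hulls can be computed block by block. At length $n$ the hull is $\BCH_q^n(B\cap B^{\perp_e})$, because evaluation is injective on $\F_{q^s}[X]/(X^n-1)$. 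The Hermitian case is identical with $q^2$ in place of $q$.
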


The purity of the quantum codes provided in the following theorems (arising from $\mathcal{S}_{\mathcal{A}+B}^{P,q}$) is a consequence of the following proposition, which can be proved using arguments similar to those in the proof of Proposition \ref{pro:purity}.

\begin{pro}\label{pro:purityhbch}
Keep the notation as above. Let $B \subseteq \mathbb{Z}_n$ be a nonempty set of consecutive integers that is $q$-complete (respectively, $q^2$-complete in the case $s=2\varsigma$) modulo $n$. Then, we have either
$$ \dis\left(\left( \mathcal{S}_{\mathcal{A}+B}^{P,q} \right)^{\perp_e}\right) = \dis\left( \left( \mathcal{S}_{\mathcal{A}+B}^{P,q} \right)^{\perp_e} \Bigg\backslash \left( \left( \mathcal{S}_{\mathcal{A}+B}^{P,q} \right)^{\perp_e} \cap \mathcal{S}_{\mathcal{A}+B}^{P,q} \right)\right) $$
or
$$ \left( \mathcal{S}_{\mathcal{A}+B}^{P,q} \right)^{\perp_e} = \left( \mathcal{S}_{\mathcal{A}+B}^{P,q} \right)^{\perp_e} \cap \mathcal{S}_{\mathcal{A}+B}^{P,q} $$
(respectively,
$$ \dis\left(\left( \mathcal{S}_{\mathcal{A}+B}^{P,q^2} \right)^{\perp_h}\right) = \dis\left( \left( \mathcal{S}_{\mathcal{A}+B}^{P,q^2} \right)^{\perp_h} \Bigg\backslash \left( \left( \mathcal{S}_{\mathcal{A}+B}^{P,q^2} \right)^{\perp_h} \cap \mathcal{S}_{\mathcal{A}+B}^{P,q^2} \right)\right) $$
or
$$ \left( \mathcal{S}_{\mathcal{A}+B}^{P,q^2} \right)^{\perp_h} = \left( \mathcal{S}_{\mathcal{A}+B}^{P,q^2} \right)^{\perp_h} \cap \mathcal{S}_{\mathcal{A}+B}^{P,q^2} $$
in the case $s=2\varsigma$).
\end{pro}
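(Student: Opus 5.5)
We follow the proof of \Cref{pro:purity}, replacing the full length-$N$ evaluation set $U(N)$ by the homothetic set $P$. We treat the Euclidean case; the Hermitian one is identical with $B^{\perp_h}=\{0,\dots,n-1\}\setminus(-qB)$ in place of $B^{\perp_e}=\{0,\dots,n-1\}\setminus(-B)$.

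By \Cref{prop10}, $(\mathcal{S}_{\mathcal{A}+B}^{P,q})^{\perp_e}=\mathcal{S}_{\mathcal{A}+B^{\perp_e}}^{P,q}$. Since every defining set here has the form $\mathcal{A}+D$ with $D\subseteq\mathbb{Z}_n$, we also expect
$$(\mathcal{S}_{\mathcal{A}+B}^{P,q})^{\perp_e}\cap\mathcal{S}_{\mathcal{A}+B}^{P,q}=\mathcal{S}_{\mathcal{A}+(B^{\perp_e}\cap B)}^{P,q}.$$
This should follow from the block decomposition used in \Cref{lem:hdim}. On each block $I_j$ the code restricts to a length-$n$ BCH code, namely a homothetic copy of $\BCH_q^n(D)$. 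The code $\mathcal{S}_{\mathcal{A}+D}^{P,q}$ is then the set of tuples $(\mathbf{x}_0,\dots,\mathbf{x}_{\lambda-1})$ with each block in that copy, because its dimension $\lambda\,\card(D)$ matches the product. So intersections are computed blockwise. Using the MDS code $\BCH_q^n(B)$ of \Cref{pro:purity} and \Cref{lem:purity}, there are two cases.

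\emph{Case 1.} If $\BCH_q^n(B)^{\perp_e}=\BCH_q^n(B)^{\perp_e}\cap\BCH_q^n(B)$, then $B^{\perp_e}\subseteq B$ up to cyclotomic closure. Hence $\mathcal{A}+B^{\perp_e}=\mathcal{A}+(B^{\perp_e}\cap B)$, and by the blockwise description the second alternative holds.

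\emph{Case 2.} Otherwise there is a minimum-weight word $\mathbf{b}=\ev_{U(n)}(f)$ of weight $\card(B)+1$ in $\BCH_q^n(B^{\perp_e})\setminus\BCH_q^n(B^{\perp_e}\cap B)$, with $f$ supported on $B^{\perp_e}$. We must lift it to a word of $\mathcal{S}_{\mathcal{A}+B^{\perp_e}}^{P,q}$ of the same weight that does not lie in the intersection. Take the word that equals $\mathbf{b}$ on block $I_0$ (the points $U(n)$) and is $0$ on all other blocks. It lies in the code by the blockwise description, or directly as $\ev_P(fg)$ with $g=(X^N-1)/(X^n-1)$: $g$ equals $N/n\in\F_p$ on $U(n)$ and vanishes on the other points $\beta^j\zeta_n^k$, $j\ge1$, of $P\subseteq U(N)$. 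Its weight is $\card(B)+1$. Its $I_0$-block is not in $\BCH_q^n(B^{\perp_e}\cap B)$, so the word is not in the intersection.

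Finally, $\card(B)+1$ is already the minimum distance of $(\mathcal{S}_{\mathcal{A}+B}^{P,q})^{\perp_e}$. That code has distance at least $\card(B)+1$ by the BCH-type bound for H-BCH codes, since $B$ consists of consecutive integers, or by \Cref{cor:homoclasrd}. Hence the distance of the difference set equals the distance of the code, which gives the first alternative.

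The main obstacle is the blockwise identification of the intersection with $\mathcal{S}_{\mathcal{A}+(B^{\perp_e}\cap B)}^{P,q}$, which requires the subfield-subcode and block-product description to hold exactly. We would derive it from \cite[Proposition 1.5.8]{HMCthesis} and the dimension count of \Cref{lem:hdim}. If that proves delicate, Case 2 only needs the explicit word $\ev_P(fg)$ together with the fact that it is not in $\mathcal{S}_{\mathcal{A}+B}^{P,q}$. For the latter, restrict to $I_0$ and use the restriction of $\mathcal{S}_{\mathcal{A}+B}^{P,q}$ to $I_0$, which is $\BCH_q^n(B)$.
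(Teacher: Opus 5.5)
Your proposal is correct and follows the route the paper indicates: rerun the proof of \Cref{pro:purity} with $P$ in place of $U(N)$, and lift a minimum-weight word of $\BCH_q^n(B)^{\perp_e}$ via $\ev_P(fg)$ with $g=(X^N-1)/(X^n-1)$. Your restriction-to-$I_0$ argument, using $\pi_{I_0}(\mathcal{S}_{\mathcal{A}+B}^{P,q})\subseteq\BCH_q^n(B)$, shows that this word lies outside the intersection, a step the paper leaves implicit and which a support argument on $fg$ alone would not give, since $\ev_P$ (unlike $\ev_{U(N)}$) is not injective on $\spn\{X^e : e\in\Z_N\}$; you should also note that $\ev_P(fg)$ equals $\frac{N}{n}\mathbf{b}$ on $I_0$, and this is nonzero because $p\nmid N$.
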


We conclude this section by stating Theorems \ref{thm:homoqminusone}, \ref{thm:homoqplusone}, and \ref{thm:homoqplusonehermitian}, which are the analogues of Theorems \ref{thm:qminusone}, \ref{thm:qplusone}, and \ref{thm:qplusonehermitian} for the homothetic-BCH case. The proofs follow by using \Cref{cor:homoclasrd}, Propositions \ref{prop10}, \ref{prop:homohulldim}, and \ref{pro:purityhbch}, and arguments similar to those used in the proofs for the BCH case.

\begin{thm}\label{thm:homoqminusone}
Assume that $n \ge 2$ divides $q-1$. Then, the code $(\mathcal{S}_{\mathcal{A}+B_{u,v}}^{P,q})^{\perp_e}$ is a classical
$$ \left[ N-\ell n, \left(\frac{N}{n}-\ell\right)(n-(v-u+1)), v-u+2 \right]_q $$
optimal LRC with locality $(r,\delta) = (n-v+u-1, v-u+2)$.

Therefore, the EAQECC $Q'(\mathcal{S}_{\mathcal{A}+B_{u,v}}^{P,q})$ is an optimal pure EA $(r,\delta)$-QLRC with parameters
$$ \left[\left[ N-\ell n, N-\ell n - 2\left(\frac{N}{n}-\ell\right)(v-u+1) + c, v-u+2; c \right]\right]_q, $$
where $c = \left(\frac{N}{n}-\ell\right)\max\left\{0, 1 + \min\{n-2u, 2v - n\}\right\}$, and locality $(r,\delta) = (n-v+u-1, v-u+2)$.
\end{thm}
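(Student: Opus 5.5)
The plan is to mirror the proof of \Cref{thm:qminusone}, replacing each BCH ingredient by its homothetic counterpart. First, since $n \mid q-1$ we have $q \equiv 1 \pmod n$, so $B_{u,v}$ is $q$-complete modulo $n$ and $\card(B_{u,v}) = v-u+1$. By \Cref{cor:homoclasrd} with $B=B_{u,v}$, the code $(\mathcal{S}_{\mathcal{A}+B_{u,v}}^{P,q})^{\perp_e}$ is a classical
\[
\left[N-\ell n,\ \left(\tfrac{N}{n}-\ell\right)(n-v+u-1),\ v-u+2\right]_q
\]
optimal LRC with locality $(n-v+u-1,\,v-u+2)$. Optimality can be checked directly. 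Put $\lambda = \frac{N}{n}-\ell$ and $t = v-u+1$. The dimension is $\lambda(n-t)$ and $r = n-t$, so $\lceil \dim/r\rceil = \lambda$. The left-hand side of \eqref{Singleform} is then
\[
\lambda(n-t) + (t+1) + (\lambda-1)t = \lambda n + 1 = N-\ell n+1,
\]
which is exactly the bound.

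Next I compute the quantum parameters. Set $C = \mathcal{S}_{\mathcal{A}+B_{u,v}}^{P,q}$, so the quantum code is $Q'(C) = Q(C^{\perp_e})$ in the notation before \Cref{coro1}. By \Cref{lem:hdim}, $\dim C = \lambda t$, and hence $\dim C^{\perp_e} = N-\ell n-\lambda t$. The formula preceding \Cref{coro1} gives
\[
k = 2\dim C^{\perp_e} - (N-\ell n) + c = N-\ell n - 2\lambda t + c.
\]
By \Cref{prop:homohulldim}, $c = \lambda\,\card(B_{u,v}\cap -B_{u,v})$. Using $-B_{u,v} = \{n-v,\dots,n-u\}$, the interval intersection computed in the proof of \Cref{thm:qminusone} gives
\[
\card(B_{u,v}\cap -B_{u,v}) = \max\{0,1+\min\{n-2u,\,2v-n\}\}.
\]
This reproduces the stated value of $c$.

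It remains to establish purity and optimality. For purity, \Cref{pro:purityhbch} applies because $B_{u,v}$ is a nonempty set of consecutive integers that is $q$-complete. It yields one of two alternatives:
\begin{itemize}
\item[(a)] the minimum distance of $C^{\perp_e}\setminus(C^{\perp_e}\cap C)$ equals $d(C^{\perp_e}) = v-u+2$, so the quantum distance is $d_H(C^{\perp_e})$ and the code is pure;
\item[(b)] $C^{\perp_e}\subseteq C$.
\end{itemize}
In case (b) the number of logical qudits would be $0$, which is excluded by the hypothesis $k>0$ implicit in \Cref{coro3}. Once purity holds, \Cref{coro3} applies under assumption (1), since $C^{\perp_e}$ is a classical $(r,\delta)$-LRC. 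The quantum Singleton-like bound \eqref{eq:EASing} is, after the substitution $\dim C^{\perp_e} = \frac{n'+k-c}{2}$ with $n'=N-\ell n$, exactly the classical bound \eqref{Singleform} applied to $C^{\perp_e}$. Equality in the classical bound therefore gives equality in \eqref{eq:EASing}, so the code is an optimal pure EA $(r,\delta)$-QLRC. The $(I,K)$-local recoverability itself comes from \Cref{coro1}.

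The main obstacle is \Cref{pro:purityhbch}, which is stated without proof. If it has to be verified, I would adapt the proof of \Cref{pro:purity}. The first step takes a minimum-weight word $\ev_{U(n)}(f)$ of the MDS code $\BCH_q^n(B^{\perp_e})\setminus \BCH_q^n(B^{\perp_e}\cap B)$. The second step multiplies $f$ by $g=(X^N-1)/(X^n-1)$ and evaluates on $P$ instead of $U(N)$. Since $P \supseteq U(n)$ (the block $j=0$) and $g$ vanishes on every other block, the weight is preserved, the values stay in $\F_q$, and membership in the homothetic dual follows from \Cref{prop10}. The care needed is in checking that the resulting word avoids the intersection code.
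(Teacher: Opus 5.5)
Your proposal is correct and follows the same route the paper indicates: \Cref{cor:homoclasrd} for the classical optimal LRC, \Cref{prop:homohulldim} together with the interval count from the proof of \Cref{thm:qminusone} for $c$, \Cref{pro:purityhbch} for purity, and \Cref{coro3} under assumption (1) for optimality. The one point to tighten is how you rule out alternative (b): you appeal to the hypothesis $k>0$ of \Cref{coro3}, but the theorem does not grant it, so check it directly instead --- since $0\notin B_{u,v}\cup(-B_{u,v})$, you get $\card(B_{u,v}\cap -B_{u,v})\ge 2(v-u+1)-n+1$, hence $k\ge \frac{N}{n}-\ell\ge 1$ and (b) cannot occur.
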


\begin{thm}\label{thm:homoqplusone}
Assume that $n \ge 2$ divides $q+1$. Then, the code $(\mathcal{S}_{\mathcal{A}+B_u}^{P,q})^{\perp_e}$ is a classical
$$ \left[ N-\ell n, \left(\frac{N}{n}-\ell\right)(n-(2u -((n+1) \bmod 2))), 2u - ((n+1) \bmod 2) + 1 \right]_q $$
optimal LRC with locality $$(r,\delta) = (n-2u + ((n+1) \bmod 2), 2u - ((n+1) \bmod 2) + 1).$$

Therefore, the EAQECC $Q'(\mathcal{S}_{\mathcal{A}+B_u}^{P,q})$ is an optimal pure EA $(r,\delta)$-QLRC with parameters
\begin{multline*}
  \Bigl[\Bigl[ N-\ell n, N-\ell n - \left(\frac{N}{n}-\ell\right)(2u - ((n+1) \bmod 2)), \\
  2u - ((n+1) \bmod 2) + 1; c \Bigr]\Bigr]_q,
\end{multline*}
where $c = \left(\frac{N}{n}-\ell\right)(2u - ((n+1) \bmod 2))$, and locality $$(r,\delta) = (n-2u + ((n+1) \bmod 2), 2u - ((n+1) \bmod 2) + 1).$$
\end{thm}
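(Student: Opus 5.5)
The plan mirrors the proof of \Cref{thm:qplusone}, with the homothetic-BCH results in place of the BCH ones. Throughout, $B=B_u$ and $\lambda=\frac{N}{n}-\ell$.

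\textbf{Step 1 (the defining set).} Since $n\mid q+1$, we have $q\equiv -1 \pmod n$. Hence $qB_u=-B_u$. The set $B_u$ is symmetric about $n/2$, so $-B_u=B_u$. Therefore $B_u$ is $q$-complete modulo $n$, and it is a nonempty set of consecutive integers. Its cardinality is $\card(B_u)=2u-((n+1)\bmod 2)$: for $n$ even the interval $(n/2-u,n/2+u)$ contains $2u-1$ integers, and for $n$ odd it contains $2u$.

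\textbf{Step 2 (classical LRC).} With $B=B_u$, \Cref{cor:homoclasrd} gives that $(\mathcal{S}_{\mathcal{A}+B_u}^{P,q})^{\perp_e}$ is an optimal classical $(r,\delta)$-LRC. Its length is $N-\ell n$, its dimension is $\lambda(n-\card(B_u))$, its minimum distance is $\card(B_u)+1$, and its locality is $(n-\card(B_u),\card(B_u)+1)$. Substituting $\card(B_u)$ gives exactly the classical parameters in the statement. As a sanity check on optimality, the Singleton-like sum equals
\[
\lambda(n-\card B_u)+\card B_u+1+(\lambda-1)\card B_u=\lambda n+1=N-\ell n+1.
\]

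\textbf{Step 3 (entanglement and quantum parameters).} By \Cref{prop:homohulldim}(1) and $B_u\cap -B_u=B_u$, the entanglement is $c=\lambda\card(B_u)$. Apply the construction before \Cref{coro1} with $C=(\mathcal{S}_{\mathcal{A}+B_u}^{P,q})^{\perp_e}$; by \Cref{prop10}, $C^{\perp_e}$ is the homothetic-BCH code itself. Then $k=2\dim C-(N-\ell n)+c$. Since $2\dim C=2\lambda n-2\lambda\card B_u$ and $N-\ell n=\lambda n$, this gives
\[
k=\lambda n-2\lambda\card B_u+\lambda\card B_u=N-\ell n-\lambda\card(B_u),
\]
which matches the statement.

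\textbf{Step 4 (purity and optimality).} This is the only non-formal point. I would apply \Cref{pro:purityhbch} to $B_u$. In its first alternative, the minimum distance of $C\setminus(C\cap C^{\perp_e})$ equals $d(C)$, so the code is pure. In the second alternative, $C\subseteq C^{\perp_e}$; but then $c=0$, which contradicts $c=\lambda\card B_u>0$ (here $\lambda\ge 1$ and $B_u\neq\emptyset$). Hence the quantum distance is $\card(B_u)+1$. Finally, Item (1) of \Cref{coro3} applies, and because the classical code attains \eqref{Singleform}, the bound \eqref{eq:EASing} is attained. So $Q'(\mathcal{S}_{\mathcal{A}+B_u}^{P,q})$ is an optimal pure EA $(r,\delta)$-QLRC with the stated locality.

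The step I expect to be the main obstacle is justifying \Cref{pro:purityhbch}, which is only sketched in the paper. If it needs checking, I would adapt the proof of \Cref{pro:purity}: multiply the length-$n$ MDS minimum-weight word $f$ by $g(X)=(X^N-1)/(X^n-1)$. The worry is that on the points $\beta^j\zeta_n^k$ with $j\ne0$, $g$ vanishes, so after puncturing to $P$ the weight is still $\card(B_u)+1$. This holds because the nonzero entries sit on the block $U(n)$, which lies inside $P$.
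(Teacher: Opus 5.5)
Your route is the paper's: show $-B_u=B_u$ and $\card(B_u)=2u-((n+1)\bmod 2)$ as in the proof of \Cref{thm:qplusone}, then invoke \Cref{cor:homoclasrd}, \Cref{prop:homohulldim}, \Cref{pro:purityhbch} and \Cref{coro3}. Steps 1--3 are correct.

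The flaw is in Step 4. With $C=(\mathcal{S}_{\mathcal{A}+B_u}^{P,q})^{\perp_e}$, the second alternative of \Cref{pro:purityhbch} says $C\subseteq C^{\perp_e}$. That does \emph{not} give $c=0$. Since $c=\dim C^{\perp_e}-\dim(C\cap C^{\perp_e})$, self-orthogonality of $C$ gives $c=\dim C^{\perp_e}-\dim C=\lambda(2\card(B_u)-n)$. This is zero only when $C$ is self-dual. It is the dual-containing inclusion $C^{\perp_e}\subseteq C$ that forces $c=0$; you have the direction reversed. So the contradiction with $c=\lambda\card(B_u)>0$ is not established by the reason you give.

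The repair is short, and it makes \Cref{pro:purityhbch} unnecessary here. By \Cref{lem:hdim} and \Cref{prop:homohulldim}(1),
$c=\lambda\card(B_u\cap -B_u)=\lambda\card(B_u)=\dim \mathcal{S}_{\mathcal{A}+B_u}^{P,q}$.
Hence the hull $\mathcal{S}_{\mathcal{A}+B_u}^{P,q}\cap(\mathcal{S}_{\mathcal{A}+B_u}^{P,q})^{\perp_e}$ is $\{\mathbf{0}\}$, that is, $C\cap C^{\perp_e}=\{\mathbf{0}\}$. The quantum distance is then $d_H(C\setminus\{\mathbf{0}\})=d_H(C)=\card(B_u)+1$, so purity is immediate. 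Equivalently, within your framework, the second alternative would force $C=C\cap C^{\perp_e}=\{\mathbf{0}\}$, contradicting $\dim C=\lambda(n-\card B_u)>0$.

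With this fix, your final paragraph on re-deriving \Cref{pro:purityhbch} is not needed. The rest of Step 4 goes through: Item (1) of \Cref{coro3}, together with equality in \eqref{Singleform}, gives optimality.
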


\begin{thm}\label{thm:homoqplusonehermitian}
Keep the notation as above. Assume that $s=2\varsigma$ and $2 < n = m^2+1$ for some positive integer $m$, where $n$ divides $q^2+1$ and $q \equiv \pm m \pmod n$. Suppose also that if $n$ is even, then $q$ is odd. Then, the code $(\mathcal{S}_{\mathcal{A}+B_u}^{P,q^2})^{\perp_h}$ is a classical
$$ \left[ N-\ell n, \left(\frac{N}{n}-\ell\right)(n-(2u - ((n+1) \bmod 2))), 2u - ((n+1) \bmod 2) + 1 \right]_{q^2} $$
optimal LRC with locality $$(r,\delta) = (n-2u + ((n+1) \bmod 2), 2u - ((n+1) \bmod 2) + 1).$$

Therefore, the EAQECC $Q'(\mathcal{S}_{\mathcal{A}+B_u}^{P,q^2})$ is an optimal pure EA $(r,\delta)$-QLRC with parameters
\begin{multline*}
  \Bigl[\Bigl[ N-\ell n, N-\ell n - 2\left(\frac{N}{n}-\ell\right)(2u - ((n+1) \bmod 2)) + c, \\
  2u - ((n+1) \bmod 2) + 1; c \Bigr]\Bigr]_q,
\end{multline*}
and locality $(r,\delta) = (n-2u + ((n+1) \bmod 2), 2u - ((n+1) \bmod 2) + 1)$. The amount of entanglement $c = \left(\frac{N}{n}-\ell\right)\card(B_u \cap -qB_u)$ can be computed, according to the parity of $q$ and $n$, with the formulae given in \Cref{prop:hdimo} or \Cref{prop:hdime}.
\end{thm}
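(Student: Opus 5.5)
The argument mirrors the proof of \Cref{thm:qplusonehermitian}, with the BCH ingredients replaced by their homothetic-BCH counterparts. Write $B=B_u$ and $\varepsilon=(n+1)\bmod 2$, so that $\card(B_u)=2u-\varepsilon$. The proof has three steps, taken in this order: the classical statement, then the quantum parameters, then optimality and purity.

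\emph{Classical statement.} The hypotheses $n\mid q^2+1$ and $q\equiv\pm m\pmod n$ give $q^2\equiv -1\pmod n$. Hence $q^2 B_u\equiv -B_u=B_u$, because $B_u$ is symmetric about $n/2$. So $B_u$ is $q^2$-complete modulo $n$, and it is a nonempty set of consecutive integers. By the Hermitian part of \Cref{cor:homoclasrd}, $(\mathcal{S}_{\mathcal{A}+B_u}^{P,q^2})^{\perp_h}$ is a $q^2$-ary optimal LRC with the stated length, dimension $(\frac{N}{n}-\ell)(n-\card(B_u))$, distance $\card(B_u)+1$ and locality $(n-\card(B_u),\card(B_u)+1)$. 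For completeness, the Singleton equality can be checked directly, exactly as in \Cref{thm:qminusone}. The key point is that $\lceil k/r\rceil=\frac{N}{n}-\ell$, since $k=(\frac{N}{n}-\ell)r$. Then $k+d+(\frac{N}{n}-\ell-1)(\delta-1)=N-\ell n+1$.

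\emph{Quantum parameters.} Set $C=\mathcal{S}_{\mathcal{A}+B_u}^{P,q^2}$; its Hermitian dual is the LRC above. Using the formula stated before \Cref{coro1} with roles swapped, $Q'(C)$ has parameters
\[
[[N-\ell n,\; 2\dim C^{\perp_h}-(N-\ell n)+c,\; d_H(C^{\perp_h}\setminus(C\cap C^{\perp_h}));\; c]]_q .
\]
\Cref{lem:hdim} gives $\dim C=(\frac{N}{n}-\ell)\card(B_u)$, so $\dim C^{\perp_h}=N-\ell n-\dim C$. The dimension entry therefore becomes $N-\ell n-2(\frac{N}{n}-\ell)\card(B_u)+c$. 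By \Cref{prop:homohulldim}(2), $c=(\frac{N}{n}-\ell)\card(B_u\cap -qB_u)$. This cardinality is supplied by \Cref{prop:hdimo} when $n$ is even (where $q$ odd is assumed) and by \Cref{prop:hdime} when $n$ is odd. Note that \Cref{prop:hulldim} uses $-qB$ for the Hermitian amount of entanglement, and that is the set appearing in these propositions.

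\emph{Purity and optimality.} Purity comes from \Cref{pro:purityhbch} in its Hermitian form. Either the quantum distance equals $d_H(C^{\perp_h})=\card(B_u)+1$, or $C^{\perp_h}\subseteq C$. The second alternative is the one I expect to be the only delicate point. In that case $C\cap C^{\perp_h}=C^{\perp_h}$, so $\dim C^{\perp_h}=\dim(C\cap C^{\perp_h})$ and the entanglement is $c=\dim C-\dim C^{\perp_h}$. Substituting, the dimension entry collapses to $2\dim C^{\perp_h}-(N-\ell n)+\dim C-\dim C^{\perp_h}=0$. This contradicts the requirement $k>0$ in \Cref{coro3}, so the first alternative holds whenever $k>0$, which is implicitly assumed for a genuine code.

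Finally, apply \Cref{coro3} under assumption (1), with the classical LRC $C^{\perp_h}$ playing the role of $C$ there. Because that code meets the classical Singleton-like bound with equality, inequality \eqref{eq:EASing} also holds with equality. Together with purity, this makes $Q'(C)$ an optimal pure EA $(r,\delta)$-QLRC with the stated locality, completing the proof.
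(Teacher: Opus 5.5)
Your proposal is correct and follows the paper's route: the paper proves this theorem only by pointing to \Cref{cor:homoclasrd}, \Cref{prop10}, \Cref{prop:homohulldim}, \Cref{pro:purityhbch} and \Cref{coro3}, and your explicit handling of the second alternative in \Cref{pro:purityhbch} is more careful than the paper, which never addresses it. The one loose end is that you assume $k>0$ instead of checking it, but it always holds: $0\notin B_u\cup(-qB_u)$, so the dimension equals $(\frac{N}{n}-\ell)\bigl(n-\card(B_u\cup -qB_u)\bigr)\geq \frac{N}{n}-\ell>0$, and hence the non-pure alternative never occurs.
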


\begin{rem}
Note that when $N-\ell n$ and $q$ are relatively prime, one can instead use the BCH constructions in Subsection \ref{subsec:BCH} to obtain the same parameters as the codes provided in this section.
\end{rem}

\section*{Data availability}
No databases were generated or analyzed during this study.

\section*{Conflict of interest}
The authors declare they have no conflict of interest.

\section*{Use of AI}
Some ideas behind the proofs of Theorem \ref{thm:1} and Propositions \ref{prop:hdimo} and \ref{prop:hdime} were obtained through interaction with AI (Google Gemini 3.1). No AI-generated texts remain in the paper.

%\clearpage
%\bibliographystyle{plain}
%\bibliography{biblio-ryutaroh1}

\hfill\break

\end{document}